\documentclass[11pt,letterpaper]{article}

\usepackage{amsmath,amssymb,amsthm,mathabx,mathrsfs}
\usepackage{graphicx,xcolor}
\usepackage{fullpage}
\usepackage[sort,round]{natbib}
\usepackage{multirow}
\usepackage{booktabs}
\usepackage{float}

\DeclareMathOperator{\Cov}{Cov}

\DeclareMathOperator{\rk}{rank}
\DeclareMathOperator{\tr}{tr}

\newcommand{\field}[1]{\mathbb{#1}}
\newcommand{\R}{\field{R}}

\newcommand{\E}{\field{E}}

\newcommand{\h}[1]{\boldsymbol{#1}}
\newcommand{\diag}{\mbox{diag}}

\newcommand{\hZ}{{\h{Z}}}
\newcommand{\hzero}{{\h{0}}}

\newcommand{\vect}{\mathrm{vec}}
\newcommand{\hx}{\h{x}}
\newcommand{\be}{\h{e}}
\newcommand{\hA}{\h{A}}

\newcommand{\hC}{\h{C}}
\newcommand{\hE}{\h{E}}
\newcommand{\hX}{\h{X}}

\newcommand{\hW}{\h{W}}
\newcommand{\hG}{\h{G}}
\newcommand{\hI}{\h{I}}
\newcommand{\hH}{\h{H}}
\newcommand{\hJ}{\h{J}}
\newcommand{\hL}{\h{L}}
\newcommand{\hM}{\h{M}}
\newcommand{\hV}{\h{V}}
\newcommand{\hR}{{\h{R}}}
\newcommand{\halpha}{\h{\alpha}}

\newcommand{\hgamma}{\h{\gamma}}
\newcommand{\hQ}{\h{Q}}
\newcommand{\hP}{\h{P}}
\newcommand{\hF}{\h{F}}
\newcommand{\hS}{\h{S}}
\newcommand{\hU}{\h{U}}
\newcommand{\hz}{\h{z}}
\newcommand{\PPP}{\mathcal{P}}

\theoremstyle{plain}
\newtheorem{theorem}{Theorem}
\newtheorem{corollary}[theorem]{Corollary}

\newtheorem{lemma}[theorem]{Lemma}

\theoremstyle{definition}

\theoremstyle{definition}
\newtheorem{remark}{Remark}

\title{Reduced-Rank Autoregressive Models for Matrix Time Series}

\author{
    Han Xiao$^a$\footnote{Han Xiao is Professor,
    Department of Statistics, Rutgers
    University, Piscataway, NJ. E-mail:
    hxiao@stat.rutgers.edu. Yuefeng Han is Assistant Professor, Department of Applied and Computational Mathematics and Statistics, University of Notre Dame, IN. Email: yuefeng.han@nd.edu.
    Rong Chen is Professor, Department of
    Statistics, Rutgers University, Piscataway, NJ. E-mail:
    rongchen@stat.rutgers.edu. Ama Ampadu-Kissi is PhD student, Department of Statistics, Rutgers
    University, Piscataway, NJ. E-mail: aka117@scarletmail.rutgers.edu. 
    Han Xiao is the
    corresponding author.}, Yuefeng Han$^b$, Rong Chen$^a$ and Ama Ampadu-Kissi$^a$ \\
    $^a$Rutgers University, $^b$University of Notre Dame
}
\date{}

\begin{document}

\maketitle

\noindent {\bf Abstract.} Matrix time series is a series of matrix data observed over time. Analytical tools for such time series is needed in many applications in finance, economics, engineering and many other fields. To avoid the use of vectorization of the matrices which loses the column and row information, and the vector autoregression framework in traditional time series analysis, \cite{chen2021autoregressive} proposed the Matrix Autoregressive (MAR) Model. The model maintains and utilizes the matrix structure, leading to a substantial dimensional reduction and admitting explicit interpretations, comparing with the vector autoregressive model on the vectorized data. However, the MAR model still encounters difficulties in dealing with large dimensional matrix time series as the coefficient matrices in MAR models are also large. In this paper we propose to achieve further dimension reduction through reduced-rank constraints of the coefficient matrices in the MAR model. Estimation and rank determination procedures are studied. Theoretical investigation and empirical examples show that the reduced-rank constraint can achieve higher statistical efficiency than the MAR model.

\bigskip

\noindent {\bf Keywords:} Forecasting; Matrix time series; Rank determination; Reduced-rank regression

\newpage

\section{Introduction}
\label{sec:intro}

\nocite{anderson:2013,horn2012matrix,chen2012sparse,chen2021autoregressive,chen2019factor}

Observations in matrix and tensor (multi-dimensional array) forms have been generated and collected more and more abundantly in many fields including biological/medical research, economics, engineering, finance, signal processing, social sciences etc. In response to the urgent need of analytical tools for analyzing such type of data in various applications, many optimization and statistical methods/procedures have been proposed and studied. Similar to the use of matrix decomposition for analysis of vector observations, tensor decomposition and estimation methods play a principal role in analyzing matrix/tensor data \citep{cichocki2015tensor,cichocki2009nonnegative,de2000best,de2000multilinear,sidiropoulos2017tensor,anandkumar2014tensor,de2008tensor}.

In many applications, the matrices are observed through time, and hence form a matrix-valued time series. Although it is possible and perhaps convenient to treat time as another mode and apply the tensor methods to such a three-way tensor, the time dimension is intrinsically different, and the temporal dependence requires careful modeling and analysis to aid practitioners on acquiring a diagnostic understanding of the dynamics and making reliable forecasts. It has been witnessed that multilinear models can reduce the dimension and improve the estimation stability for matrix/tensor data \citep{ding2018matrix,raskutti2019convex, zhao:2014,zhou:2013}.
For dependent data, there has been many recent works on factor models of matrix/tensor time series, see \cite{wang:2018,chen2019factor,han2020tensor,han2020number,gao2021two} among others. On the other hand, \cite{hoff2015multilinear} pioneered in suggesting the multilinear model for longitudinal tensor data. \cite{chen2021autoregressive} proposed the matrix autoregressive model (MAR), which retains the matrix form of the data and specifies the autoregressive relationship through a bilinear matrix product.
Besides the interpretations adherent to its matrix and bilinear form, the MAR also reduces the model complexity significantly, comparing to the approach of concatenating the matrix observation into a long vector and then using the traditional vector autoregressive (VAR) model \citep{hannan:1970,lutkepohl:2005,tsay:2014,tiao:1981}.

When the matrix observations are themselves of large dimensions, the MAR model still involves a large number of parameters. It is desirable and sometimes necessary to reduce the dimension even further. In this paper, we propose the reduced-rank matrix autoregressive model (RRMAR), which assumes the form of MAR, but requires in addition that the coefficient matrices have ranks smaller than their dimensions.

We note that another natural approach to reducing the model complexity is to impose sparsity on the MAR. This approach is closely related to recent works on sparse VAR models \citep{loh2011high,basu2015regularized,melnyk2016estimating,davis2016sparse,han:2015,kock:2015,lin2017regularized,nicholson2017varx}. In addition, \cite{basu2019low} and \cite{lin2020regularized} considered additional low rank constraints on the coefficient matrices, \cite{hall2018learning} introduced the generalized VAR model, and \cite{han2020sparse} focused on the nonlinear sparse VAR model. \cite{ghosh2019high} and \cite{ghosh2020strong} considered the high dimensional VAR from a Bayesian perspective. 

In contrast to the aforementioned works based on sparsity, the thrust of the present paper hinges upon the low rank structure of the coefficient matrices in MAR. As will be elaborated in Section~\ref{sec:rrmar}, the low rank matrices in RRMAR continue to admit natural interpretations, and lead to a greater dimension reduction compared to MAR. It also relates to and provides a generative mechanism for the dynamic matrix factor models of \cite{wang:2018}, and has a close connection to the hierarchical factor models.

The proposed model and estimation procedure are related to the reduced-rank regression \citep{anderson:1951,izenman:1975,reinsel:1998}. 
We consider two estimators: one based on least squares (RR.LS) and one based on maximum likelihood (RR.CC). It is worth pointing out that they correspond to two different algorithms for vector reduced-rank regression: one minimizing the trace of the sample covariance matrix, and the other the determinant, where the latter also corresponds to the canonical correlation analysis (see for example \cite{reinsel:1998} for more details). The likelihood-based RR.CC is indeed the maximum likelihood estimator if the covariance tensor of the error matrix has the form of a tensor product of two covariance matrices. Even when this assumption does not hold, the RR.CC nevertheless can be viewed as an estimator obtained together with a regularized estimation of the covariance tensor, and can still potentially lead to superior performances over the RR.LS.

Here we shall emphasize two significant differences between our model and the classical reduced-rank regression. First, the observations are in matrix form, and the model takes a bilinear form. Second, the algorithms requires running reduced-rank least squares/maximum likelihood iteratively. We develop central limit theorems for the estimators of the coefficient matrices, as well as their singular vectors. The bilinear form of the matrix model also makes the analysis substantially different from the vector case. 

The estimation procedures depend on the ranks of the two coefficient matrices in the RRMAR model. We propose to use information criterion based procedures to identify the ranks of these matrices. Since two ranks are to be determined, a thorough search over all possible pairs of ranks can be very costly, so we also introduce procedures to select the two ranks separately. Asymptotic consistency of these selection procedures are established.

The rest of this article is organized as follows. The RRMAR model is introduced in Section~\ref{sec:rrmar}, together with its basic properties, interpretations and connections with other models. In Section~\ref{sec:est} we propose two estimators, RR.LS and RR.CC. Asymptotic distributions are provided for both of them in Section~\ref{sec:asymp}, as well as corresponding estimators of the leading singular vectors of the coefficient matrices. The model/rank selection procedures based on information criterion are introduced in Section~\ref{sec:rank}, with their consistency properties. We use an extensive numerical study and an example in finance to demonstrate the performances of the proposed models and estimators in Section~\ref{sec:num}. All the proofs are collected in the Appendix. 

\subsection{Notations}
We gather the notations and the definitions of some special matrices in this section.

We use $\|\cdot\|_F$ to denote the Frobenius norm of a matrix, and $\rho(\cdot)$ the spectral radius. We use $\otimes$ to denote the Kronecker product, and $\circ$ the (point-wise) Hadamard product of two matrices. The notation $\h{1}_k$ stands for a $k$-dimensional vector with all entries equal to one. For any matrix $\h{M}$, we use $\h{M}[i,]$ and $\h{M}[,j]$ to denote its $i$-th row and $j$-th column respectively. The column space of $\hM$ is denoted by $\mathrm{col}(\hM)$. The matrix vectorization, denoted by $\vect(\cdot)$, turns a matrix into a vector by stacking its columns.

For any positive integer $p$, let $\be_{p,j}\in\mathbb{R}^p$ be the $j$-th base vector whose $j$-the entry is 1, and others zero. For any two positive integers $p$ and $q$, let $\hJ_{p,q}$ be the $(pq)\times(pq)$ permutation matrix defined as
\begin{equation}
    \label{eq:J}
    \hJ_{p,q} = \left[\hI_q\otimes\be_{p,1},\hI_q\otimes\be_{p,2},\ldots,\hI_q\otimes\be_{p,p}\right].
\end{equation}
The permutation $\hJ_{p,q}$ does the following: for any $p\times q$ matrix $\hM$, $   \hJ_{p,q}\vect(\hM')=\vect(\hM).$
In other words, $\hJ_{p,q}$ connects the vectorization of a matrix and its transpose. 

Let $\hL_p$ be the $p^2\times p$ matrix whose $j$-th column is given by $\be_{p,j}\otimes\be_{p,j}$, i.e.
\begin{equation}
    \label{eq:L}
    \hL_p=[\be_{p,1}\otimes\be_{p,1},\be_{p,2}\otimes\be_{p,2},\ldots,\be_{p,p}\otimes\be_{p,p}].
\end{equation}
For any $p\times p$ matrix $\hM=(m_{jk})$, the following operation extracts its diagonals:
\begin{equation*}
    \hL_p'\vect(\hM)=(m_{11},\ldots,m_{pp})',
\end{equation*}
and furthermore,
\begin{equation*}
    \vect^{-1}[\hL_p\hL_p'\vect(\hM)]=\diag(\hM),
\end{equation*}
where $\diag(\hM)$ is the $p\times p$ diagonal matrix keeping $\hM$'s diagonal elements.

\section{Reduced-Rank MAR Model}
\label{sec:rrmar}

The {\it reduced-rank matrix autoregressive model} (RRMAR) takes the
form
\begin{equation}
  \label{eq:marrr}
  \hX_t=\hA_1\hX_{t-1}\hA_2'+\hE_t,
\end{equation}
where $\h{A}_i$ are $d_i\times d_i$ autoregressive coefficient
matrices {of ranks $k_i \leq d_i$}, and $\h{E}_t\in\R^{d_1\times d_2}$ is a matrix
white noise. It {is} 
the same as the MAR model proposed
by \cite{chen2021autoregressive},
except for the additional low rank assumption that $\rk(\hA_i)=k_i \leq d_i$, for
$i=1,2$. It is worth observing that the number of parameters to
determine $\hA_i$ under the rank constraint is $d_i^2-(d_i-k_i)^2=(2d_i-k_i)k_i$ \citep[see for example][]{camba2003tests, reinsel:1998}, as
opposed to $d_i^2$ for the unconstrained $\hA_i$, and the former can
be much smaller if $k_i\ll d_i$. We also assume that $\|\hA_1\|_F=1$,
so that $\hA_1$ and $\hA_2$ are identified up to a sign change. To guarantee that the model \eqref{eq:marrr} is causal and stationary, we require that $\rho(\hA_1)\cdot\rho(\hA_2)<1$.

To better understand the implication of the low rank assumption, we
write $\hA_i=\hA_{il}\hA_{ic}'$, where $\hA_{il}$ and $\hA_{ic}$ are
both $d_i\times k_i$ full rank matrices. The model \eqref{eq:marrr} is
then written as
\begin{equation}
  \label{eq:marrr_fac}
  \hX_t=\hA_{1l}\,\boxed{\hA_{1c}'\hX_{t-1}\hA_{2c}}\,\hA_{2l}'+\hE_t.
\end{equation}
The boxed part $\hF_{t}:=\hA_{1c}'\hX_{t-1}\hA_{2c}$ is a
$k_1\times k_2$ matrix, which can be viewed as a composite and much
smaller version of the $d_1\times d_2$ matrix $\hX_{t-1}$. The conditional
expectation of $\hX_t$ given $\hX_{t-1}$ is then given by loading on
$\hF_{t}$ from left by $\hA_{1l}$, and from right by
$\hA_{2l}'$. The RRMAR model therefore provides a generating mechanism
for the matrix factor model $\hX_t=\hA_{l1}\hF_t\hA_{l2}' +\hE_t$, introduced in \cite{wang:2018}, in which the factor process $\hF_t$ is assumed to be latent and unobserved. In the RRMAR model in \eqref{eq:marrr_fac}, $\hF_t$ depends on $\hX_{t-1}$ hence is observed given the parameters. Due to this connection, we call $\hA_{ic}$ the composition matrix, and $\hA_{il}$ the loading matrix.

The RRMAR model is also related to the hierarchical factor models in the econometrics literature \citep{moench2013dynamic}. Specifically, let $\hF^*_t = \hA_{1c}'\hX_{t-1}\hA_{2}'$, then
\begin{equation*}
    \hX_t = \hA_{1l}\hF^*_t + \hE_t,
\end{equation*}
which means that the $j$-th column of $\hX_t$ follows a factor model with loading $\hA_{1l}$, and factors $\hF^*_t[,j]$, the $j$-th column of $\hF^*_t$. In the next layer, we have
\begin{equation*}
    \hF_t^{*'} = \hA_{2l} \hF_t',
\end{equation*}
which says that the $d_2\times k_1$ factor matrix $\hF_t^{*'}$ is further driven by a smaller factor matrix $\hF_t'$ {(defined by the boxed part in \eqref{eq:marrr_fac})}, the $j$-th column of $\hF_t^{*'}$ corresponding to the loading $\hA_{2l}$, and factors $\hF_t'[,j]$. Therefore, the model \eqref{eq:marrr} also gives a generating mechanism of a special instance of the hierarchical factor model.

\nocite{giannone2008nowcasting,diebold2008global}

\begin{remark}
{Again we emphasize that RRMAR is strictly an AR model, with the conditional mean of $\hX_t$ given the past information solely depending on $\hX_{t-1}$. As in all traditional AR models, we assume the noise process $\hE_t$ is white (in time), though the elements in $\hE_t$ are allowed to have (strong) correlations among them. It is possible to extend the model to have an ARMA form to allow non-white error processes, but such an ARMA model is extremely difficult to analyze and typically less useful in practice even for vector time series, due to various ambiguities. Although RRMAR can be written in a factor model form, strictly speaking, it is not a factor model, as it is {\it generative} in which $\hF_t$ is based on the past information $\hX_{t-1}$. On the other hand, a typical factor model is often {\it descriptive}, with the factor process $\hF_t$ being latent and its estimator is typically a linear combination of the current observation $\hX_t$.}
\end{remark}

There are many potential extensions of model \eqref{eq:marrr}. The first is the extension to RRMAR$(p)$ model:
\begin{equation}
    \label{eq:marrrp}
    \hX_t=\sum_{j=1}^p\hA_{j1}\hX_{t-j}\hA_{j2}'+\hE_t,
\end{equation}
where all $\hA_{ji}$ are of low ranks. The RRMAR$(p)$ model also extends the reduced-rank vector autoregressive models \citep{velu1986reduced, camba2003tests,al2019testing}. The second extension is more subtle. Although only the lag-1 observation $\hX_{t-1}$ is involved on the right hand side of \eqref{eq:marrr}, there can be multiple terms in the form 
\begin{equation}
  \label{eq:marrrk}
  \hX_t=\sum_{j=1}^r\hA_{1}^{(j)}\hX_{t-1}\left(\hA_{2}^{(j)}\right)'+\hE_t.
\end{equation}
To see this extension more clearly, we take vectorization on both sides of \eqref{eq:marrrk}:
\begin{equation*}
    \vect(\hX_t) = \left(\sum_{j=1}^r \hA_{2}^{(j)}\otimes\hA_{1}^{(j)}\right) \vect(\hX_{t-1}) + \hE_t.
\end{equation*}
It is seen from the preceding equation that the RRMAR model \eqref{eq:marrr} amounts to restricting the coefficient matrix of the VAR(1) model to the form of a Kronecker product, and the model \eqref{eq:marrrk} is more flexible by representing the coefficient matrix as a sum of $r$ Kronecker products. 

The extension to AR($p$) in \eqref{eq:marrrp} is quite straightforward, and the alternating estimation algorithms and rank determination procedures proposed below can be easily extended for this model, with heavier computational cost.
It can help to make the model more parsimonious by requiring (i) the matrices $\hA_{ji}$ to have the same ranks across $1\leq j\leq p$, or (ii) the rank of $\hA_{ji}$ decreases as the lag $j$ increases, or (iii) the matrices $\hA_{ji}$ have the same column/row spaces across $j$. 
The extension to \eqref{eq:marrrk} is more intricate,
as certain identifiability constraints are needed. We shall focus on model \eqref{eq:marrr} in the main part of the paper, and treat the extension \eqref{eq:marrrp} briefly in Appendix II, but leave the extension \eqref{eq:marrrk} for future studies. Finally, we add that the two extensions \eqref{eq:marrrp} and \eqref{eq:marrrk} can be combined to give a more comprehensive model.

\section{Estimation}
\label{sec:est}

Suppose a matrix time series $\{\hX_t\}$ of length $T$ is observed. To
estimate the coefficient matrices $\hA_i$, we propose to use the
alternating reduced-rank regression, updating one, while holding the
other fixed. Specifically, suppose $\hA_2$ is given, we discuss how
to estimate $\hA_1$. Recall that $\hA[,j]$ denotes the $j$-th column
of a matrix $\hA$. We also make the convention that $\hA'[,j]$ denotes the
$j$-th column of $\hA'$, i.e. the $j$-th row of $\hA$ as a column
vector. The $j$-th column of the model equation \eqref{eq:marrr} is
\begin{equation*}
  \hX_t[,j]=\hA_1\,\boxed{\hX_{t-1}\hA_2'[,j]} + \hE_t[,j].
\end{equation*}
Since $\hA_2$ is fixed, the preceding equation can be viewed as the
reduced-rank regression involving $(T-1)d_2$ sample units, where each
column $\hX_t[,j]$ is a response vector, the boxed vector is the
covariate, and $\hA_1$ is the coefficient matrix. In Section~\ref{sec:ils} we consider the estimation of $\hA_1$ by least squares. On the other hand, under normality, the
classical reduced-rank regression minimizes the determinant of the
sample covariance matrix of the error vectors, under the rank
constraint, which is related and in fact equivalent to canonical
correlation analysis \citep{anderson:2013,reinsel:1998}. In Section~\ref{sec:icc} we introduce a special covariance structure of $\hE_t$, under which we seek to estimate $\hA_1$ by the Gaussian MLE.

{In this section we focus on the estimation of the coefficient matrices given the ranks $k_1$ and $k_2$. The determination of the ranks will be discussed in Section 5.}

\subsection{Alternating least squares}
\label{sec:ils}

The least squares estimators are solutions of
\begin{equation}
    \label{eq:rrmin}
    \min_{\hA_1,\,\hA_2: \; \rk(\hA_1)=k_1,\,\rk(\hA_2)=k_2}\sum_{t=2}^T\left\|\hX_t-\hA_1{\hX_{t-1}\hA_2'}\right\|_F^2 
\end{equation}
We denote the least squares estimators by $\hat\hA_i^{\hbox{\tiny ls}}$, and will refer to them as the RR.LS estimators.
Suppose $\hA_2$ is given, the problem \eqref{eq:rrmin} becomes minimizing the trace of the sample covariance
matrix {of the residuals} under the rank constraint on $\hA_1$:
\begin{align*}
  & \min_{\hA_1: \; \rk(\hA_1)=k_1}\sum_{t=2}^T\left\|\hX_t-\hA_1{\hX_{t-1}\hA_2'}\right\|_F^2 \\ 
  \Longleftrightarrow\quad &
  \min_{\hA_1: \; \rk(\hA_1)=k_1}\tr\left[\sum_{t=2}^T\sum_{j=1}^{d_2}\left(\hX_t[,j]-\hA_1{\hX_{t-1}\hA_2'[,j]}\right)\left(\hX_t[,j]-\hA_1{\hX_{t-1}\hA_2'[,j]}\right)'\right].\nonumber
\end{align*}
Let $\hS_{xx}=\sum_t \hX_{t-1}\hA_2'\hA_2\hX_{t-1}'$,
$\hS_{yx}=\sum_t \hX_t\hA_2\hX_{t-1}'$, and
$\hU:=[U_1, U_2,\ldots, U_{k_1}]$, where
$U_j$ is the $j$-th leading normalized eigenvector of
$\hS_{yx}\hS_{xx}^{-1}\hS_{xy}$. Then $\hA_1$ can be updated as
\begin{equation*}
  \check \hA_1^{\hbox{\tiny ls}}=\hU\hU'\hS_{yx}\hS_{xx}^{-1},
\end{equation*}
see for example Equation (2.15) of \cite{reinsel:1998}. Given $\hA_1$, an update of $\hA_2$ can be similarly obtained. We therefore use the alternating least squares to find the minimizer of \eqref{eq:rrmin}.

\subsection{Alternating canonical correlation analysis}
\label{sec:icc}

The classical reduced-rank regression has also been situated under normality, leading to the Gaussian MLE of the coefficient matrix. To introduce the MLE for the RRMAR model, we need to assume that the covariance matrix $\Sigma_e$ of $\vect(\hE_t)$ takes the form of a product
\begin{equation}
\label{eq:ocov}
    \Sigma_e=\Sigma_2\otimes\Sigma_1,
\end{equation}
where $\Sigma_1$ and $\Sigma_2$ are $d_1\times d_1$ and $d_2\times d_2$ positive definite matrices respectively. This is equivalent to assuming $\hE_t = \Sigma_1^{1/2}\h{Z}_t\Sigma_2^{1/2}$, where $\h{Z}$ has iid standard normal entries. This assumption allows us to separate the row and column dependence within the error matrix, with $\Sigma_1$ and $\Sigma_2$ corresponding to the row-wise and column-wise correlations among the entries of $\hE_t$ respectively. This type of covariance model has been proposed and studied in the literature as the ``transposable" \citep{allen2010transposable}, ``array normal" \citep{hoff2011separable}, ``separable" \citep{tsiligkaridis2013covariance,zhou2014gemini}, and ``Kronecker product" \citep{hafner2020estimation,linton2019estimation} covariance structure. We refer the readers to \cite{hoff2011separable} and \cite{linton2019estimation} for a more detailed account on the history of the separable covariance matrix. \cite{chen:2019a} also considered the MAR model under this covariance structure. 

\begin{remark}
{To introduce the Gaussian MLE, we have assumed that $\hZ_t$ in the expression $\hE_t=\Sigma_1^{1/2}\hZ_t\Sigma_2^{1/2}$ has iid standard normal entries. Note that it is also possible to impose additional structures on $\hZ_t$. For example, $\hZ_t$ can have a rank one structure $\hZ_t=\hz_{1t}\hz_{2t}'$ where $\hz_{1t}$ and $\hz_{1t}$ are  random vectors of dimensions $d_1$ and $d_2$, respectively, with all elements independent and of unit-variance. In this case, $\hE_t$ is a rank-one error matrix, with separated row noises and column noises. One difficulty of using this structure is that it implies that 
$\hX_t-\hA_1\hX_{t-1}\hA_2'$ is of rank one for all $t$. Additional error terms may be needed. We leave such an extension to the future research.} 
\end{remark}
Under the normality {and error structure \eqref{eq:ocov}}, the log likelihood of the RRMAR model is (up to some additive constants)
\begin{equation}
\label{eq:loglik}
    -(T-1)(d_2\log|\Sigma_1|+d_1\log|\Sigma_2|)
    -\sum_{t=2}^T\tr\left[ \Sigma_1^{-1}(\hX_t-\hA_1\hX_{t-1}\hA_2')\Sigma_2^{-1}(\hX_t-\hA_1\hX_{t-1}\hA_2')'\right].
\end{equation}
We will introduce an alternating algorithm to find the MLE, maximizing \eqref{eq:loglik} alternatively over one pair $(\hA_i,\Sigma_i)$ while holding the other fixed. As will be seen, each iteration can be viewed as a reduced-rank regression, which is equivalent to the canonical correlation analysis \citep{reinsel:1998}.
We therefore denote the minimizer of \eqref{eq:loglik} by
$\hat\hA_i^{\hbox{\tiny cc}}$ and $\hat\Sigma_i$, and refer to them as the RR.CC estimators.

We now describe how to estimate $\hA_1$ and $\Sigma_1$ when $\hA_2$ and $\Sigma_2$ are known. Under assumption \eqref{eq:ocov}, we can rewrite the model as
\begin{equation*}
    \left(\hX_t\Sigma_2^{-1/2}\right)[,j] = \hA_1 \left(\hX_{t-1}\hA_2'\Sigma_2^{-1/2}\right)[,j] + \left(\hE_t\Sigma_2^{-1/2}\right)[,j].
\end{equation*}
Note that the columns of the transformed error matrix $\hE_t\Sigma_2^{-1/2}$ are iid $N(\hzero,\Sigma_1)$. If we let $\h{y}_{tj}=\left(\hX_t\Sigma_2^{-1/2}\right)[,j]$, \: $\hx_{tj}=\left(\hX_{t-1}\hA_2'\Sigma_2^{-1/2}\right)[,j]$ and $\h{\epsilon}_{tj}=\left(\hE_t\Sigma_2^{-1/2}\right)[,j]$, then the preceding equation can be viewed as a reduced-rank regression with i.i.d. errors:
\begin{equation}
\label{eq:rrr2}
    \h{y}_{tj} = \hA_1\hx_{tj} + \h{\epsilon}_{tj}, \quad 2\leq t\leq T,\;1\leq j\leq d_2.
\end{equation}
The MLE of $\hA_1$ based on \eqref{eq:rrr2} with i.i.d. normal errors has been well studied in the classical reduced-rank regression. Here we only define necessary notations to introduce the final expression of the MLE. We refer the readers to the classical texts \cite{anderson:2013} and \cite{reinsel:1998} for more details.
Let 
\begin{align*}
    \tilde\hS_{xx} & =\sum_t\sum_j \hx_{tj}\hx_{tj}'=\sum_t \hX_{t-1}\hA_2'\Sigma_2^{-1}\hA_2\hX_{t-1}',\\
    \tilde\hS_{yx} & =\sum_t\sum_j \h{y}_{tj}\hx_{tj}'=\sum_t \hX_t\Sigma_2^{-1}\hA_2\hX_{t-1}'.
\end{align*}
The least squares estimator (with no rank constraint) of $\hA_1$ based on \eqref{eq:rrr2} is then given by $\tilde\hA_1 = \tilde\hS_{yx}\tilde\hS_{xx}^{-1}$. To get the MLE of $\hA_1$ under the constraint $\rk(\hA_1)=k_1$, let
\begin{equation*}
    \tilde\Sigma_{\epsilon\epsilon} = \sum_t\sum_j\left(\h{y}_{tj}-\tilde\hA_1\hx_{tj}\right)\left(\h{y}_{tj}-\tilde\hA_1\hx_{tj}\right)' = \sum_t\left(\hX_t-\tilde\hA_1\hX_{t-1}\hA_2'\right)\Sigma_2^{-1}\left(\hX_t-\tilde\hA_1\hX_{t-1}\hA_2'\right)'.
\end{equation*}
Take $\tilde\hU:=[\tilde U_1, \tilde U_2,\ldots, \tilde U_{k_1}]$, where $\tilde U_j$ is the $j$-th leading unit eigenvector of
$\tilde\Sigma_{\epsilon\epsilon}^{-1/2}\tilde\hS_{yx}\tilde\hS_{xx}^{-1}\tilde\hS_{xy}\tilde\Sigma_{\epsilon\epsilon}^{-1/2}$. Then $\hA_1$ is 
updated as
\begin{equation}
\label{eq:mle-A1}
  \check \hA_1^{\hbox{\tiny cc}}=\tilde\Sigma_{\epsilon\epsilon}^{1/2} \tilde \hU \tilde \hU'\tilde\Sigma_{\epsilon\epsilon}^{-1/2}\tilde\hS_{yx}\tilde\hS_{xx}^{-1}.
\end{equation}
For the derivation of this update, see for example Equation (2.15) of \cite{reinsel:1998}.
Subsequently, the covariance matrix $\Sigma_1$ is updated as
\begin{equation*}
    \check\Sigma_1 = \frac{1}{T-1} \sum_t\left(\hX_t-\check\hA_1^{\hbox{\tiny cc}}\hX_{t-1}\hA_2'\right)\Sigma_2^{-1}\left(\hX_t-\check\hA_1^{\hbox{\tiny cc}}\hX_{t-1}\hA_2'\right)'.
\end{equation*}
Given $\hA_1$ and $\Sigma_1$, an update of $\hA_2$ and $\Sigma_2$ can be similarly obtained. Therefore, we use the alternating algorithm to find the minimizer of \eqref{eq:loglik}.

\begin{remark}
Note that each step of the algorithms reduces the corresponding objective functions, hence the algorithms are likely to converge to a local/global minimum.
To ensure the estimators do not oscillating among equivalent solutions, proper constraints need to be enforced. We have assumed that $\hA_1$ is always normalized so that $\|\hA_1\|_F=1$, and will require the same for its estimates in the algorithm. Similarly, for the MLE we require  $\|\Sigma_1\|_F=1$ and $\|\hat\Sigma_1\|_F=1$. On the other hand, since the noise covariance matrix is full rank and the ranks $k_1$ and $k_2$ are the correct ranks, there is no identifiability issue related to rank deficiency.
\end{remark}

\begin{remark}
Since the objective functions are not convex, care 
needs to be taken to reach the global minimum. Using multiple random initial values is one possible approach.
Another approach is to use
the projection estimator of $\hA_1$ and $\hA_2$ in \cite{chen2021autoregressive} (by ignoring the rank constraints), as the initial values of both alternating algorithms. The projection estimator of $\hA_1$ and $\hA_2$ are derived from a global (one-step) estimation of $\hA_2\otimes\hA_1$, though they are not efficient.
\end{remark}

\section{Asymptotics}
\label{sec:asymp}

The asymptotic analysis is substantially different from the classical
reduced-rank regression, due to the alternating nature of the
estimation. For example, the gradient condition for the LSE $\hat\hA_1^{\hbox{\tiny ls}}$ is
$\hat\hA_1^{\hbox{\tiny
    ls}}=\hat\hU\hat\hU'\hat\hS_{yx}\hat\hS_{xx}^{-1}$, where
$\hat \hU$, $\hat\hS_{xx}$ and $\hat\hS_{yx}$ are defined as the
$\hU$, $\hS_{xx}$ and $\hS_{yx}$ in Section~\ref{sec:est} with the
modification that all $\hA_2$ therein need to be replaced by
$\hat\hA_2^{\hbox{\tiny ls}}$. In other words, the asymptotic
behaviors of $\hat\hA_1^{\hbox{\tiny ls}}$ and
$\hat\hA_2^{\hbox{\tiny ls}}$ are intertwined. 

The asymptotics for $\hA_i^{\hbox{\tiny ls}}$ and $\hA_i^{\hbox{\tiny cc}}$ involve heavy notations. First of all,
recall that we assume $\|\hA_1\|_F=1$ for the parameter identifiability, so we rescale
$\hat\hA_i^{\hbox{\tiny ls}}$ and $\hat\hA_i^{\hbox{\tiny cc}}$ so that
$\|\hat\hA_1^{\hbox{\tiny ls}}\|_F=1$ and $\|\hat\hA_1^{\hbox{\tiny cc}}\|_F=1$. 
In the table below, we list notations that appear in both Theorem~\ref{thm:ls} and Theorem~\ref{thm:cc}, but with different definitions in these theorems.
\begin{center}
\begin{tabular}{l|l|l}
\hline
    Notations & Theorem~\ref{thm:ls} & Theorem~\ref{thm:cc}  \\\hline
    $\Gamma_1$ & $\E(\hX_t\hA_2'\hA_2\hX_t')$ & $\E(\hX_t\hA_2'\Sigma_2^{-1}\hA_2\hX_t')$ \\
    $\Gamma_2$ & $\E(\hX_t'\hA_1'\hA_1\hX_t)$ & $\E(\hX_t'\hA_1'\Sigma_1^{-1}\hA_1\hX_t)$\\
    $\mathbb{P}_i$ & \hbox{orthogonal projection to} $\mathrm{col}(\hA_i)$ & \hbox{orthogonal projection to} $\mathrm{col}(\Sigma_i^{-1/2}\hA_i)$\\
    $\PPP_i$ & $\mathbb{P}_i$ & $\Sigma_i^{-1/2}\mathbb{P}_i\Sigma_i^{1/2}$ \\
    $\halpha_1$ & $\vect(\hA_1)$ & Same \\
    $\hgamma_1$ & $(\halpha_1',\hzero')'$ & Same \\
    $\hW_t$ & $[(\hA_2\hX_t')\otimes\hI_{d_1},\hI_{d_2}\otimes(\hA_1\hX_t)]'$ & {Same} \\
    $\hH$ & $\E(\hW_t\hW_t')+\hgamma_1\hgamma_1'$ & $\E(\hW_t\Sigma_e^{-1}\hW_t')+\hgamma_1\hgamma_1'$ \\\hline
\end{tabular}
\end{center}
Define
\begin{equation*}
  \hQ_t:=\begin{pmatrix}
          \hX_t\hA_2'\otimes \PPP_1+[\Gamma_1\hA_1'(\hA_1\Gamma_1\hA_1')^+\hA_1\hX_t\hA_2']\otimes(\hI-\PPP_1) \\
          \PPP_2\otimes \hX_t'\hA_1'+(\hI-\PPP_2)\otimes [\Gamma_2\hA_2'(\hA_2\Gamma_2\hA_2')^+\hA_2\hX_t'\hA_1']
        \end{pmatrix},
\end{equation*}
where $\hM^+$ denotes the Moore-Penrose inverse of $\hM$. Note that $\hQ_t$ will appear in both Theorem~\ref{thm:ls} and Theorem~\ref{thm:cc}. Although it seems to have the same definition in both theorems, the two versions actually differ because the $\Gamma_i$ and $\PPP_i$ involved have different definitions. 

\begin{theorem}
\label{thm:ls}
  Assume that $\{\hE_t\}$ are i.i.d. with mean zero and finite second
  moments. Assume that
  $0<\rk{\hA_i}=k_i\leq d_i$, $\rho(\hA_1)\rho(\hA_2)<1$, where $\rho(\cdot)$ denotes the spectral radius of a matrix, and
  $\Sigma_e$ is non-singular. 
  Then
  \begin{equation*}
    \sqrt{T}\begin{pmatrix}
      \vect\left[\hat\hA_1^{\hbox{\tiny ls}}-\hA_1\right] \\
      \vect\left[(\hat\hA_2^{\hbox{\tiny ls}})'-\hA_2'\right]
    \end{pmatrix}
    \Rightarrow N(\hzero, \Xi^{\hbox{\tiny ls}}),
  \end{equation*}
  where
  \begin{equation}
  \label{eq:cov_ls}
    \Xi^{\hbox{\tiny ls}} :=\hH^{-1}\E(\hQ_t\Sigma_e\hQ_t')\hH^{-1}.
  \end{equation}
\end{theorem}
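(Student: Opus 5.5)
We treat $(\hat\hA_1^{\hbox{\tiny ls}},\hat\hA_2^{\hbox{\tiny ls}})$ as an M-estimator of a minimization over a smooth manifold and Taylor-expand its first-order conditions. The parameter $\theta=(\halpha_1',\halpha_2')'$, with $\halpha_2=\vect(\hA_2')$, ranges over $\mathcal{M}=\{\rk\hA_1=k_1,\ \rk\hA_2=k_2,\ \|\hA_1\|_F=1\}$. This set is a smooth embedded submanifold near the truth.

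\textbf{Consistency.} We first prove consistency. By stationarity and ergodicity of $\vect(\hX_t)$, which holds since $\rho(\hA_2\otimes\hA_1)=\rho(\hA_1)\rho(\hA_2)<1$ and the $\hE_t$ are i.i.d. with finite second moments, the normalized objective $T^{-1}\sum_t\|\hX_t-\hA_1\hX_{t-1}\hA_2'\|_F^2$ converges uniformly on compacts to
\[
\tr\Sigma_e+\E\|(\hA_1\hX_{t-1}\hA_2'-\hA_1^0\hX_{t-1}\hA_2^{0\prime})\|_F^2 .
\]
Because $\Sigma_e$ is nonsingular, $\Cov(\vect\hX_{t-1})$ is positive definite. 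The limit is therefore uniquely minimized, on the closure of $\mathcal{M}$ modulo sign, at $\hA_2\otimes\hA_1=\hA_2^0\otimes\hA_1^0$. Under the normalization $\|\hA_1\|_F=1$ and the sign convention, this pins down $(\hA_1,\hA_2)$. A standard argmin argument then gives consistency. Rank-deficient limits are excluded because they have strictly larger limiting loss.

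\textbf{Linearization.} Write $\hat\theta-\theta=\delta$. The model gives $\vect(\hX_t)=(\hA_2\otimes\hA_1)\vect(\hX_{t-1})+\vect(\hE_t)$. Linearizing the residual yields
\[
\vect(\hA_1\hX_{t-1}\hA_2'-\hat\hA_1\hX_{t-1}\hat\hA_2')\approx -\hW_{t-1}'\delta ,
\]
with $\hW_t$ as in the table. The score is $T^{-1/2}\sum_t\hW_{t-1}\vect(\hE_t)$, a martingale difference sum, so the martingale CLT applies. The Hessian in unconstrained coordinates is $\E(\hW_t\hW_t')$. It is singular along $\hgamma_1$, the scaling direction $(\hA_1,\hA_2)\mapsto(c\hA_1,c^{-1}\hA_2)$. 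The normalization $\|\hat\hA_1\|_F=1$ forces $\hgamma_1'\delta=o_p(T^{-1/2})$, and adding $\hgamma_1\hgamma_1'$ yields the invertible $\hH$.

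\textbf{Rank constraint.} This is the main obstacle. The admissible $\delta$ must lie in the tangent space of the rank-$k_i$ manifolds, namely
\[
\{\Delta_1:(\hI-\mathbb{P}_1)\Delta_1(\hI-\mathbb{P}_1^{r})=0\},
\]
where $\mathbb{P}_1^r$ is the row-space projection. This restriction should be encoded by using the explicit first-order condition $\hat\hA_1=\hat\hU\hat\hU'\hat\hS_{yx}\hat\hS_{xx}^{-1}$ rather than Lagrange multipliers. Expand $\hat\hU\hat\hU'$ by eigen-projection perturbation around $\mathbb{P}_1$, using $\hS_{yx}\hS_{xx}^{-1}\hS_{xy}\to\hA_1\Gamma_1\hA_1'$. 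Decompose the expansion into the component inside $\mathrm{col}(\hA_1)$ and the component in its orthogonal complement. In the orthogonal complement, the perturbation of the column space enters only through its interaction with $\hA_1\Gamma_1\hA_1'$. This should produce the term $[\Gamma_1\hA_1'(\hA_1\Gamma_1\hA_1')^+\hA_1\hX_t\hA_2']\otimes(\hI-\PPP_1)$. The in-space component gives $\hX_t\hA_2'\otimes\PPP_1$. Doing the same for $\hA_2$ and stacking, we aim to show
\[
\hH\sqrt T\delta=T^{-1/2}\sum_t\hQ_{t-1}\vect(\hE_t)+o_p(1).
\]
Verifying this identity requires careful Kronecker algebra with $\hJ_{p,q}$, together with the observation that the $\hgamma_1$ row contributes nothing. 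I expect most effort to go into this step, in particular checking that the cross terms between the $\hA_1$ and $\hA_2$ updates recombine into exactly $\hH$.

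\textbf{Conclusion.} The martingale CLT gives
\[
T^{-1/2}\sum_t\hQ_{t-1}\vect(\hE_t)\Rightarrow N\bigl(\hzero,\E(\hQ_t\Sigma_e\hQ_t')\bigr),
\]
using independence of $\hE_t$ from $\hX_{t-1}$, with the Lindeberg condition following from finite second moments and stationarity. Slutsky's theorem then yields $\Xi^{\hbox{\tiny ls}}=\hH^{-1}\E(\hQ_t\Sigma_e\hQ_t')\hH^{-1}$.
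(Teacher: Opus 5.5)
Your proposal follows the paper's proof. The paper likewise starts from the alternating first-order condition $\hat\hA_1\hat\hS_{xx}=\hat\hU\hat\hU'\hat\hS_{yx}$ and expands the projector $\hat\hU\hat\hU'$, using an intermediate matrix $\tilde\hM_1$ whose column space is exactly $\mathrm{col}(\hA_1)$; in the LS case this amounts to your expansion about $\mathbb{P}_1$ with the signal part dropping out. It then stacks the two linearized equations into $\sum_t\hW_t\hW_t'\delta=\sum_t\hQ_t\vect(\hE_t)+o_P(\sqrt T)$, absorbs the normalization through $\hgamma_1\hgamma_1'$, and concludes with the ergodic theorem and the martingale CLT (it cites Chen et al.\ for consistency and the $\sqrt T$ rate, where you argue consistency directly).

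One small correction: the null direction of $\E(\hW_t\hW_t')$ is the scaling direction $(\halpha_1',-\vect(\hA_2')')'$, not $\hgamma_1$. The matrix $\hH$ is invertible precisely because $\hgamma_1$ has inner product $\|\hA_1\|_F^2=1$ with that direction.
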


\begin{theorem}
\label{thm:cc}
  Assume that $\{\hE_t\}$ are i.i.d. with mean zero and finite second
  moments. Assume that
  $0<\rk{\hA_i}=k_i\leq d_i$, $\rho(\hA_1)\rho(\hA_2)<1$, and
  $\Sigma_e$ is of the form \eqref{eq:ocov}, and is non-singular. 
  Then
  \begin{equation*}
    \sqrt{T}\begin{pmatrix}
      \vect\left[\hat\hA_1^{\hbox{\tiny cc}}-\hA_1\right] \\
      \vect\left[(\hat\hA_2^{\hbox{\tiny cc}})'-\hA_2'\right]
    \end{pmatrix}
    \Rightarrow N(\hzero, \Xi^{\hbox{\tiny cc}}),
  \end{equation*}
  where
  \begin{equation}
  \label{eq:cov_cc}
    \Xi^{\hbox{\tiny cc}} :=\hH^{-1}\E(\hQ_t\Sigma_e^{-1}\hQ_t')\hH^{-1}.
  \end{equation}
\end{theorem}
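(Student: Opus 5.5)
We prove Theorem~\ref{thm:cc} by the $M$-estimation argument that underlies Theorem~\ref{thm:ls}, with the Euclidean inner product replaced by the one induced by $\Sigma_e^{-1}=\Sigma_2^{-1}\otimes\Sigma_1^{-1}$. The steps are consistency, a first-order expansion of the stationarity conditions, and a martingale central limit theorem.

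\emph{Consistency.} Profile out $\Sigma_1,\Sigma_2$: for fixed $(\hA_1,\hA_2)$ the maximizers are the usual flip-flop covariance estimators. Under stationarity and ergodicity of $\hX_t$ (implied by $\rho(\hA_1)\rho(\hA_2)<1$ and i.i.d.\ $\hE_t$ with finite second moments), the normalized profiled criterion converges uniformly on compacts to a limit. That limit is uniquely maximized, up to the normalization $\|\hA_1\|_F=1$ and the sign, at the true $(\hA_1,\hA_2,\Sigma_1,\Sigma_2)$; uniqueness follows because $\Sigma_e$ is nonsingular and $\E\,\vect(\hX_t)\vect(\hX_t)'>0$. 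Hence $\hat\hA_i^{\hbox{\tiny cc}}$ and $\hat\Sigma_i$ are consistent.

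\emph{Local parametrization and estimating equations.} Near the truth, the rank-$k_1$ manifold is parametrized by perturbations of $\hA_1$ whose tangent space at $\hA_1$ is $\{\Delta:(\hI-\mathbb{P}_1^{L})\Delta(\hI-\mathbb{P}_1^{R})=0\}$. In the whitened coordinates of \eqref{eq:rrr2}, the relevant projection is onto $\mathrm{col}(\Sigma_1^{-1/2}\hA_1)$, which explains the appearance of $\PPP_1=\Sigma_1^{-1/2}\mathbb{P}_1\Sigma_1^{1/2}$.

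The fixed-point identity \eqref{eq:mle-A1}, evaluated at the estimators, is the first-order condition of reduced-rank regression. Linearize it using the classical perturbation expansion of the reduced-rank MLE (Reinsel--Velu), with $\tilde\hS_{xx}/T\to\Gamma_1$ and $\tilde\Sigma_{\epsilon\epsilon}/(Td_2)\to\Sigma_1$. The result is that $\check\hA_1^{\hbox{\tiny cc}}-\hA_1$ equals the projected score
\[
T^{-1}\sum_t\Bigl[\PPP_1\Sigma_1^{-1}\hE_t\Sigma_2^{-1}\hA_2\hX_{t-1}'+(\hI-\PPP_1)\Sigma_1^{-1}\hE_t\Sigma_2^{-1}\hA_2\hX_{t-1}'\hA_1'(\hA_1\Gamma_1\hA_1')^{+}\hA_1\Gamma_1\Bigr]
\]
suitably transformed, plus a term linear in $\hat\hA_2-\hA_2$, plus $o_p(T^{-1/2})$. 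Vectorizing gives the first block of $\hQ_{t-1}\Sigma_e^{-1}\vect(\hE_t)$. The symmetric argument for $\hA_2$ gives the second block.

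Consistent estimation of $\Sigma_i$ does not affect the first-order behavior of $\hat\hA_i$. The reason is that the cross information between $\vect(\Sigma_i)$ and $\vect(\hA_i)$ has zero expectation, since it is odd in $\hE_t$. The errors $\hat\Sigma_i-\Sigma_i=O_p(T^{-1/2})$ therefore enter only at order $o_p(T^{-1/2})$.

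\emph{Coupling and the Jacobian $\hH$.} Stack the two linearized equations, parametrized as $\hgamma=(\vect(\hA_1)',\vect(\hA_2')')'$. The coefficient of $(\hat\hgamma-\hgamma)$ is $T^{-1}\sum_t\hW_{t}\Sigma_e^{-1}\hW_t'\to\E(\hW_t\Sigma_e^{-1}\hW_t')$. This matrix is singular along the scaling direction $(\hA_1,\hA_2)\mapsto(c\hA_1,c^{-1}\hA_2)$.

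The normalization $\|\hat\hA_1\|_F=\|\hA_1\|_F=1$ gives $\halpha_1'\vect(\hat\hA_1-\hA_1)=o_p(T^{-1/2})$, that is, $\hgamma_1\hgamma_1'(\hat\hgamma-\hgamma)=o_p(T^{-1/2})$. Adding this equation yields the invertible matrix $\hH$, and so
\[
\sqrt T(\hat\hgamma-\hgamma)=\hH^{-1}T^{-1/2}\sum_t\hQ_{t-1}\Sigma_e^{-1}\vect(\hE_t)+o_p(1).
\]
The summands form a square-integrable martingale difference sequence with conditional covariance $\hQ_{t-1}\Sigma_e^{-1}\Sigma_e\Sigma_e^{-1}\hQ_{t-1}'=\hQ_{t-1}\Sigma_e^{-1}\hQ_{t-1}'$. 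The martingale CLT together with the ergodic theorem then gives $N(\hzero,\Xi^{\hbox{\tiny cc}})$.

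\emph{Main obstacle.} The hardest step is the linearization of the reduced-rank fixed point, in particular the component orthogonal to the row or column space. That component produces the factor $\Gamma_1\hA_1'(\hA_1\Gamma_1\hA_1')^+\hA_1$ and requires the eigenvector perturbation to be carried out in the $\Sigma_1$-whitened geometry. I would first verify it directly by perturbing the factorization $\hA_1=\hA_{1l}\hA_{1c}'$, with the score equations taken in $\hA_{1l}$ and $\hA_{1c}$ separately, and then recombine. Checking invertibility of $\hH$ also needs care: one shows its kernel would have to lie in the scaling direction, which $\hgamma_1$ excludes.
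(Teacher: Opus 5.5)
Your outline has the same skeleton as the paper's proof. You linearize the fixed-point conditions of the alternating reduced-rank fit and stack them so that $\sum_t\hW_t\Sigma_e^{-1}\hW_t'$ multiplies the stacked error, with $\sum_t\hQ_{t-1}\Sigma_e^{-1}\vect(\hE_t)$ on the right; your displayed projected score is exactly its first block. You then add $\hgamma_1\hgamma_1'$ from the norm constraint and finish with the ergodic theorem and the martingale CLT.

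The difference is in the key linearization. The paper keeps the eigenvector form of the gradient condition and introduces $\tilde\hM_1$, whose left singular space is exactly $\mathrm{col}(\hat\Sigma_1^{-1/2}\hA_1)$. The identity $\hat\Sigma_1^{1/2}\tilde\hU_1\tilde\hU_1'\hat\Sigma_1^{-1/2}\hA_1=\hA_1$ cancels the $O_P(T)$ signal term exactly. Only the pure-noise difference $\hat\hM_1-\tilde\hM_1$ then enters the Wedin/Stewart--Sun projector expansion, and this one device disposes of both $\hat\hA_2$ in the regressor and $\hat\Sigma_i$.

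You cannot quote the Reinsel--Velu expansion verbatim here. The regressor contains $\hat\hA_2$, so the effective coefficient is $\hA_1$ plus a perturbation lying in $\mathrm{col}(\hA_1)$. Your fallback through $\hA_1=\hA_{1l}\hA_{1c}'$ does work, however, and avoids subspace perturbation altogether. With $R=\sum_t(\hX_t-\hat\hA_1\hX_{t-1}\hat\hA_2')\hat\Sigma_2^{-1}\hat\hA_2\hX_{t-1}'$, the score equations are $\hat\hA_1'\hat\Sigma_1^{-1}R=\hzero$ and $R\hat\hA_1'=\hzero$. These give $R=(\hI-\hP_1)R(\hI-\mathfrak P_1)+o_P(\sqrt T)$, where $\hP_1=\Sigma_1^{1/2}\mathbb{P}_1\Sigma_1^{-1/2}$ and $\mathfrak P_1=\hA_1'(\hA_1\Gamma_1\hA_1')^+\hA_1\Gamma_1$. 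The tangent-space constraint makes $(\hI-\hP_1)(\cdot)(\hI-\mathfrak P_1)$ vanish on the linear part. The result is exactly \eqref{eq:cc_grad_A1}, in the form the paper itself uses in \eqref{eq:grad_p}.

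One correction: the rate $\hat\Sigma_i-\Sigma_i=O_P(T^{-1/2})$ that you invoke needs fourth moments, which the theorem does not assume. It is also unnecessary. Everything multiplying $\hat\Sigma_i-\Sigma_i$ (e.g.\ $R$, or the cross Hessian at the truth) is $O_P(\sqrt T)$, so consistency of $\hat\Sigma_i$ suffices, as in the paper.

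Your check that $\hH$ is invertible, because the kernel of $\E(\hW_t\Sigma_e^{-1}\hW_t')$ is only the scaling direction, fills a step the paper leaves implicit.
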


Besides the usual regularity conditions, we also assume in Theorem~\ref{thm:ls} and Theorem~\ref{thm:cc} that certain matrices have distinct eigenvalues. While this assumption holds for generic positive definite matrices, it is even easier to be fulfilled by the aforementioned matrices since they involve $\Gamma_i$.

As mentioned before, the RRMAR model is a MAR model with additional low rank constraints on the coefficient matrices $\hA_i$. If the estimation is carried out without the rank constraints, then the procedure of \cite{chen2021autoregressive} applies and so does its asymptotic result ({e.g.} Theorem~4 {in \cite{chen2021autoregressive}}). In fact, if the estimation of $\hA_i$ is given by the MLE without imposing the low rank constraints, the asymptotic covariance matrix would take the same form as $\Xi^{\hbox{\tiny cc}}$, by setting $\mathcal P_i=\hI$ in the definition of $\hQ_t$. Denote this covariance matrix by $\tilde \Xi$. The following theorem asserts that the MLE $\hat\hA_i^{\hbox{\tiny cc}}$ under the RRMAR model are asymptotically more efficient.
\begin{theorem}
\label{thm:efficiency}
Under the assumptions of Theorem~\ref{thm:cc}, it holds that $\tilde\Xi\succeq \Xi^{\hbox{\tiny cc}}$, i.e. the difference $\tilde\Xi - \Xi^{\hbox{\tiny cc}}$ is positive semi-definite.
\end{theorem}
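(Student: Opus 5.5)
We compare the two sandwich forms $\Xi^{\hbox{\tiny cc}}=\hH^{-1}\E(\hQ_t\Sigma_e^{-1}\hQ_t')\hH^{-1}$ and $\tilde\Xi=\hH^{-1}\E(\tilde\hQ_t\Sigma_e^{-1}\tilde\hQ_t')\hH^{-1}$, where $\tilde\hQ_t$ is $\hQ_t$ with $\PPP_i=\hI$. In that case $\tilde\hQ_t=\hW_t$, since $\hX_t\hA_2'\otimes\hI=(\hA_2\hX_t')'\otimes\hI_{d_1}$, and similarly for the second block. The same $\hH^{-1}$ appears on both sides, so it suffices to show
\begin{equation*}
\E(\hW_t\Sigma_e^{-1}\hW_t')\succeq\E(\hQ_t\Sigma_e^{-1}\hQ_t').
\end{equation*}
The natural route is to write $\hQ_t=\hW_t-\hR_t$, where $\hR_t$ collects the terms involving $\hI-\PPP_i$. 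Then
\begin{equation*}
\E(\hW_t\Sigma_e^{-1}\hW_t')-\E(\hQ_t\Sigma_e^{-1}\hQ_t')=\E(\hR_t\Sigma_e^{-1}\hR_t')+\E(\hQ_t\Sigma_e^{-1}\hR_t')+\E(\hR_t\Sigma_e^{-1}\hQ_t'),
\end{equation*}
and the goal is to show that the cross terms vanish. In other words, $\hQ_t$ should be the $\Sigma_e^{-1}$-weighted $L^2$ projection of $\hW_t$ onto a subspace, with $\hR_t$ orthogonal to it. This is the usual picture: constrained MLE efficiency comes from projecting the score onto the tangent space of the constrained manifold.

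\textbf{Step 1 (the first block).} Write $\Sigma_e^{-1}=\Sigma_2^{-1}\otimes\Sigma_1^{-1}$ and compute the $(1,1)$ cross term. The part of $\hW_t$ is $\hX_t\hA_2'\otimes\hI$, which splits as $\hX_t\hA_2'\otimes\PPP_1+\hX_t\hA_2'\otimes(\hI-\PPP_1)$. So $\hR_t^{(1)}=[\hX_t\hA_2'-\Gamma_1\hA_1'(\hA_1\Gamma_1\hA_1')^+\hA_1\hX_t\hA_2']\otimes(\hI-\PPP_1)$. For the cross term with $\hQ_t^{(1)}$:
\begin{itemize}
\item The pieces of the form $\PPP_1\Sigma_1^{-1}(\hI-\PPP_1)'$ vanish, because $\PPP_1=\Sigma_1^{-1/2}\mathbb{P}_1\Sigma_1^{1/2}$ and $\mathbb{P}_1$ is an orthogonal projection. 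This is exactly why $\PPP_1$ is defined with the $\Sigma_1^{\pm1/2}$ twist.
\item The remaining piece has Kronecker factor $(\hI-\PPP_1)\Sigma_1^{-1}(\hI-\PPP_1)'$ on the right. On the left it has $\E[(\hX_t\hA_2'-\Gamma_1\hA_1'M^+\hA_1\hX_t\hA_2')\Sigma_2^{-1}(\Gamma_1\hA_1'M^+\hA_1\hX_t\hA_2')']$, with $M=\hA_1\Gamma_1\hA_1'$.
\item Using $\Gamma_1=\E(\hX_t\hA_2'\Sigma_2^{-1}\hA_2\hX_t')$, this equals $\Gamma_1\hA_1'M^+\hA_1\Gamma_1-\Gamma_1\hA_1'M^+MM^+\hA_1\Gamma_1=0$, since $M^+MM^+=M^+$. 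So the regression residual is orthogonal to the fitted part.
\end{itemize}

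\textbf{Step 2 (the off-diagonal blocks).} The $(1,2)$ and $(2,1)$ cross terms mix the row and column blocks. Here I would vectorize and use the permutation $\hJ_{d_1,d_2}$ to rewrite $\hI_{d_2}\otimes(\hA_1\hX_t)$ terms as Kronecker products in the same ordering. The point to verify is that the $(\hI-\PPP_1)$ factor meets $\hA_1$ through $\Sigma_1^{-1}$. Because $(\hI-\PPP_1)\Sigma_1^{-1}\hA_1=\Sigma_1^{-1/2}(\hI-\mathbb{P}_1)\Sigma_1^{-1/2}\hA_1=0$, up to the transpose convention, these cross terms should also vanish. Symmetrically, $(\hI-\PPP_2)$ annihilates $\Sigma_2^{-1}\hA_2$.

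I expect Step 2 to be the main obstacle: the index bookkeeping with $\hJ$, and checking that every mixed term really contains such an annihilating factor. If some mixed term does not vanish outright, my fallback is more abstract. I would show that $\hQ_t$ equals the $L^2(\Sigma_e^{-1})$-projection of $\hW_t$ onto the closed span of the tangent directions of $\{(\hA_1,\hA_2'):\rk\hA_i=k_i\}$, with coefficients linear in $\hX_t$. Once $\hQ_t$ is a projection of $\hW_t$, Pythagoras gives the required inequality directly.

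With the cross terms shown to vanish, the difference equals $\E(\hR_t\Sigma_e^{-1}\hR_t')\succeq0$. Congruence by the symmetric matrix $\hH^{-1}$ preserves positive semidefiniteness, which gives $\tilde\Xi\succeq\Xi^{\hbox{\tiny cc}}$.
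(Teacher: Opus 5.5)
Your proposal is correct and follows the paper's proof. Both arguments write $\tilde\hQ_t=\hW_t=\hQ_t+\bar\hQ_t$ and show the cross term $\E(\bar\hQ_t\Sigma_e^{-1}\hQ_t')$ vanishes blockwise, using $(\hI-\PPP_1)\Sigma_1^{-1}\PPP_1'=\hzero$ and $[\hI-\Gamma_1\hA_1'M^+\hA_1]\Gamma_1\hA_1'M^+\hA_1\Gamma_1=\hzero$ with $M=\hA_1\Gamma_1\hA_1'$.

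Your Step 2 is easier than you fear, so the abstract fallback is unnecessary. The columns of both blocks of $\hQ_t$ are already ordered compatibly with $\Sigma_2^{-1}\otimes\Sigma_1^{-1}$, so no $\hJ$ is needed. The mixed-product rule then shows the off-diagonal blocks vanish identically, not just in expectation, via $(\hI-\PPP_1)\Sigma_1^{-1}\hA_1=\hzero$ and $(\hI-\PPP_2)\Sigma_2^{-1}\hA_2=\hzero$.
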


\begin{remark}
Since $\Sigma_e$ in Theorem~1 can be arbitrary, as long as it is non-singular, the LSE does not correspond to the MLE, and thus there is no similar result to Theorem~\ref{thm:efficiency} regarding the comparison of the LSE under the RRMAR model and the MAR model without rank constraints. On the other hand, if the entries of $\hE_t$ are IID, we can show a similar result to Theorem~\ref{thm:efficiency} for the LSE, which is not presented here since it is too special. 
\end{remark}

\newcommand{\hD}{\h{D}}
\newcommand{\hd}{\h{d}}
\newcommand{\hone}{\h{1}}

\medskip
We now consider the asymptotics 
of the composition and loading matrices $\hA_{il}$ and $\hA_{ic}$ in \eqref{eq:marrr_fac}. The following discussion works the same for either $\hat\hA_i^{\hbox{\tiny ls}}$ or $\hat\hA_i^{\hbox{\tiny cc}}$. Therefore, we will use the unified notations $\hat\hA_i$ and $\Xi$, dropping the superscripts $^{\hbox{\tiny ls}}$ and $^{\hbox{\tiny cc}}$. Since $\hA_{ic}$ and $\hA_{il}$ cannot be identified as seen from $\hA_{il}\hA_{ic}'=\hA_{il}\hM\hM^{-1}\hA_{ic}'$ for any invertible $k_i\times k_i$ matrix $\hM$, we consider instead the singular value decomposition (SVD) of $\hA_i$.  Write $\hA_{i}={\hU}_i\hD_i\hV_i'$, where both $\hU_i$ and $\hV_i$ are $d_i\times k_i$ ortho-normal matrices. Denote the $j$-th diagonal element of $\hD_i$ by $d_{ij}$, and define $\hd_i=(d_{i1},\ldots,d_{i,k_i})'$. Comparing \eqref{eq:marrr_fac}, we see that $\hV_i$ corresponds to the composition matrix $\hA_{ic}$, $\hU_i$ corresponds to the loading matrix $\hA_{il}$, and $\hD_i$ can be absorbed into either $\hA_{il}$ or $\hA_{ic}$. 
Let $\hat\hA_i =\hat\hU_i\hat\hD_i(\hat\hV_i)'$ be the SVD of $\hat\hA_i$. Since $\hat\hA_i\hat\hA_i' = \hat\hU_i\hat\hD_i^2(\hat\hU_i)'$, the asymptotic distribution of $\hat\hU_i$ can be obtained based on that of $\hat\hA_i\hat\hA_i'$. Similarly, the asymptotic distribution of $\hat\hV_i$ can be derived from that of $\hat\hA_i'\hat\hA_i$.
Note that the asymptotic covariance matrix of $\vect(\hat\hA_i)$ (for $i=1,2$) is a submatrix of $\Xi$ and can be extracted from \eqref{eq:cov_ls} or \eqref{eq:cov_cc}. Following that, we let $\Xi_{i1}$ be the asymptotic covariance matrix of $\vect(\hat\hA_i\hat\hA_i')$, which can be obtained through the expansion $$\hat\hA_i\hat\hA_i' = \hA_i\hA_i'+(\hat\hA_i-\hA_i)\hA_i'+\hA_i(\hat\hA_i-\hA_i)'+o_P(T^{-1/2}).$$  More specifically, when $i=1$,
\begin{equation*}
    \Xi_{11}=\left[\hA_1\otimes\hI_{d_1}+(\hI_{d_1}\otimes\hA_1)\hJ_{d_1,d_1}\right]\left\{ \Xi[1:d_1^2,1:d_1^2] \right\}\left[\hA_1\otimes\hI_{d_1}+(\hI_{d_1}\otimes\hA_1)\hJ_{d_1,d_1}\right]',
\end{equation*}
where $\Xi[1:d_1^2,1:d_1^2]$ is the upper left $d_1^2\times d_1^2$ block of $\Xi^{\hbox{\tiny ls}}$ or $\Xi^{\hbox{\tiny cc}}$, and the matrix $\hJ_{d_1,d_1}$ is defined in \eqref{eq:J}.
The asymptotic covariance matrix of $\hat\hA_1'\hat\hA_1$, denoted by $\Xi_{12}$, has a similar expression. When $i=2$, the matrices $\Xi_{21}$ and $\Xi_{22}$, related to $\hat\hA_2$, are also defined similarly.

Define the matrix $\hR_{i1}$ as
\begin{equation*}
    \hR_{i1}=(\hI_{k_i}\otimes\hU_i,\hI_{k_i}\otimes \hU_i^{\perp})
    \begin{pmatrix}
    (\hD_i^2\otimes\hI_{k_i}-\hI_{k_i}\otimes\hD_i^2+\hL_{k_i}\hL_{k_i}')^{-1}(\hI_{k_i^2}-\hL_{k_i}\hL_{k_i}')(\hU_i'\otimes\hU_i') \\
    (\hD_i^{-2}\hU_i')\otimes(\hU_i^\perp)'
    \end{pmatrix},
\end{equation*}
and define $\hR_{i2}$ similarly, but replacing $\hU_i$ with $\hV_i$.

Note that even when $\hA_i$ has distinct singular values, the columns of $\hU_i$ and $\hV_i$ are only identified up to sign changes. We shall adopt the following convection to identify $\hU_i$: the first nonzero element of each column of $\hU_i$ is positive. Since $\hA_1$ and $\hA_2$ are also only identified up to sign changes, we make one more requirement to identify $\hV_i$: the first nonzero element of the first column of $\hV_2$ is positive. Subsequently, we also require the estimators $\hat\hU_i$ and $\hat\hV_i$ to satisfy these identifiability conditions.

Now, as a consequence of Theorem~\ref{thm:ls} and Theorem~\ref{thm:cc}, we have the following result regarding $\hat\hU_i$ and $\hat\hV_i$. 
\begin{corollary}
\label{cor:svd}
Assume the conditions of Theorem~\ref{thm:ls} or Theorem~\ref{thm:cc} hold, and that the singular values of $\hA_1$ are distinct, and so are those of $\hA_2$. For each of $i=1,2$, it holds that
\begin{equation*}
   \sqrt{T}\vect(\hat\hU_i-\hU_i)\Rightarrow
   N\left(\hzero, \hR_{i1} \Xi_{i1} \hR_{i1}'\right),
\end{equation*}
and
\begin{equation*}
   \sqrt{T}\vect(\hat\hV_i-\hV_i)\Rightarrow
   N\left(\hzero, \hR_{i2} \Xi_{i2} \hR_{i2}'\right),
\end{equation*}
and
\begin{equation*}
    \sqrt{T}(\hat\hd_{i} -\hd_i) \Rightarrow N\left(\hzero,\frac{1}{4}\hD_i^{-1}\hL_{k_i}'(\hU_i'\otimes\hU_i')\Xi_{i1}(\hU_i\otimes\hU_i)\hL_{k_i}\hD_i^{-1}\right).
\end{equation*}
\end{corollary}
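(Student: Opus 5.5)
The plan is a delta-method argument. Theorem~\ref{thm:ls} (or Theorem~\ref{thm:cc}) gives the joint $\sqrt{T}$-asymptotic normality of $\vect(\hat\hA_i-\hA_i)$. The expansion displayed before the corollary,
\[
\hat\hA_i\hat\hA_i'-\hA_i\hA_i'=(\hat\hA_i-\hA_i)\hA_i'+\hA_i(\hat\hA_i-\hA_i)'+O_P(T^{-1}),
\]
together with $\vect(\hM\hA_i')=(\hA_i\otimes\hI)\vect(\hM)$ and $\vect(\hA_i\hM')=(\hI\otimes\hA_i)\hJ\vect(\hM)$, gives $\sqrt{T}\vect(\hat\hA_i\hat\hA_i'-\hA_i\hA_i')\Rightarrow N(\hzero,\Xi_{i1})$. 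The same argument applied to $\hat\hA_i'\hat\hA_i$ gives $\Xi_{i2}$. It then suffices to show two things:
\begin{itemize}
\item the maps from the symmetric matrix $\hM=\hA_i\hA_i'$ to its leading $k_i$ unit eigenvectors $\hU_i$ and to the square roots of its eigenvalues are differentiable at the truth;
\item their Jacobians are $\hR_{i1}$ and $\tfrac12\hD_i^{-1}\hL_{k_i}'(\hU_i'\otimes\hU_i')$ respectively.
\end{itemize}
The continuous mapping theorem and the delta method then finish the proof.

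For the derivative I would use standard first-order eigenvector perturbation. Write $\hM=\hU_i\hD_i^2\hU_i'$ with $\hD_i^2$ having distinct positive diagonal entries; the zero eigenvalue of multiplicity $d_i-k_i$ corresponds to $\hU_i^\perp$. Let $d\hM$ be a symmetric perturbation, and decompose the eigenvector perturbation as $d\hU_i=\hU_i\hC+\hU_i^\perp\hG$.
\begin{itemize}
\item Differentiating $\hM\hU_i=\hU_i\hD_i^2$ and projecting onto $\hU_i^\perp$ gives $\hG=(\hU_i^\perp)'\,d\hM\,\hU_i\hD_i^{-2}$, because the eigenvalue on $\hU_i^\perp$ is zero. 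In vectorized form this is $[(\hD_i^{-2}\hU_i')\otimes(\hU_i^\perp)']\vect(d\hM)$, the second block of $\hR_{i1}$.
\item Projecting onto $\hU_i$ gives $\hC\hD_i^2-\hD_i^2\hC=\hU_i'\,d\hM\,\hU_i-d(\hD_i^2)$, so the off-diagonal entries satisfy $c_{jl}=(\hU_i'd\hM\hU_i)_{jl}/(d_l^2-d_j^2)$.
\item The orthonormality constraint $\hU_i'\hU_i=\hI$ forces $\hC$ to be skew-symmetric, so its diagonal is zero. The operator $\hD_i^2\otimes\hI-\hI\otimes\hD_i^2+\hL\hL'$ is diagonal and invertible; combined with the projection $\hI-\hL\hL'$ that kills the diagonal, it reproduces exactly the first block of $\hR_{i1}$ acting on $(\hU_i'\otimes\hU_i')\vect(d\hM)$.
\item The diagonal gives $d(d_{ij}^2)=\hL'(\hU_i'\otimes\hU_i')\vect(d\hM)$. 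Since $d(d_{ij})=d(d_{ij}^2)/(2d_{ij})$, this yields the stated covariance with the factor $\tfrac14\hD_i^{-1}\cdots\hD_i^{-1}$.
\item The identical argument applied to $\hA_i'\hA_i$ handles $\hV_i$ with $\hR_{i2}$.
\end{itemize}

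The delicate point is the sign identification, which I expect to be the main obstacle. Eigenvectors are smooth functions of $\hM$ only after a sign convention is fixed. Under the convention that the first nonzero entry of each column of $\hU_i$ is positive, $\hat\hU_i$ is selected consistently in a neighbourhood of the truth, so it coincides with the smooth local branch with probability tending to one. I would verify this via consistency: $\hat\hA_i\to\hA_i$ in probability and the eigenvalues are distinct, so by Davis--Kahan the estimated eigenvectors are close to $\pm$ the true ones, and the convention selects $+$ with probability tending to one.

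For $\hV_i$ there is an extra subtlety, because $\hA_1,\hA_2$ are jointly identified only up to a simultaneous sign flip. The normalization $\|\hat\hA_1\|_F=1$ combined with the sign rule on the first column of $\hV_2$ pins this down. Since $\hU_i$ and $\hV_i$ are linked through $\hV_i=\hA_i'\hU_i\hD_i^{-1}$, I would check that the conventions are compatible on a high-probability event. On that event each estimator is a differentiable function of $\hat\hA_i\hat\hA_i'$ or of $\hat\hA_i'\hat\hA_i$, and the delta method applies.
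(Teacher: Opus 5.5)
Your proposal is correct and is essentially the paper's argument. The paper proves a general CLT lemma for the leading eigenvectors and eigenvalues of a symmetric matrix by expanding $\hat\hM\hat\hU=\hat\hU\hat\Lambda$ to first order and projecting onto $\hU$ and $\hU^\perp$. That is exactly your differential computation, including the asymptotic skew-symmetry of $\hU'(\hat\hU-\hU)$ and the $\hL_{k_i}$ bookkeeping. The paper then applies the lemma with $\hM=\hA_i\hA_i'$ (resp. $\hA_i'\hA_i$), $\Lambda=\hD_i^2$ and $\Theta=\Xi_{i1}$ (resp. $\Xi_{i2}$), and the square-root delta step gives the result for $\hat\hd_i$.

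One minor caution on your sign discussion, a point the paper also leaves implicit. The ``first nonzero entry positive'' rule selects the branch near the truth with probability tending to one only if the first coordinate of each true column (and of $\hV_2[,1]$) is nonzero. A zero leading entry is estimated by a small entry of random sign, which can flip the whole column. So this genericity condition must be assumed; it does not follow from Davis--Kahan alone.
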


\begin{remark}
Let $\hG_i$ be the $k_i\times k_i$ matrix defined as
\begin{equation*}
    \hG_i[j,k] = \left\{\begin{array}{ll}
    (d^2_{ik}-d^2_{ij})^{-1}     & \hbox{when } j\neq k,  \\
    0     & \hbox{when } j=k.
    \end{array}\right.
\end{equation*} 
Furthermore, let $\hU_i^\perp$ be the $d_i\times (d_i-k_i)$ orthonormal matrix such that $(\hU_i,\hU_i^\perp)$ is an orthogonal matrix. The asymptotic distribution of $\hat\hU_i$ can also be obtained from the following equation
\begin{equation*}
    \hat\hU_i-\hU_i = (\hU_i,\hU_i^{\perp})
    \begin{pmatrix}
    \hG_i\circ \left[\hU_i'(\hat\hA_i\hat\hA_i' - \hA_i\hA_i')\hU_i\right] \\
    (\hU_i^{\perp})'(\hat\hA_i\hat\hA_i' - \hA_i\hA_i')\hU_i\hD_i^{-2}
    \end{pmatrix}
    + o_P\left(\frac{1}{\sqrt{T}}\right),
\end{equation*}
where $\circ$ denotes the entry-wise Hadamard or Schur product of two matrices.
\end{remark}

\begin{remark}
The joint distribution of $\hat\hU_i$ and $\hat\hV_i$ can also be derived. However, we choose not to spell the details out here for two reasons: the notations are already very complicated, and more importantly, there does not seem to be any direct applications of such a joint distribution.
\end{remark}

\section{Identification of the Rank}
\label{sec:rank}

In most applications the ranks of $\hA_i$ are unknown, and it is
important to determine them
from the data. This problem has been considered for multivariate reduced-rank regression by \cite{anderson:1951} and \cite{anderson:2013}, and for reduced-rank autoregressive model by \cite{kohn1979asymptotic}, \cite{reinsel:1998}, \cite{tiao1989model} and \cite{tsay1985use}, among others. For high dimensional reduced-rank regression based on independent samples, penalized least squares can select the ranks along with the estimation, where the penalty is based on nuclear norm \citep{yuan:2007,negahban:2011}, $\ell_0$ norm
\citep{bunea:2011}, or Schatten-$q$ quasi-norm
\citep{rohde:2011}. \cite{basu2019low} and \cite{lin2020regularized} considered low rank VAR and tensor models,
combining least squares estimation with the nuclear norm
penalty.

We propose to use an information criterion to select the ranks. For a given pair of ranks $(r_1,r_2)$, it is defined as
\begin{equation}
\label{eq:bic}
\begin{aligned}
    \mathrm{EBIC}(r_1,r_2) = & \log\left[\frac{1}{Td_1d_2}\sum_{t=2}^T\|\hX_t-\hat\hA_1^{\hbox{\tiny ls}}\hX_{t-1}(\hat\hA_2^{\hbox{\tiny ls}})'\|_F^2\right] \\ 
    & + \frac{1}{Td_1d_2}\cdot[\log(Td_2)\cdot r_1(2d_1-r_1)+\log(Td_1)\cdot r_2(2d_2-r_2)].
\end{aligned}
\end{equation}
This can be viewed as an extended version of the Bayesian Information Criterion \citep{schwarz1978estimating}, so we use the acronym EBIC. Here the likelihood is calculated for the model where the entries of $\h{E}_t$ are iid $N(0,\sigma^2)$, so it is best viewed as a “quasi”-likelihood. It is not precisely derived according to the posterior probability under the Bayesian framework \citep{haughton1988choice}. Instead, we combine the quasi-log-likelihood and a penalty term where the numbers of parameters are multiplied by the logarithm of the sample sizes. Instead of simply counting the number of parameters, the effective number of parameters \citep{mukherjee2015degrees,yuan2016degrees} can also be used in the EBIC. However, we choose the current version for simplicity. The selected pair of ranks $(\hat k_1,\hat k_2)$ minimizes the EBIC over all the pairs $(r_1,r_2)$ such that $1\leq r_1\leq r_{1\!\max}$ and $1\leq r_2\leq r_{2\!\max}$, where $r_{1\!\max}$ and $r_{2\!\max}$ are pre-determined maximum ranks of $\hA_1$ and $\hA_2$, respectively. If no information is available and the dimensions $d_1$ and $d_2$ are not too large, we can simply use $r_{1\!\max}=d_1$ and $r_{2\!\max}=d_2$. The following Theorem \ref{thm:bic} confirms that the EBIC in \eqref{eq:bic} does achieve the consistency. Its empirical performances are also outstanding, as will be shown in Section~\ref{sec:simulation}. 

Since there are two ranks to be determined, a direct search via \eqref{eq:bic} over all possible pairs of ranks can be very costly when both $r_{1\!\max}$ and $r_{2\!\max}$ are large. We also consider selecting these two ranks separately. Specifically, the selected ranks are
\[
\hat{r}_1=\arg\min_{r_1}\mathrm{EBIC}(r_1,r_{2\!\max})\quad\quad \hat{r}_2=\arg\min_{r_2}\mathrm{EBIC}(r_{1\!\max},r_{2}).
\]

\begin{theorem}\label{thm:bic}
  Assume that $\{\hE_t\}$ are i.i.d. with mean zero and finite second moments. 
  Also assume that $0<\rk{\hA_i}=k_i\leq d_i$, $\rho(\hA_1)\rho(\hA_2)<1$, and
  $\Sigma_e$ is non-singular. Then both the joint $\mathrm{EBIC}$ and the separate $\mathrm{EBIC}_i$ select the true ranks consistently, given that $k_i\leq r_{i\!\max}$.
\end{theorem}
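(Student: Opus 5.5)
We compare $\mathrm{EBIC}(r_1,r_2)$ with $\mathrm{EBIC}(k_1,k_2)$ and split the candidate pairs into \emph{underfitted} pairs, with $r_1<k_1$ or $r_2<k_2$, and \emph{overfitted} pairs, with $r_i\ge k_i$ for both $i$ and at least one strict inequality. Write $\hat\sigma^2(r_1,r_2)=\frac{1}{Td_1d_2}\sum_t\|\hX_t-\hat\hA_1\hX_{t-1}\hat\hA_2'\|_F^2$ for the minimized residual sum of squares under the rank constraints. The penalty is of order $\log T/T$. So it suffices to show two things. For underfitted pairs, $\log\hat\sigma^2(r_1,r_2)-\log\hat\sigma^2(k_1,k_2)$ is bounded below by a positive constant with probability tending to one. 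For overfitted pairs, this difference is $O_P(1/T)$ in absolute value. Since $\log(Td_i)/T\gg 1/T$, the larger penalty then dominates and forces the choice $(k_1,k_2)$. There are finitely many pairs, so a union bound finishes the joint EBIC. For the separate criteria, $\mathrm{EBIC}(r_1,r_{2\max})$ with $r_{2\max}\ge k_2$ is handled in the same way: for $r_1<k_1$ use the underfitting bound, and for $r_1>k_1$ compare against $(k_1,r_{2\max})$ using the overfitting bound.

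\textbf{Underfitting.} By ergodicity of the stationary causal process (guaranteed by $\rho(\hA_1)\rho(\hA_2)<1$ and i.i.d.\ errors with finite second moments), $\frac1T\sum_t\|\hX_t-\hB_1\hX_{t-1}\hB_2'\|_F^2\to \tr\Sigma_e+\E\|(\hA_2\otimes\hA_1-\hB_2\otimes\hB_1)\vect\hX_{t-1}\|^2$ uniformly over compact sets of $(\hB_1,\hB_2)$. I would obtain uniformity from a ULLN, which is easy here because the objective is a quadratic form in $\hB_2\otimes\hB_1$ with sample second-moment coefficients. After normalizing $\|\hB_1\|_F=1$ the parameter set is effectively compact; large $\|\hB_2\|$ can be excluded because the objective blows up. If $r_1<k_1$, then $\rk(\hB_2\otimes\hB_1)=\rk\hB_1\cdot\rk\hB_2$, and any $\hB_2\otimes\hB_1$ with $\rk\hB_1<k_1$ stays bounded away from $\hA_2\otimes\hA_1$. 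The reason is that the mode-1 unfolding rank is preserved under the Kronecker rearrangement, and the set of low-rank pairs is closed. Nonsingularity of $\Cov(\vect\hX_{t-1})\succeq\Sigma_e$ then makes the excess term strictly positive. This gives $\liminf\hat\sigma^2(r_1,r_2)>\tr\Sigma_e/(d_1d_2)+c$, while $\hat\sigma^2(k_1,k_2)\to\tr\Sigma_e/(d_1d_2)$ by consistency from Theorem~\ref{thm:ls}.

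\textbf{Overfitting.} Here the constraint sets are nested, so $\hat\sigma^2(r_1,r_2)\le\hat\sigma^2(k_1,k_2)$. It remains to show that $T[\hat\sigma^2(k_1,k_2)-\hat\sigma^2(r_1,r_2)]=O_P(1)$. Both estimators are $\sqrt T$-consistent for $\hA_2\otimes\hA_1$. For the true ranks this is Theorem~\ref{thm:ls}. For larger ranks I would rerun the argument of Theorem~\ref{thm:ls}, or compare with the unconstrained MAR least squares estimator, which bounds it from below. The true model lies inside every overfitted class, so $\hat\sigma^2(r_1,r_2)\ge\hat\sigma^2_{\mathrm{MAR}}$. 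A quadratic expansion of the least squares objective around the truth then gives $\sum_t\|\hE_t\|_F^2-T d_1d_2\hat\sigma^2_{\mathrm{MAR}}=O_P(1)$, by the MAR asymptotics of \cite{chen2021autoregressive}: the score is $O_P(\sqrt T)$ and the Hessian is $O(T)$. Similarly, Theorem~\ref{thm:ls} gives $\sum_t\|\hE_t\|_F^2-Td_1d_2\hat\sigma^2(k_1,k_2)=O_P(1)$. Sandwiching the two bounds gives the $O_P(1/T)$ difference, and taking logarithms preserves it.

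The main obstacle I expect is the underfitting step, specifically the uniform convergence and the identifiability gap. The minimization is nonconvex, and the factorization $\hB_2\otimes\hB_1$ has a scaling indeterminacy, so I must carefully restrict to a compact set. To do so I would show that minimizers have bounded $\|\hB_2\otimes\hB_1\|$ with high probability, using the positive definiteness of the sample covariance of $\vect\hX_{t-1}$. I would then argue directly on the product $\hC=\hB_2\otimes\hB_1$ over the closed set of Kronecker products with the given ranks.
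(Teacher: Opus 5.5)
Your proposal is correct and follows the same route as the paper's proof. Both use the same three-case comparison: underfitted pairs have residual variance bounded away above $\sigma_0^2$, overfitted pairs differ from the true-rank fit by $O_P(1/T)$, and the penalty is of order $\log T/T$; your ULLN/closed-set identifiability argument and your sandwich against the unconstrained fit supply the details the paper dismisses as ``straightforward.'' As a minor simplification, every pair with $r_i\ge k_i$ can be sandwiched between the unconstrained VAR(1) least squares residual variance and $\frac{1}{Td_1d_2}\sum_t\|\hE_t\|_F^2$, since the truth is feasible; the gap between these is $O_P(1/T)$ because $\sum_t\vect(\hX_{t-1})\vect(\hE_t)'=O_P(\sqrt T)$, so neither Theorem~\ref{thm:ls} nor the MAR asymptotics is actually needed in the overfitting step.
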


\begin{remark}
{Theorem~\ref{thm:bic} continues to hold if the penalty {term} in EBIC \eqref{eq:bic} {(the second term)} is scaled by any positive constant $c$. {The finite sample performance of EBIC may depend on such a constant.} The resampling based tuning method introduced by \cite{hallin2007} can be {adopted} 
here to determine $c$. The benefit of using a data-driven penalty can be significant when the dimension is high.}
\end{remark}

\begin{remark}
{If the second term in \eqref{eq:bic} is replaced by {$2(Td_1d_2)^{-1}[r_1(2d_1-r_1)+r_2(2d_2-r_2)]$}, the criterion is similar to AIC. Although it may not be consistent, it often works well in small sample \citep{shao1997asymptotic,brockwell:1991}. } 
\end{remark}

\section{Numerical Studies}
\label{sec:num}

\subsection{Simulations}
\label{sec:simulation}

In this section, we investigate the finite sample performance of the proposed estimation and rank determination procedures for the RRMAR models under various simulation setups. The simulation study consists of three parts. The first part is designed to compare the empirical behavior of the proposed alternating least square estimator $\hat\hA_i^{\hbox{\tiny ls}}$, labelled as RR.LS in the figures, and the alternating MLE $\hat\hA_i^{\hbox{\tiny cc}}$, labelled as RR.CC. The true ranks are taken as known. The least squares estimator without rank constraints (labelled as LSE) in \cite{chen2021autoregressive} is also included as a benchmark for comparison. In the second part, we report the coverage probabilities of the confidence intervals constructed based on Theorem~\ref{thm:ls}, Theorem~\ref{thm:cc} and Corollary~\ref{cor:svd}. The third part examines the rank determination based on the EBIC proposed in Section~\ref{sec:rank}. We also experiment with rank selection by rolling forecasting.

For given dimensions $d_i$ and ranks $k_i$, the observed
data $\hX_t$ are simulated according to model \eqref{eq:marrr}. The matrix $\hA_1$ is generated according to $\hA_1=\hQ_1\Lambda\hQ_2'$, where the entries of the $k_1\times k_2$ diagonal matrix $\Lambda$ are sampled from the uniform distribution over the interval $[0.5,1.5]$, and the $d_1\times k_1$ orthonormal matrices $\hQ_1$ and $\hQ_2$ are generated randomly from the Haar distribution. 
The matrix $\hA_2$ is generated in the same way. The two matrices $\hA_1$ and $\hA_2$ are then rescaled so that $\rho:=\rho(\hA_1)\rho(\hA_2)<1$ and $||\hA_1||_F=1$. 
Throughout all the simulation studies, two different settings of the covariance structure of the innovation matrix $\hE_t$ are considered:
\begin{enumerate}
\item[(I)] The covariance matrix $\Sigma_e=\Cov(\vect(\hE_t))$ is randomly generated according to $\Sigma_e= \h{Q}\Lambda\h{Q}'$, where the entries of the diagonal matrix $\Lambda$ are equally spaced over $[1,10]$, and $\h{Q}$ is a random orthogonal matrix generated from the Haar distribution.
\item [(II)] The covariance matrix $\Sigma_e$ takes the form \eqref{eq:ocov}, where each of $\Sigma_1$ and $\Sigma_2$ is generated in the same way as the $\Sigma_e$ in Setting I, except that the diagonal entries of $\Lambda$ are equally spaced over $[1,5]$.
\end{enumerate}
For a particular simulation setting with multiple repetitions, the
matrices $\hA_i$ and $\Sigma_e$ are fixed.

\begin{remark}
We randomize the population parameters in the simulation, while controlling the key parameters (e.g. 
$\rho(\hA_1)\rho(\hA_2)$ and the entries of the diagonal matrix $\Lambda$ in the noise covariance structure). The main reason is that the theoretical results show that the estimation performance mainly depend on these controlled parameters. Due to the large number of ``free parameters", it is difficult to construct a specific 
design of the parameter set, and the individual parameter is of less importance and interests. Also one won't be able to cover all possible combinations. 
In a way, the randomly generated population parameter set can be viewed as a ``representative" set.
\end{remark}

In the first experiment, for each configuration of sample size $T$, dimensions $d_i$ and ranks $k_i$, we repeat the simulation 100 times, and show a box plot of the estimation error
\begin{equation*}
  \log(\|\hat\hA_2\otimes\hat\hA_1-\hA_2\otimes\hA_1\|_F^2).
\end{equation*}
The spectral radius is fixed at $\rho=.75$. Figure~\ref{fig:simu1} and Figure~\ref{fig:simu2} use the {boxplots\footnote{{Each boxplot shows the distribution of the log errors from 100 repeated experiments in each setting. The box in the middle shows the range of $1$-st quartile to $3$-rd quantile, and the stems extends to the minimum of 1.5 interquartile range (IQR) and the maximum/minimum of the data. The (rare) dots show the observations outside the 1.5 IRQ range (often considered as outliers).}}} to compare LSE, RR.LS and RR.CC, under Settings (I) and (II) respectively.  It is seen from both figures that the advantage of RR.LS and RR.CC over LSE gets bigger as the dimensions grow higher. On the other hand, for fixed dimensions, this advantage becomes smaller as the ranks increase. 
From Figure~\ref{fig:simu1} we also find that, under Setting (I) of the error covariance matrix, RR.CC performs similarly as RR.LS does, even though the covariance matrix $\Sigma_e$ 
does not have the form \eqref{eq:ocov}, which is assumed for RR.CC. On the other hand, Figure~\ref{fig:simu2} clearly demonstrates the advantage of RR.CC over RR.LS under setting (II), when $\Sigma_e$ does bear the form \eqref{eq:ocov}. {In addition, it is seen from Figure 2 that the differences (in log error) between LSE and RR.LS/RR.CC remain roughly the same for difference sample sizes. This confirms the results in Theorem 3, which states that the gain in efficiency (in terms of reduction in variance of the estimators) is proportional to $T$, hence the absolute difference between the full model in Chen, Xiao and Yang (2019) and the reduced-rank model becomes smaller as $T$ increases, but their ratio or log difference remain roughly the same.}

\begin{figure}[h]
    \centering
    \includegraphics[width=2.1in]{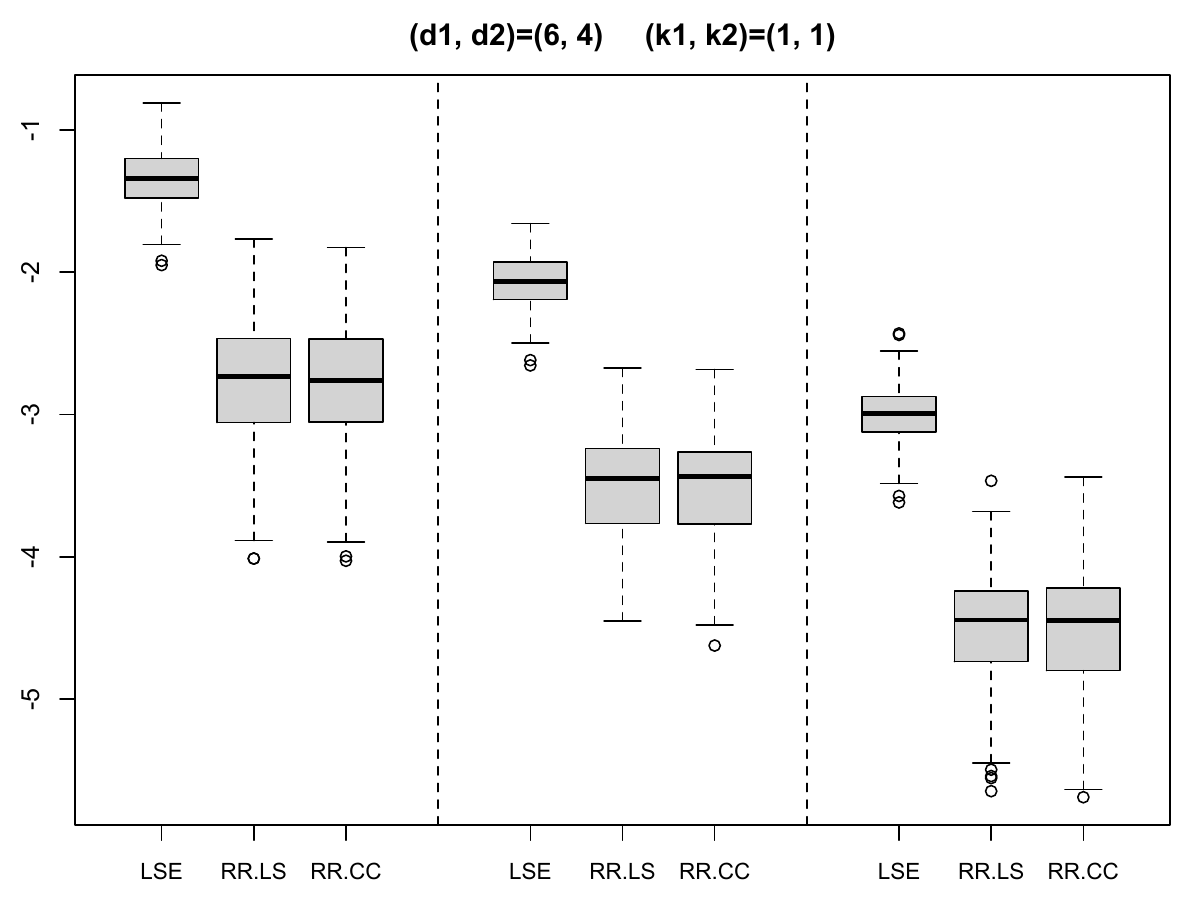}~~\includegraphics[width=2.1in]{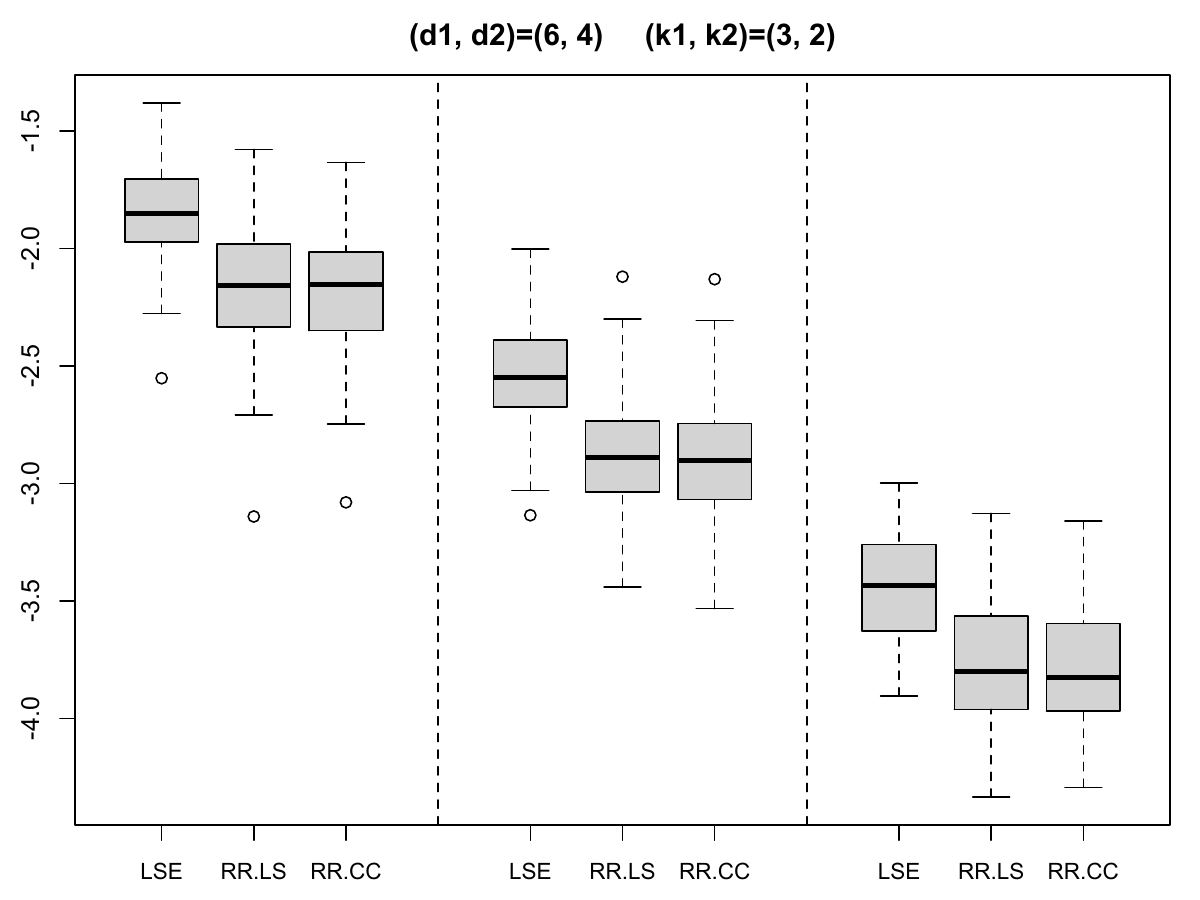}~~\includegraphics[width=2.1in]{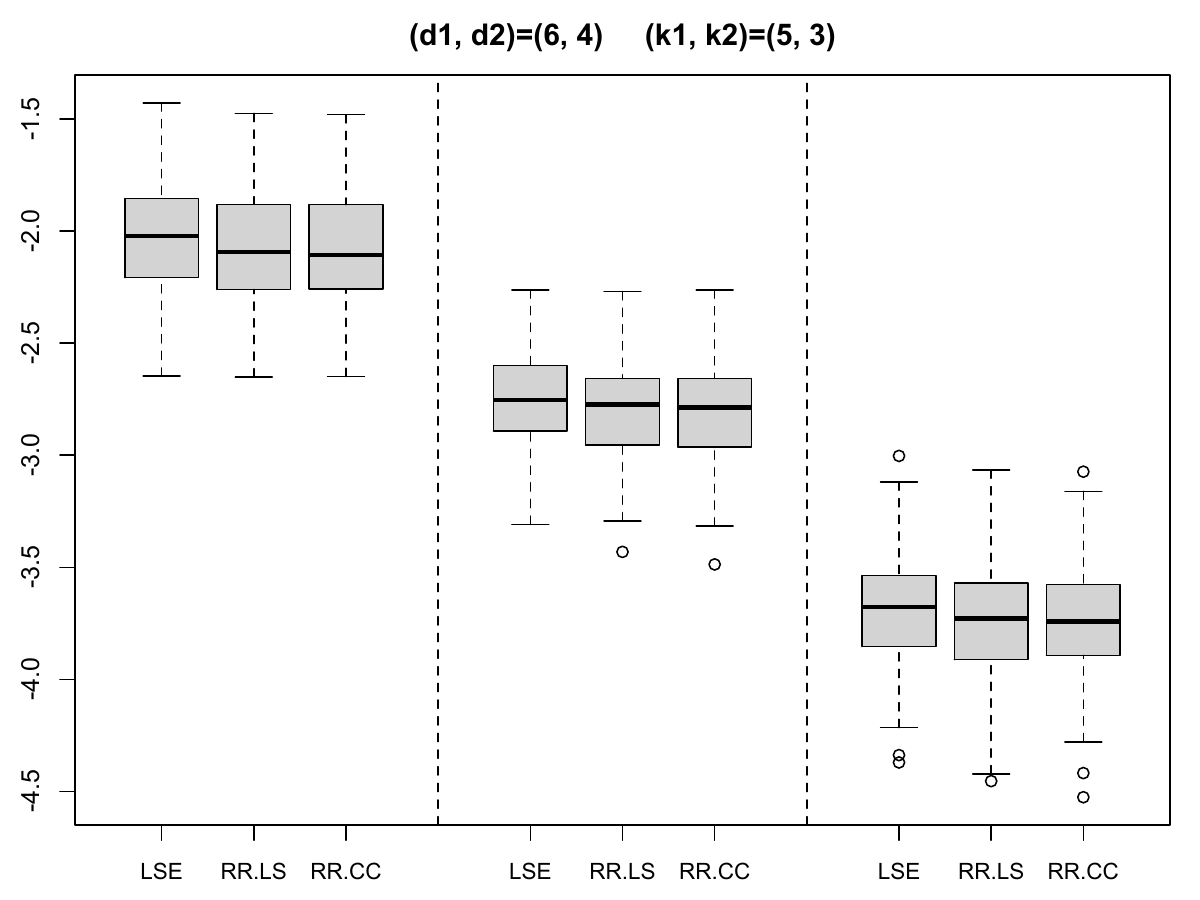}\\
    \includegraphics[width=2.1in]{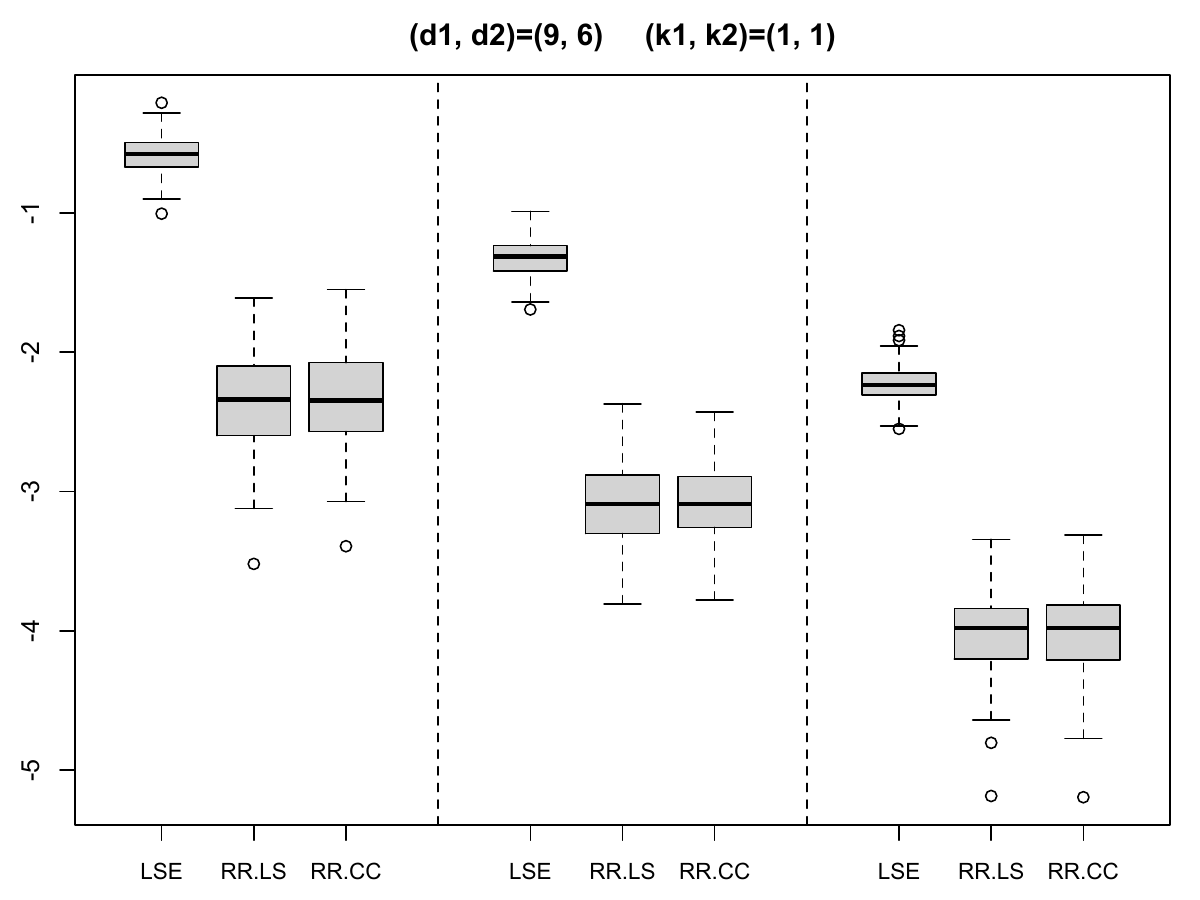}~~\includegraphics[width=2.1in]{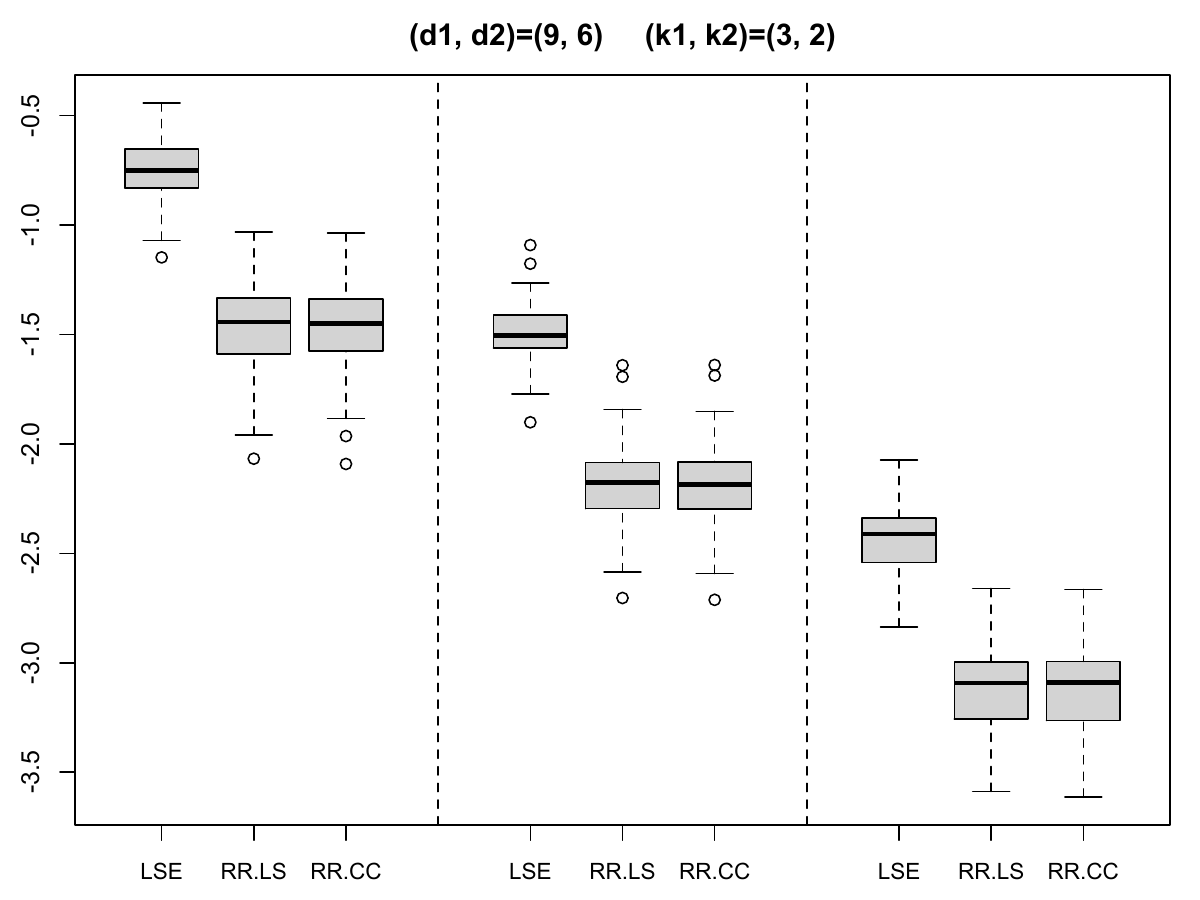}~~\includegraphics[width=2.1in]{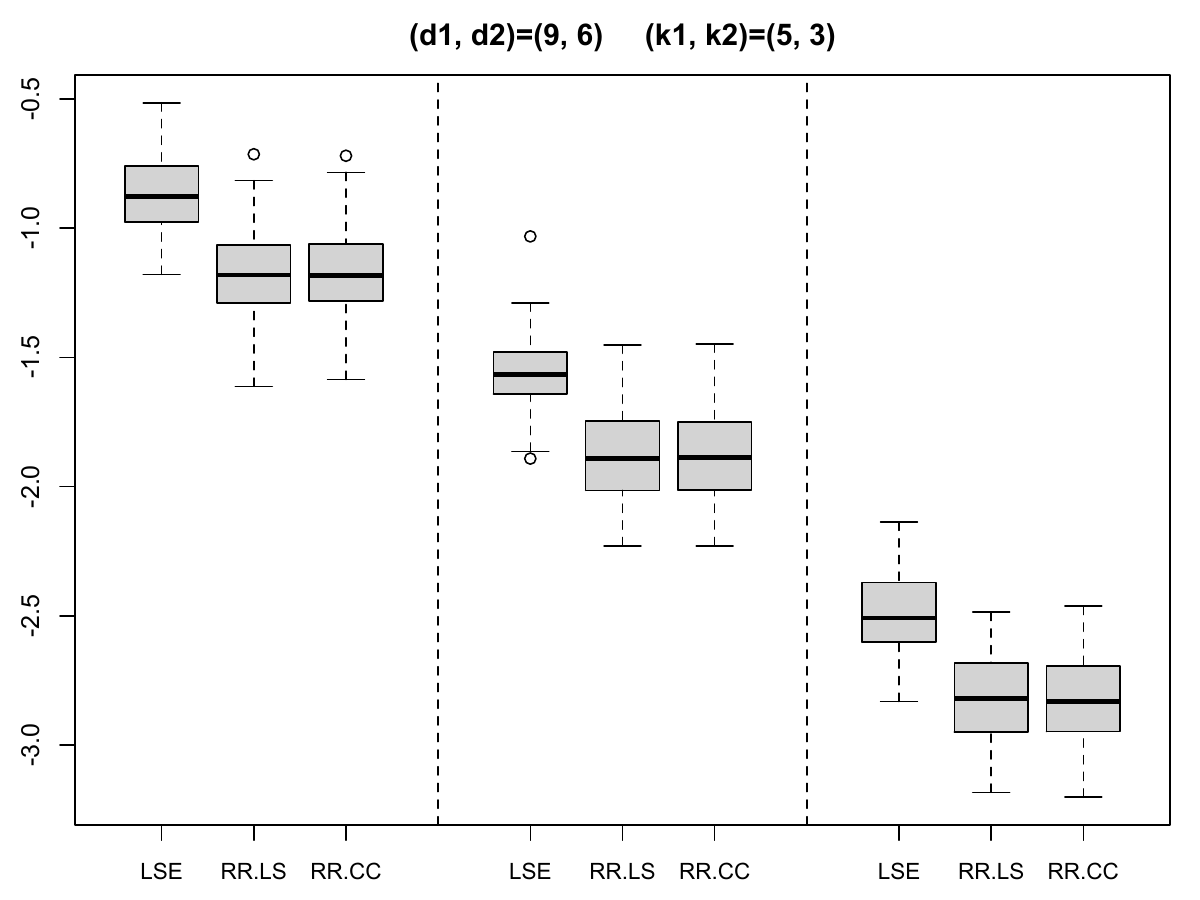}\\
    \includegraphics[width=2.1in]{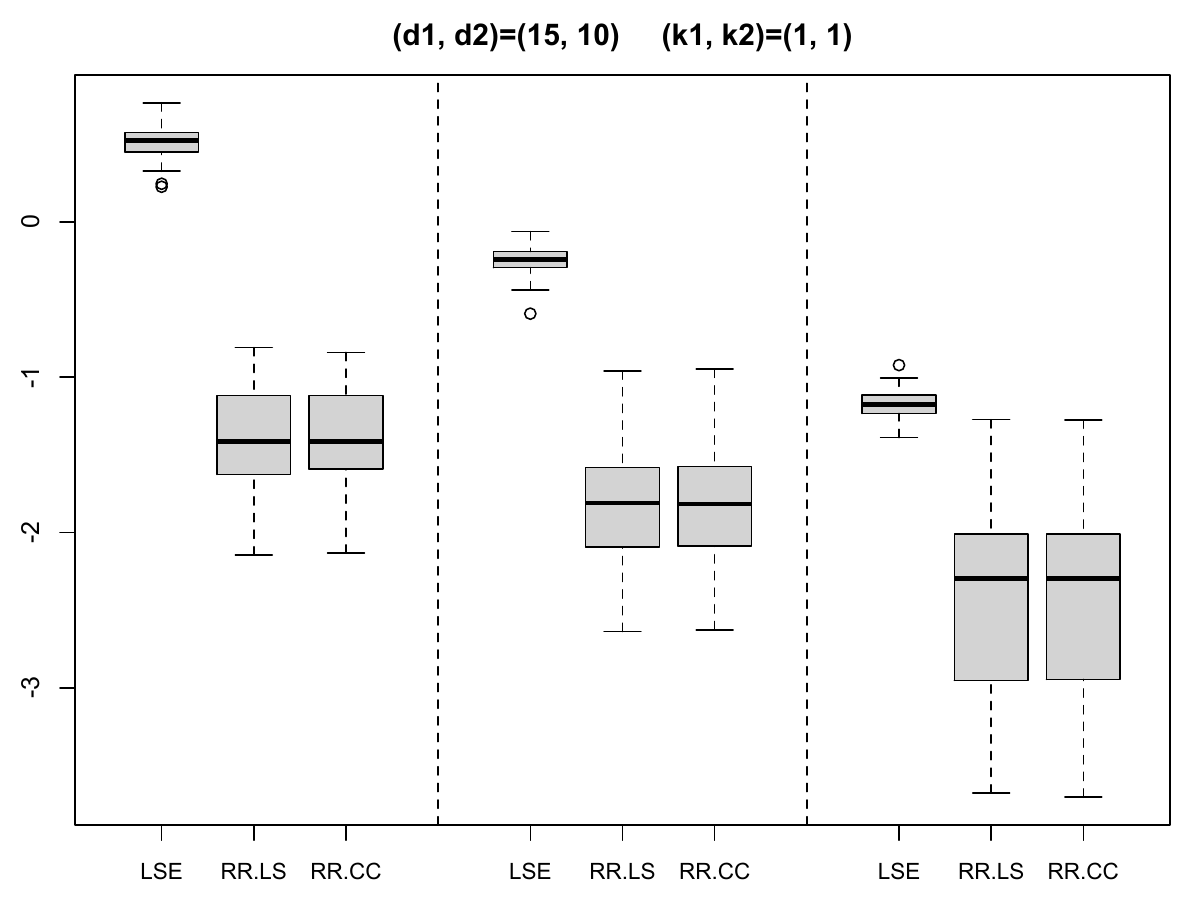}~~\includegraphics[width=2.1in]{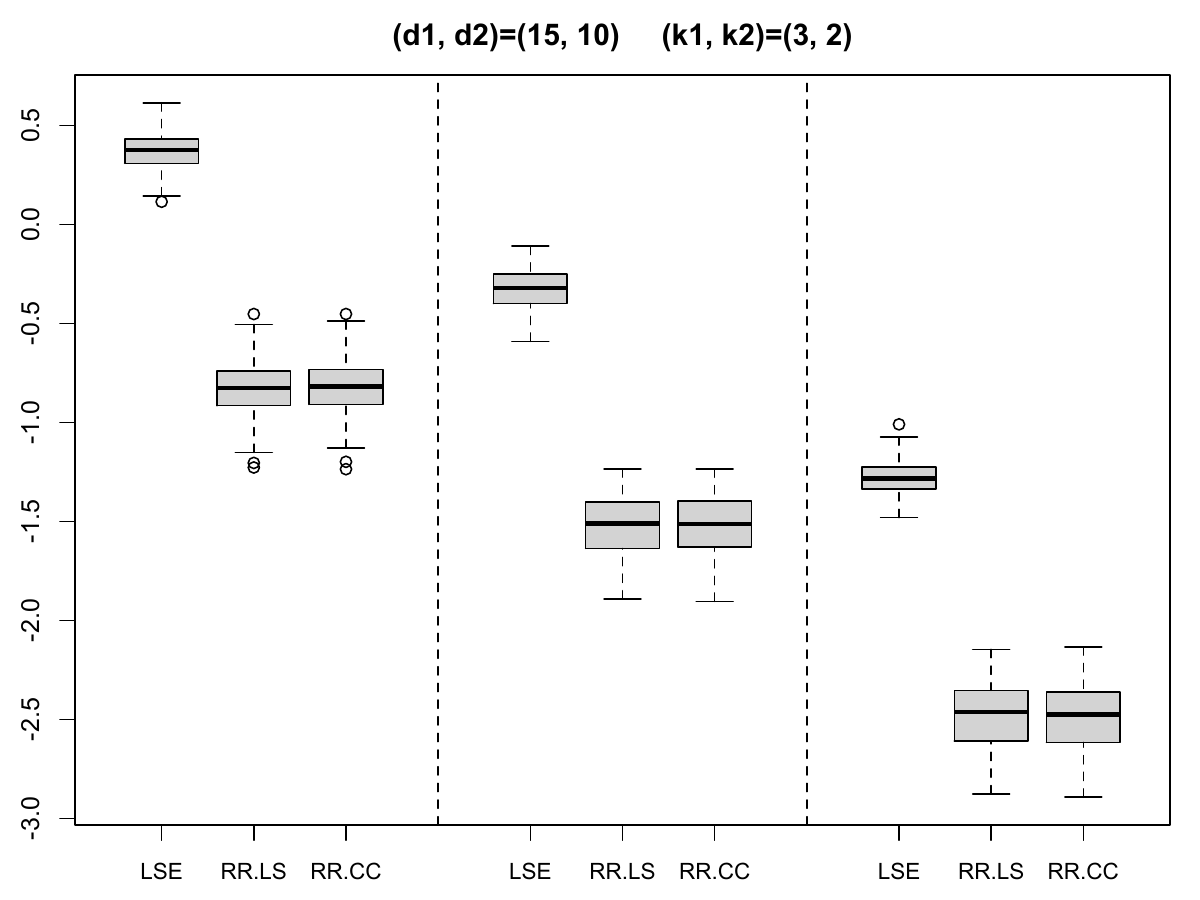}~~\includegraphics[width=2.1in]{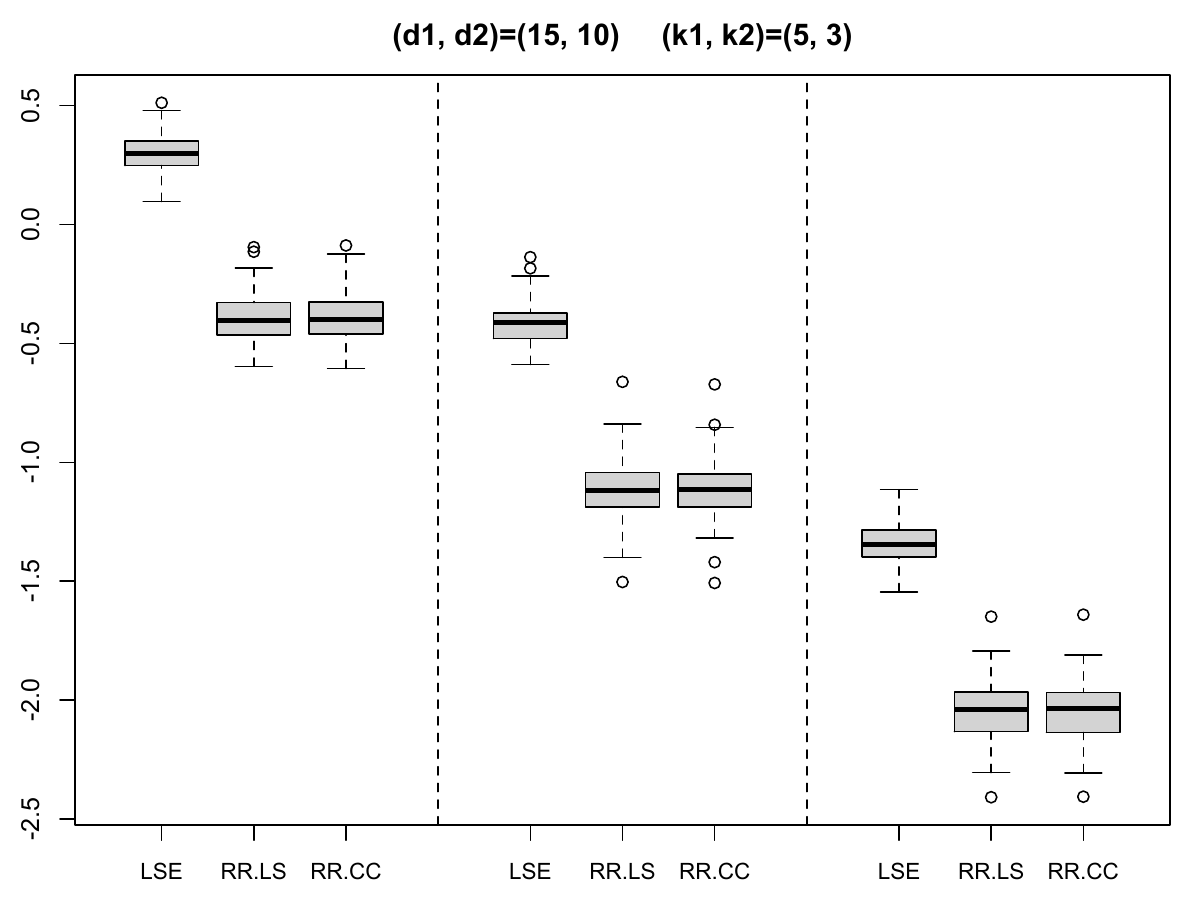}\\
    \caption{Comparison of LSE, RR.LS and RR.CC. The three panels in each figure correspond to sample sizes 200, 400 and 1000 respectively. The errors are generated according to Setting I.}
    \label{fig:simu1}
\end{figure}

\begin{figure}[h]
    \centering
    \includegraphics[width=2.1in]{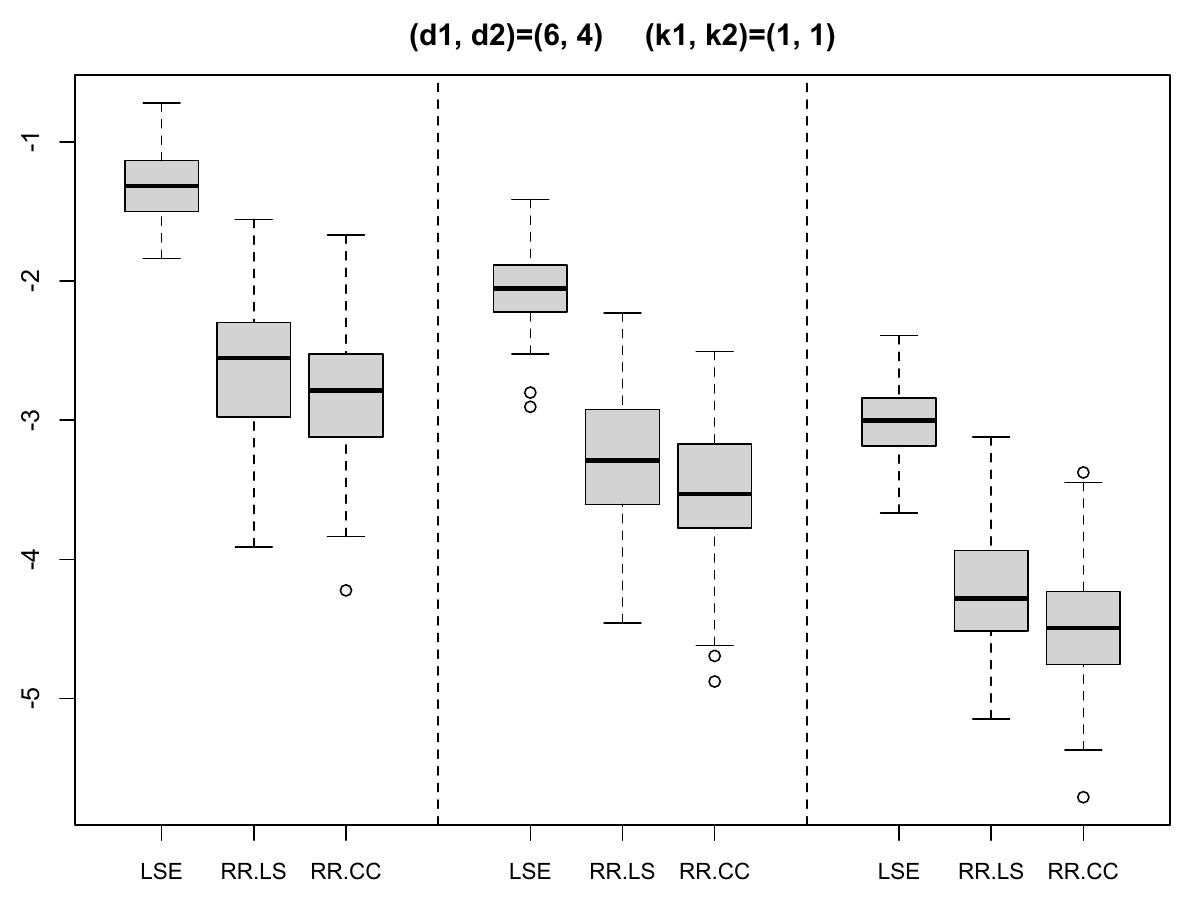}~~\includegraphics[width=2.1in]{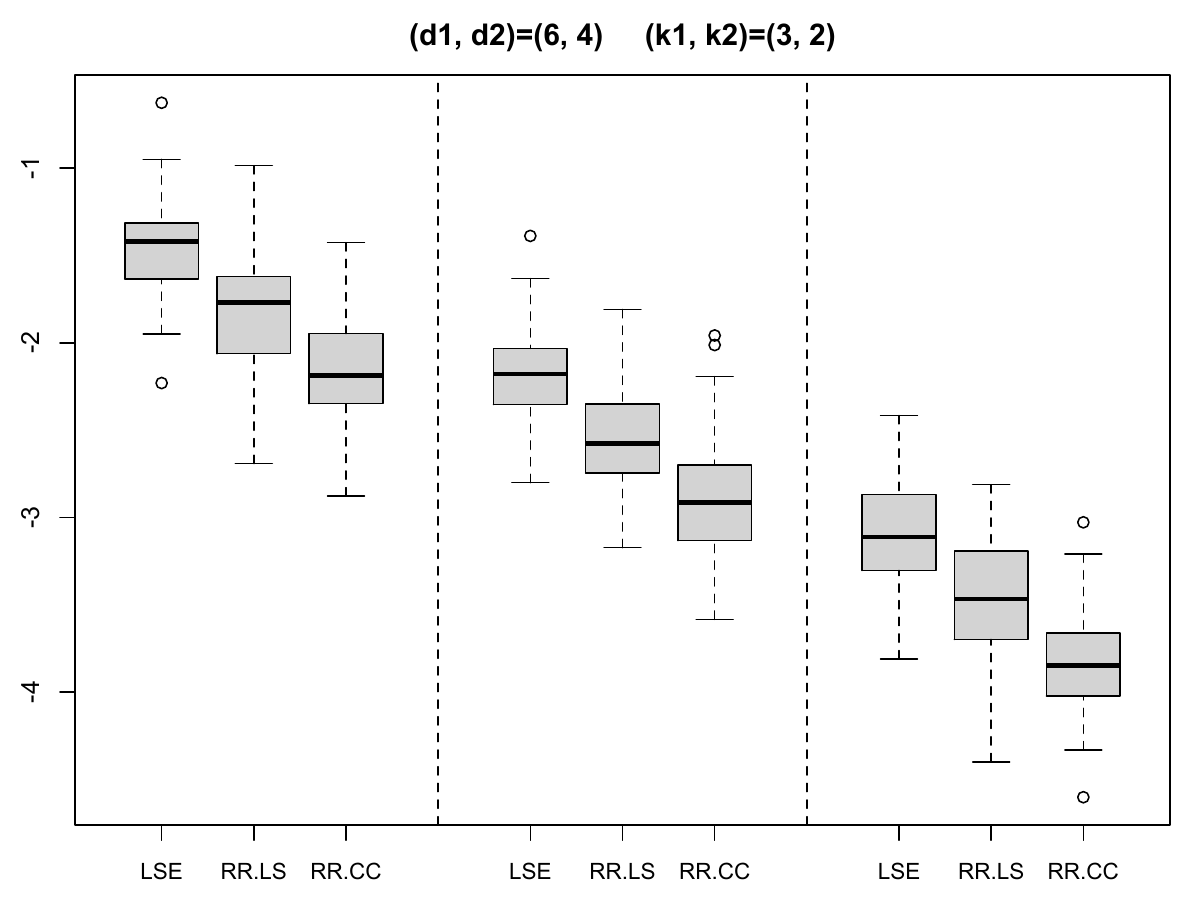}~~\includegraphics[width=2.1in]{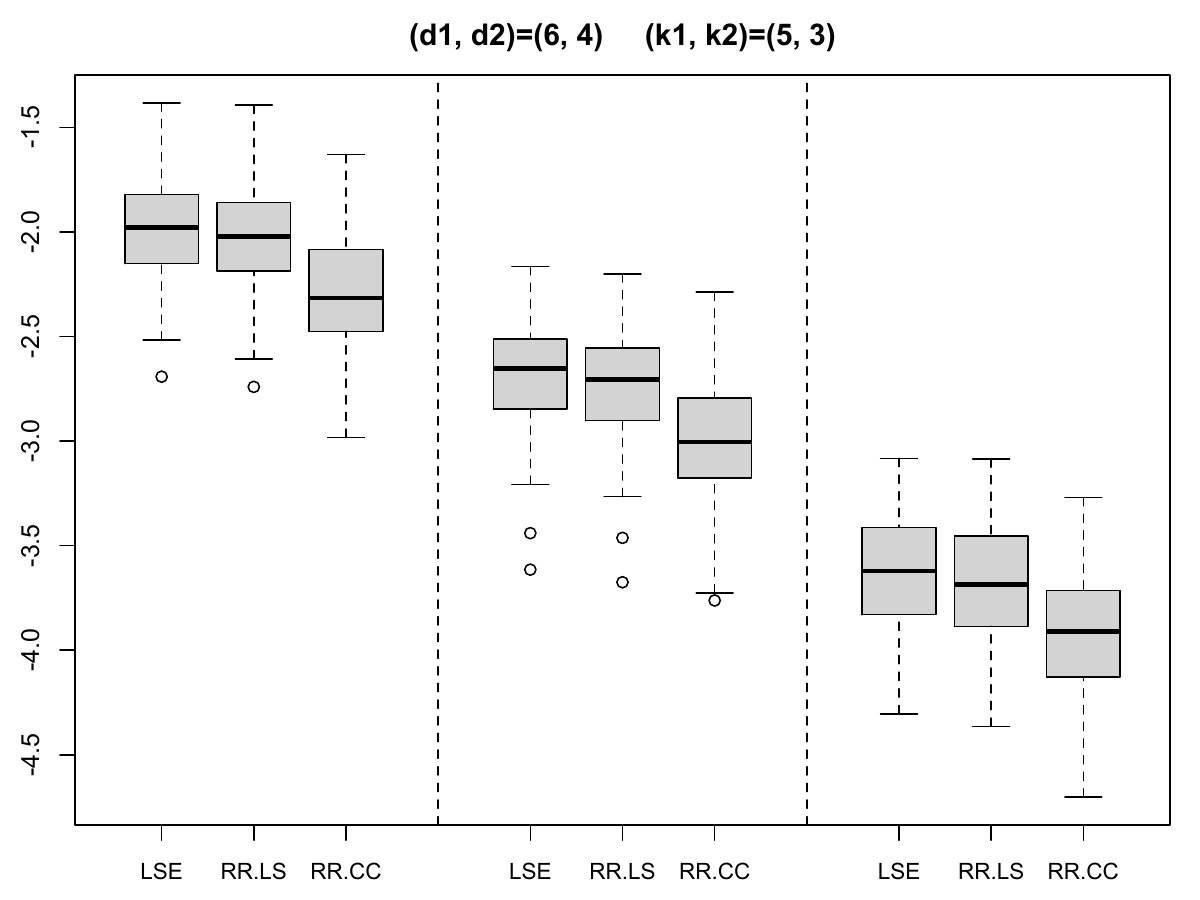}\\
    \includegraphics[width=2.1in]{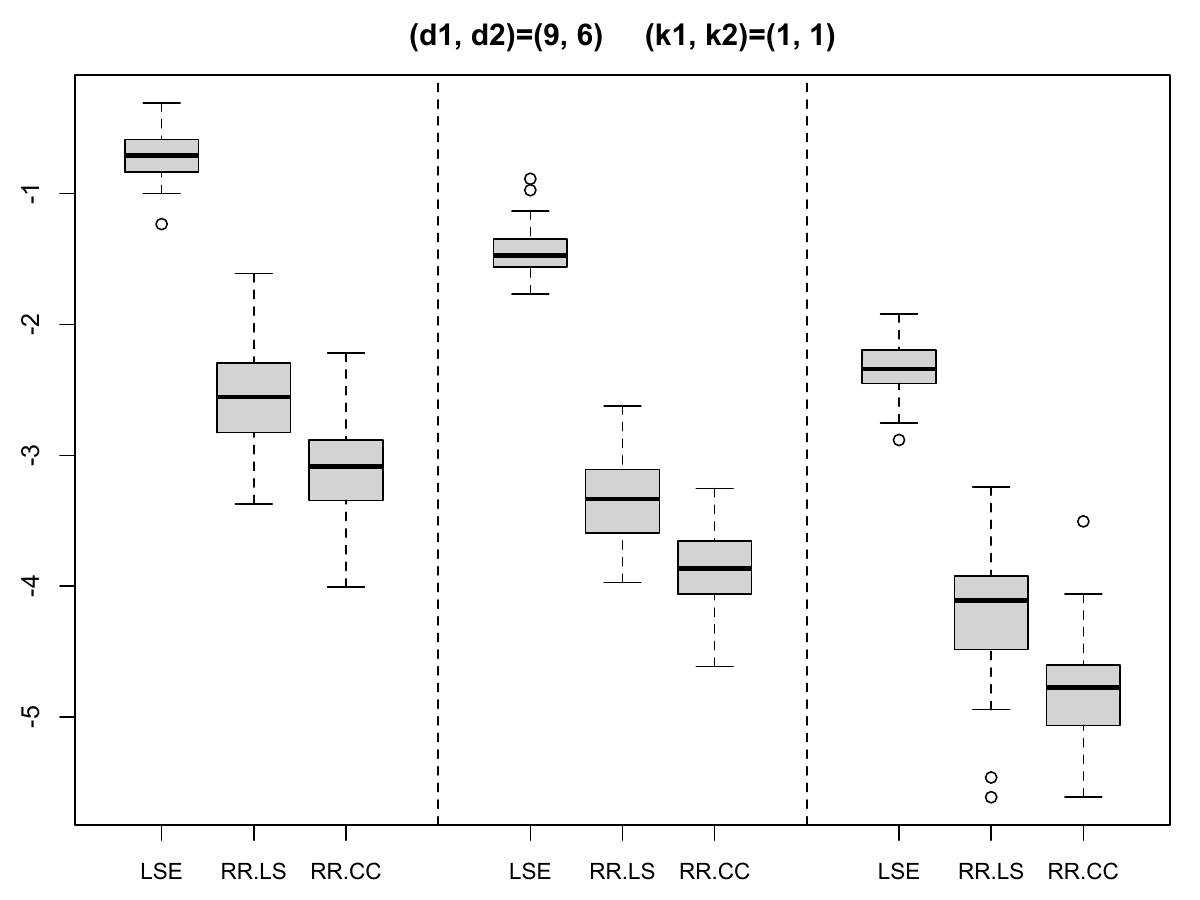}~~\includegraphics[width=2.1in]{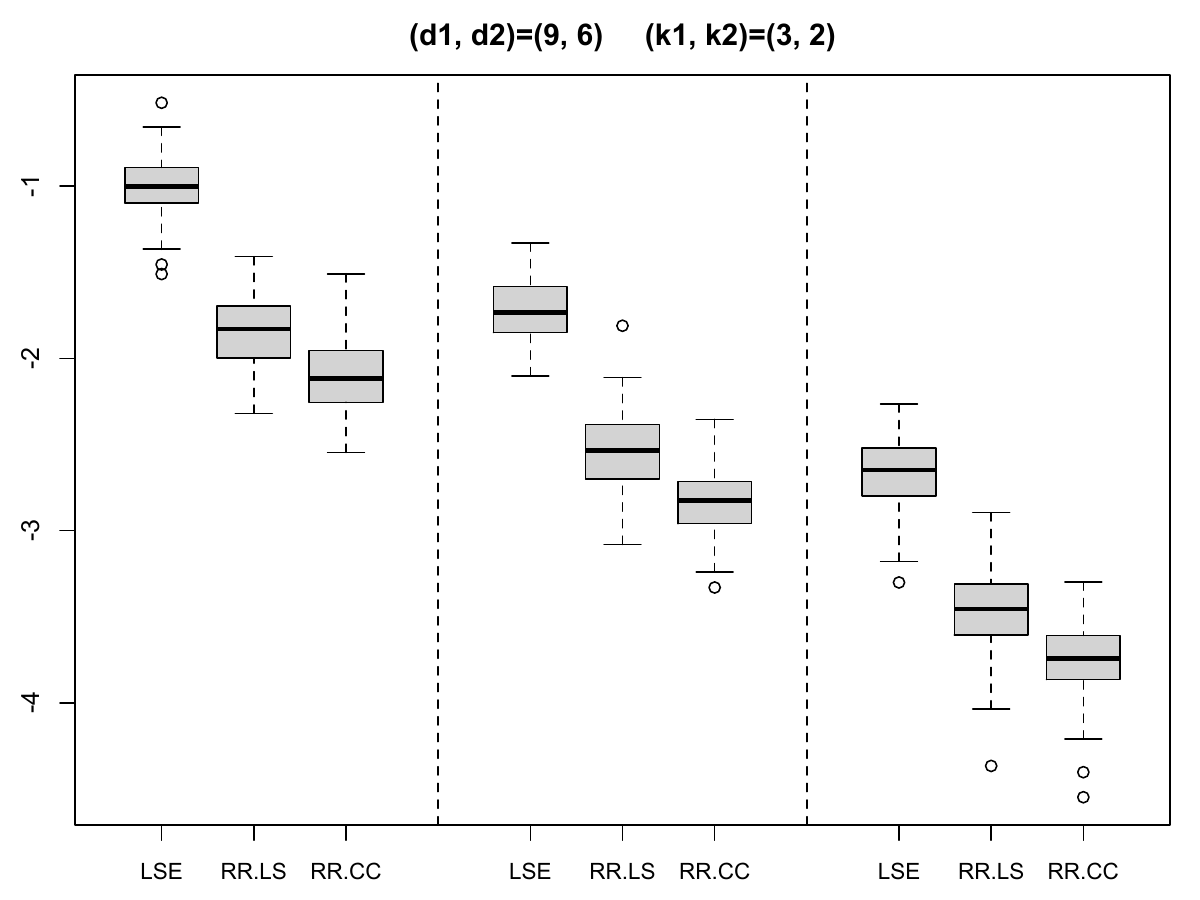}~~\includegraphics[width=2.1in]{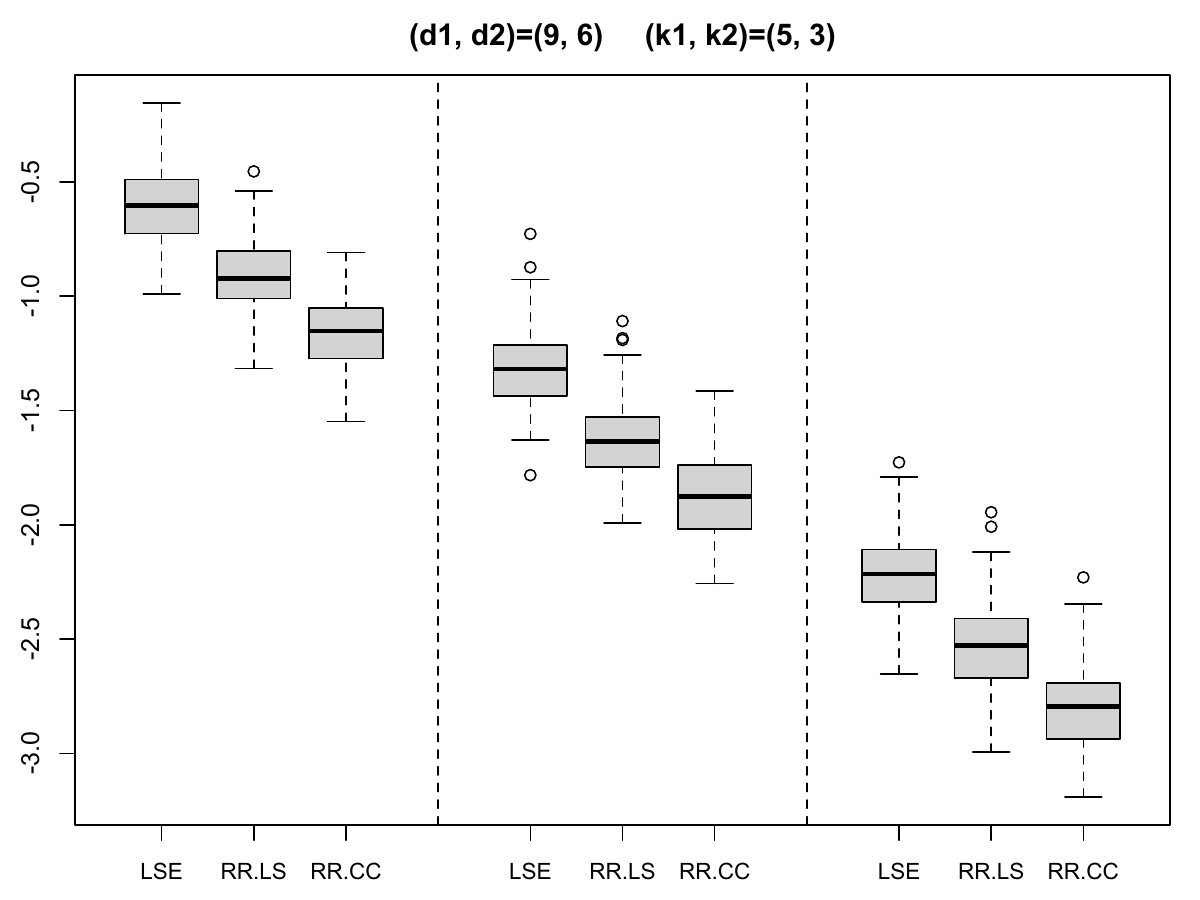}\\
    \includegraphics[width=2.1in]{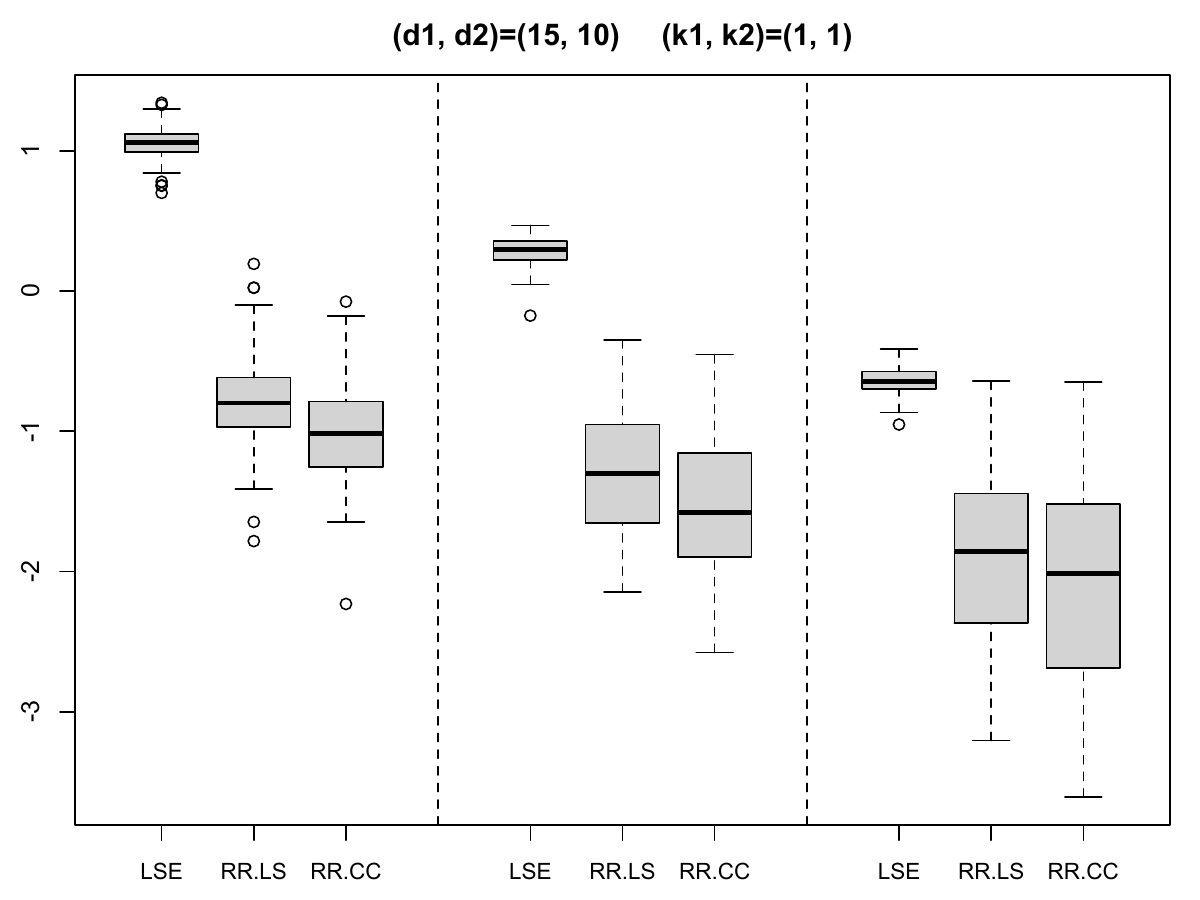}~~\includegraphics[width=2.1in]{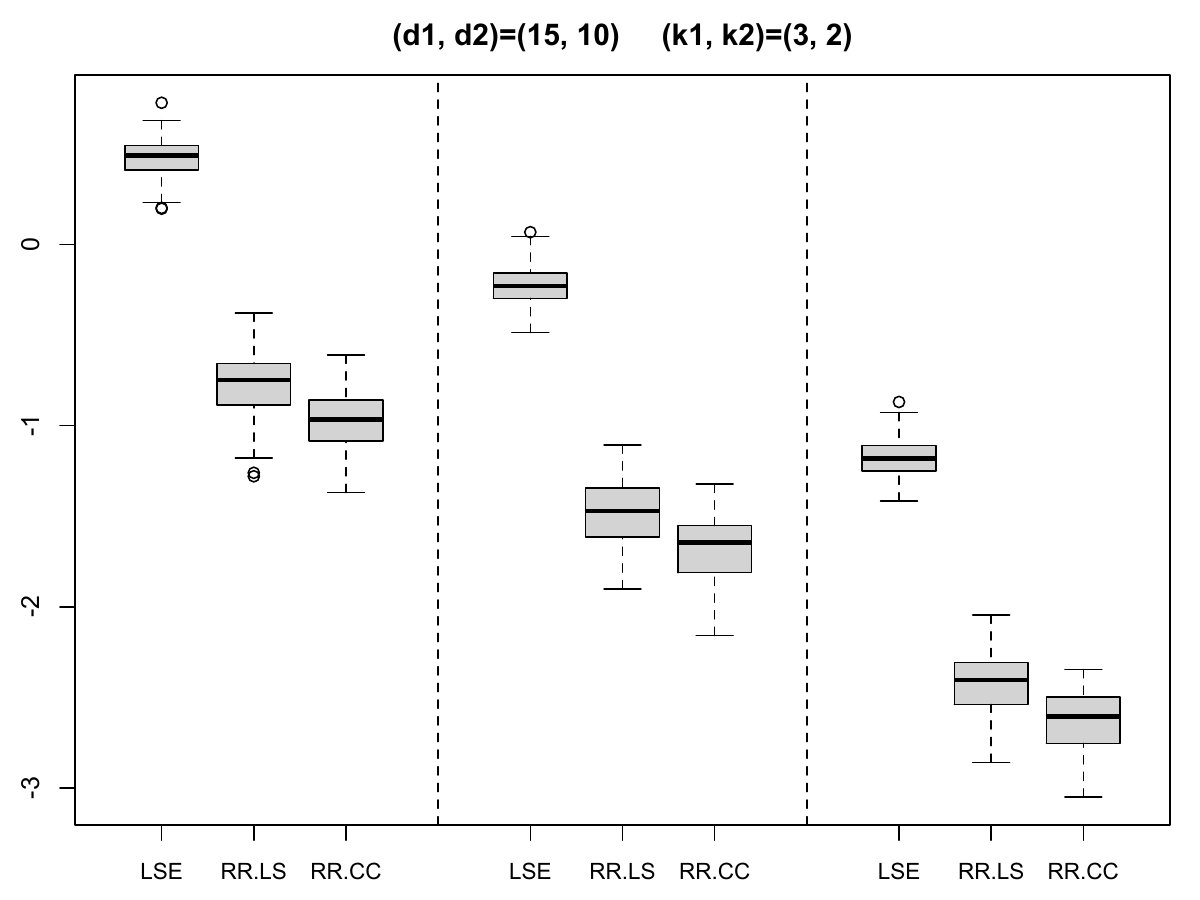}~~\includegraphics[width=2.1in]{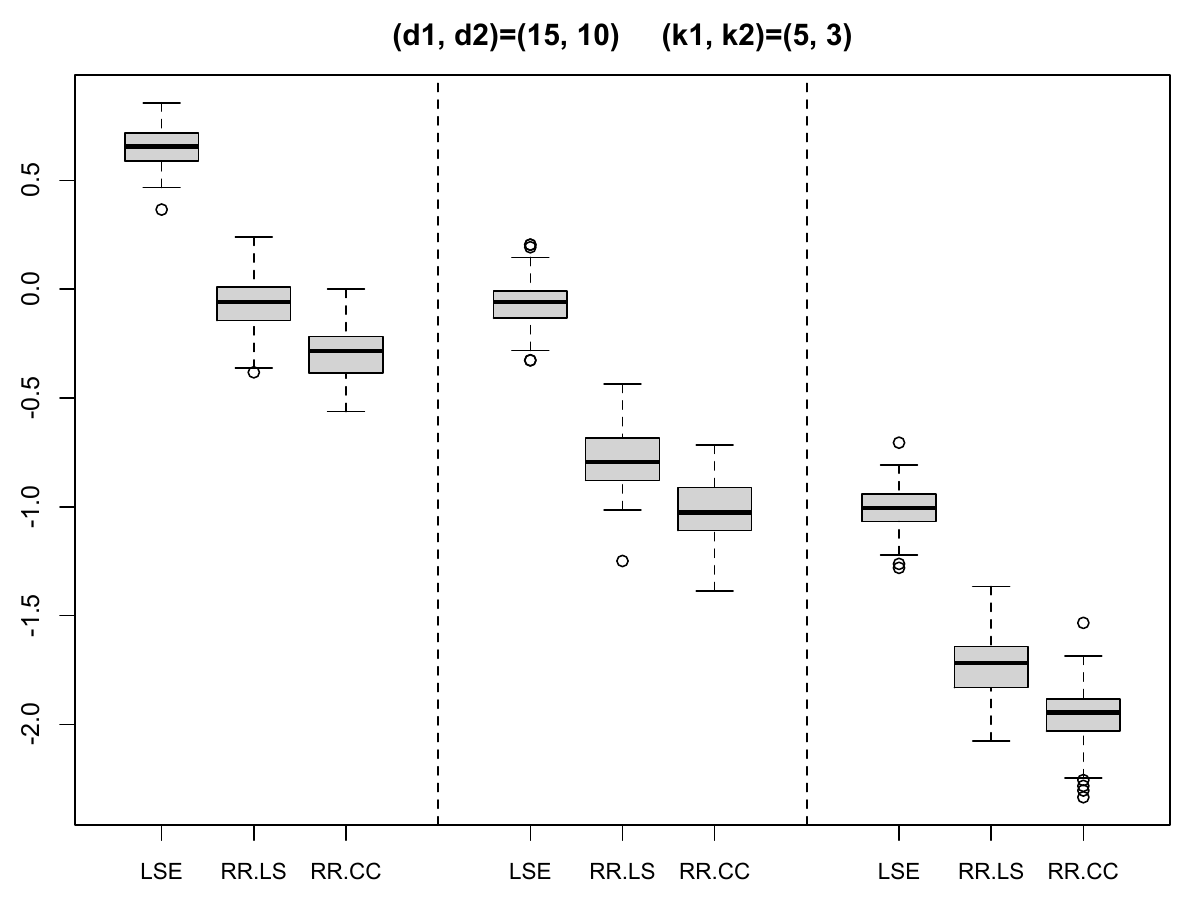}\\
    \caption{Comparison of LSE, RR.LS and RR.CC. The three panels in each figure correspond to sample sizes 200, 400 and 1000 respectively. The errors are generated according to Setting II.}
    \label{fig:simu2}
\end{figure}

In the second part, we consider the coverage probabilities of the confidence intervals based on Theorem~\ref{thm:ls}, Theorem~\ref{thm:cc} and Corollary~\ref{cor:svd}. In this experiment the true ranks are fixed at $k_1=3$ and $k_2=2$. We run simulations 1000 times for sample size $T=200,400,1000$, and consider the cases of
dimension $(d_1,d_2)=(6,4),(9,6),(15,10)$ and $\rho=0.25,0.5,0.75$. For RR.LS, the 
error covariance matrix settings (I) and (II) are considered. For RR.CS, we consider two `correct' settings (II') and (II), where in (II') we use $\Sigma_e=\hI_{d_2}\otimes \hI_{d_1}$.
The confidence intervals of the entries of the matrices $\hA_1,\hA_2, \hU_1,\hV_1, \hU_2,\hV_2$ are constructed. Table \ref{tab:ci} shows the percentage that the true parameters fall within their corresponding marginal 95\% confidence intervals. Each percentage records the average empirical coverage over all involved matrix entries.
It can be seen from the table that the coverage is quite accurate, especially when the sample size is large ($T=1000$). The empirical coverage probabilities are closer to the nominal ones under Setting (I) for RR.LS and Setting (II') for RR.CS than those under Setting (II). This is probably due to the fact that the singulars values of $\Sigma_e$ under Setting (II) are more spread out, making $\Sigma_e$ more different from the scalar matrix. 

\begin{table}[htbp]
\centering
\small
\begin{tabular}{c|cc|rrr | rrr | rrr | rrr}
\toprule
& & & \multicolumn{6}{c|}{RR.LS} & \multicolumn{6}{c}{RR.CC} \\
\hline 
& & {Setting} &\multicolumn{3}{c|}{I} & \multicolumn{3}{c|}{II} &\multicolumn{3}{c|}{II$'$} & \multicolumn{3}{c}{II} \\ 
\hline
& \multicolumn{2}{r|}{$T$}  & $200$ & $400$ & $1000$ & $200$ & $400$ & $1000$ & $200$ & $400$ & $1000$ & $200$ & $400$ & $1000$\\ 
&$\rho$ &$(d_1,d_2)$ & &&&&&&&&&&&\\ 
\hline
\multirow{9}{*}{$(\hA_1,\hA_2)$} &\multirow{3}{*}{$0.75$} 
& $(6,4)$ & 93.5 & 93.8 & 94.3 & 92.1 & 92.3 & 92.5 & 93.9 & 94.2 & 95.0 & 91.0 & 91.3 & 91.3 \\  
& & $(9,6)$ & 94.3 & 94.6 & 95.3 & 92.2 & 92.6 & 92.8 & 94.0 & 94.4 & 95.0 & 91.8 & 92.1 & 92.4 \\   
& & $(15,10)$ & 94.5 & 94.7 & 95.5 & 92.5 & 92.8 & 93.3 & 93.8 & 94.1 & 94.9 & 92.2 & 92.5 & 93.0 \\  
\cline{2-15}
& \multirow{3}{*}{$0.5$} 
& $(6,4)$ & 93.5 & 93.8 & 94.5 & 91.4 & 91.9 & 92.1 & 93.7 & 94.0 & 94.9 & 90.5 & 91.0 & 91.3 \\
& & $(9,6)$ & 94.3 & 94.6 & 95.4 & 91.7 & 92.4 & 92.6 & 93.8 & 94.3 & 95.0 & 91.2 & 92.0 & 92.2 \\    
& & $(15,10)$ & 94.4 & 94.7 & 95.4 & 92.3 & 92.8 & 93.2 & 93.7 & 94.1 & 94.9 & 92.0 & 92.5 & 92.9 \\ 
\cline{2-15}
& \multirow{3}{*}{$0.25$}
& $(6,4)$ & 92.8 & 93.7 & 94.6 & 88.0 & 90.1 & 91.3 & 92.6 & 93.6 & 94.7 & 87.2 & 89.5 & 90.8 \\  
& & $(9,6)$ & 93.3 & 94.4 & 95.3 & 88.8 & 90.9 & 91.9 & 92.8 & 93.8 & 94.8 & 88.4 & 90.7 & 91.8 \\   
& & $(15,10)$ & 93.8 & 94.5 & 95.3 & 90.6 & 92.0 & 92.9 & 93.1 & 93.8 & 94.8 & 90.4 & 91.8 & 92.6 \\  
\hline
\multirow{9}{*}{$(\hU_1,\hV_1)$} &\multirow{3}{*}{$0.75$} 
& $(6,4)$ & 94.4 & 94.2 & 94.2 & 92.4 & 92.7 & 92.5 & 94.4 & 94.3 & 94.5 & 91.3 & 91.5 & 91.1 \\ 
& & $(9,6)$ & 95.0 & 95.0 & 95.1 & 92.4 & 92.8 & 92.6 & 94.5 & 94.4 & 94.8 & 92.0 & 92.2 & 92.3 \\
& & $(15,10)$ & 94.9 & 95.2 & 95.3 & 92.7 & 92.8 & 93.3 & 94.2 & 94.5 & 94.8 & 92.8 & 93.0 & 93.3 \\  
\cline{2-15}
& \multirow{3}{*}{$0.5$} 
& $(6,4)$ & 94.6 & 94.2 & 94.1 & 92.9 & 92.5 & 92.2 & 94.5 & 94.3 & 94.2 & 91.8 & 91.5 & 91.1 \\ 
& & $(9,6)$ & 95.1 & 95.0 & 95.1 & 92.7 & 93.0 & 92.6 & 94.8 & 94.5 & 94.9 & 92.2 & 92.4 & 92.3 \\ 
& & $(15,10)$ & 94.9 & 95.0 & 95.4 & 92.8 & 92.7 & 93.0 & 94.1 & 94.4 & 94.8 & 92.7 & 92.6 & 92.8 \\ 
\cline{2-15}
& \multirow{3}{*}{$0.25$}
& $(6,4)$ & 95.2 & 94.8 & 94.7 & 93.0 & 92.6 & 92.4 & 95.0 & 94.6 & 94.4 & 92.3 & 91.7 & 91.6 \\  
& & $(9,6)$ & 95.3 & 95.3 & 95.3 & 92.5 & 92.8 & 92.6 & 94.9 & 94.8 & 94.9 & 92.0 & 92.4 & 92.2 \\  
& & $(15,10)$ & 95.0 & 95.0 & 95.1 & 92.4 & 92.4 & 92.6 & 94.3 & 94.2 & 94.5 & 92.3 & 92.3 & 92.6 \\  
\hline
\multirow{9}{*}{$(\hU_2,\hV_2)$} &\multirow{3}{*}{$0.75$} 
& $(6,4)$ & 94.1 & 94.3 & 94.2 & 91.6 & 91.7 & 91.1 & 94.5 & 94.5 & 94.3 & 91.2 & 91.3 & 90.6 \\ 
& & $(9,6)$ & 94.7 & 94.6 & 95.5 & 92.0 & 91.9 & 92.5 & 94.4 & 94.2 & 94.8 & 91.9 & 92.1 & 92.9 \\
& & $(15,10)$ & 95.1 & 95.6 & 95.3 & 92.6 & 93.2 & 93.1 & 94.3 & 95.0 & 94.7 & 93.2 & 93.7 & 93.6 \\  
\cline{2-15}
& \multirow{3}{*}{$0.5$} 
& $(6,4)$ & 94.3 & 94.4 & 94.6 & 90.9 & 91.4 & 91.0 & 94.1 & 94.7 & 94.6 & 90.8 & 91.1 & 90.7 \\    
& & $(9,6)$ & 94.8 & 94.7 & 95.3 & 91.8 & 91.8 & 92.3 & 94.4 & 94.2 & 94.5 & 91.5 & 91.9 & 92.4 \\  
& & $(15,10)$ & 95.0 & 95.5 & 95.2 & 92.4 & 93.1 & 92.8 & 94.2 & 94.9 & 94.6 & 92.7 & 93.4 & 93.1 \\  
\cline{2-15}
& \multirow{3}{*}{$0.25$}
& $(6,4)$ & 93.7 & 93.7 & 94.8 & 88.8 & 89.8 & 90.6 & 93.3 & 93.9 & 94.7 & 88.3 & 89.9 & 91.0 \\   
& & $(9,6)$ & 93.8 & 94.7 & 95.3 & 89.4 & 91.0 & 91.9 & 93.5 & 94.1 & 94.4 & 89.2 & 91.2 & 92.1 \\  
& & $(15,10)$ & 94.6 & 95.6 & 95.1 & 91.0 & 92.5 & 92.4 & 93.7 & 94.8 & 94.5 & 91.5 & 92.9 & 92.7 \\  
\bottomrule
\end{tabular}
\caption{Empirical coverage probabilities (in percentage) of the 95\% confidence intervals.}
\label{tab:ci}
\end{table}

The third part of the simulation considers the performance of the rank determination procedure using the joint EBIC \eqref{eq:bic} and the separate EBIC. Simulations are conducted under various configurations of the sample size, dimensions, ranks and signal strength $\rho$, and the empirical probabilities of selecting the correct ranks out of 100 repetitions
are recorded. When the true ranks are $k_1=k_2=1$, or when the signal strength $\rho$ is not too small ($\rho\geq 0.1$), both selection procedures are able to determine the ranks 
perfectly. Therefore, we choose to report in Table~\ref{tab:bic} only the results for the configurations that are more challenging, with $\rho=.15$ when the true ranks are $(3,2)$, and $\rho=.25$ when the true ranks are $(5,3)$. A closer look of the simulation results (not shown in the table) reveals that, in these very low signal to noise ratio cases, both EBIC procedures tend to select ranks smaller than the true ranks. 
However, larger sampling sizes significantly enhance the performance. Moreover, when the autocorrelation strength $\rho$ is larger than those reported in Table~\ref{tab:bic}, both procedures make nearly perfect choices of the ranks for all configurations and covariance settings. We also note that the performances of the joint and separate procedures are almost the same.

It is also observed that the performance under the error covariance setting (II) is worse than that under Setting (I).
This is due to the design of $\Sigma_e$ in these two settings. The eigenvalues of $\Sigma_e$ spread over $[1,10]$ in Setting (I), and over $[1,25]$ in Setting (II). Therefore, both the estimation and model selection are more challenging under Setting (II).

We also experiment with using rolling forecasting to choose the ranks. We consider the range $1\le r_i \le \min\{d_i,k_i+2\}$, $i=1,2$, as the candidate set of $\rk(\hA_i)$. For each configuration of $(\rho,T,k_1,k_2,d_1,d_2)$, we choose $T/4$ as the rolling forecast origin, calculate the entry-wise squared forecast error (SFE) of the one-step ahead prediction of $\hX_{s+1}$, $T/4\le s\le T$, then take the average over the $d_1d_2$ series and over the time, 
\[
\hbox{MSFE}(r_1,r_2)=\frac{1}{d_1d_2(3T/4)}\sum_{s=T/4}^{T-1}\sum_{i=1}^{d_1}\sum_{j=1}^{d_2} 
\left|\hat \hX_{s+1}^{(r_1,r_2)}[i,j]-\hX_{s+1}[i,j]\right|^2.
\]
The estimated ranks $(\hat k_1,\hat k_2)$ is the pair $(r_1,r_2)$ with the smallest MSFE$(r_1,r_2)$. Table~\ref{tab:forecast} shows the proportion of the correct selection out of 100 repetitions. 
It is seen that, although rolling forecast criterion still performs very well in most cases, it has a much higher variability than EBIC. It performs better than EBIC in the case $\rho=.25$ and $(k_1,k_2)=(5,3)$, when the sample size is small.

\begin{table}[H]
    \centering
    \begin{tabular}{cc r rrr r rrr}
    \toprule
        \multirow{2}{*}{}&\multirow{2}{*}{} && \multicolumn{3}{c}{$(r_1,r_2)=(3,2)$, $\rho=.15$} && \multicolumn{3}{c}{$(r_1,r_2)=(5,3)$, $\rho=.25$}  \\ 
	    \cmidrule(lr){4-6}\cmidrule(lr){8-10}
		& $(d_1,d_2)$ && 200 & 400 & \phantom{100}1000 && 200 & 400 & \phantom{100}1000 \\ 
		\midrule
		\multirow{3}{*}{I} 
		& $(6,4)$   && (.01, .01) & (.67, .68) & (1, 1) && (.36, .36) & (.96, .96) & (1, 1) \\ 
		& $(9,6)$   && (.15, .15) & (.90, .92) & (1, 1) && (.07, .07) & (.71, .71) & (1, 1) \\ 
		& $(15,10)$ && (1, 1) & (1, 1) & (1, 1) && (.00, .00) & (.47, .48) & (1, 1) \\ 
	\midrule
		\multirow{3}{*}{II} 
		& $(6,4)$   && (.02, .04) & (.38, .39) & (.99, .99) && (.25, .23) & (.92, .92) & (1, 1) \\ 
		& $(9,6)$   && (.00, .00) & (.45, .47) & (1, 1) && (.10, .09) & (.70, .69) & (1, 1) \\ 
		& $(15,10)$ && (.98, .99) & (1, 1) & (1, 1) && (.00, .00) & (.03, .04) & (1, 1) \\ 
	\bottomrule
    \end{tabular}
    \caption{Empirical probabilities of the correct rank selection by the EBIC. I, II stand for different covariance structures of $\hE_t$. For each cell, two numbers correspond to the joint and separate selections respectively. The second row shows the sample sizes.}
    \label{tab:bic}
\end{table}

\begin{table}[H]
    \centering
    \begin{tabular}{ccc r rrr r rrr r rrr}
    \toprule
		\multirow{2}{*}{}&\multirow{2}{*}{}&\multirow{2}{*}{} && \multicolumn{3}{c}{$(1,1)$} && \multicolumn{3}{c}{$(3,2)$} && \multicolumn{3}{c}{$(5,3)$}  \\ 
		\cmidrule(lr){5-7}\cmidrule(lr){9-11}\cmidrule(lr){13-15}
		&& $(d_1,d_2)$ && 200 & 400 & 1000 && 200 & 400 & 1000 && 200 & 400 & 1000 \\ 
		\midrule
		\multirow{6}{*}{$\rho=.5$} &\multirow{3}{*}{I} 
		 & $(6,4)$ && 0.94 & 0.98 & 0.98 && 0.86 & 0.91 & 0.92 && 0.68 & 0.74 & 0.78 \\ 
		&& $(9,6)$ && 0.99 & 1 & 1 && 0.97 & 0.99 & 0.99 && 0.94 & 0.98 & 1 \\ 
		&& $(15,10)$ && 1 & 1 & 1 && 0.99 & 1 & 1 && 1 & 1 & 1 \\ 
		\cline{2-15}
		&\multirow{3}{*}{II}
		 & $(6,4)$ && 0.95 & 0.97 & 0.99 && 0.87 & 0.88 & 0.92 && 0.74 & 0.77 & 0.79 \\ 
		&& $(9,6)$ && 0.99 & 0.99 & 1 && 0.96 & 0.98 & 0.98 && 0.94 & 0.95 & 0.96 \\ 
		&& $(15,10)$ && 1 & 1 & 1 && 0.99 & 1 & 1 && 1 & 1 & 1 \\ 
		\midrule
		\multirow{6}{*}{$\rho=.25$}&\multirow{3}{*}{I} 
		 & $(6,4)$ && 0.95 & 0.98 & 0.98 && 0.81 & 0.88 & 0.90 && 0.62 & 0.68 & 0.74 \\ 
		&& $(9,6)$ && 1 & 1 & 1 && 0.96 & 0.98 & 0.99 && 0.94 & 0.98 & 0.98 \\ 
		&& $(15,10)$ && 1 & 1 & 1 && 1 & 1 & 1 && 0.96 & 1 & 1 \\ 
		\cline{2-15}
		&\multirow{3}{*}{II}
		 & $(6,4)$ && 0.96 & 0.97 & 0.98 && 0.44 & 0.79 & 0.93 && 0.64 & 0.76 & 0.82 \\ 
		&& $(9,6)$ && 0.99 & 1 & 1 && 0.69 & 0.98 & 0.98 && 0.95 & 0.95 & 0.95 \\ 
		&& $(15,10)$ && 1 & 1 & 1 && 0.99 & 1 & 1 && 0.94 & 1 & 1 \\ 
		\bottomrule
    \end{tabular}
    \caption{Empirical probabilities of the correct rank selection by rolling forecast. I, II stands for different covariance structures of $\hE_t$}
    \label{tab:forecast}
\end{table}


\subsection{Example}
\label{sec:example}

We use the RRMAR model to study the eight key short term economic indicators (116 quarters, from 1991 Q1 to 2019 Q4) from ten countries. The data is downloaded from Organisation for Economic Co-operation and Development (OECD, {\tt https://www.oecd.org/}). The 8 indicators are Consumer Price Index (CPI, growth rate), GDP (growth rate), 3-month interbank Interest Rate (IR3, difference), Long Term government bond yield (IRLT, difference), International Trade total Export Value (ITEX, growth rate) and Import Value (ITIM, growth rate), Total Industrial PRoduction excludingn construction (PRTI, growth rate) and Total Manufacturing PRoduction (PRTM, growth rate). The 10 countries are Australia (AUS), Austria (AUT), Canada (CAN), France (FRA), Germany (DEU), Netherlands (NLD), Norway (NOR), Sweden (SWE), United Kingdom (GBR) and United States (USA). All the 80 series have been centered before attempting the model. We also standardize each indicator across all the countries, i.e. the 10 series corresponding to each indicator have an overall standard deviation 1.

The EBIC \eqref{eq:bic} selects  the ranks as $k_1=1$ and $k_2=4$ for this data set. Using the RR.CC approach, the
estimated leading singular vectors $\hat\hU_i$ and $\hat\hV_i$ of ${\hA_{i}}$ ($i=1,2$), and their corresponding 
estimated standard errors are shown in Tables~\ref{table:A1} and \ref{table:A2}. 
Entries which are not significant at 10\% level are shown in light gray color. 

\begin{table}[H] 
\begin{center}
\begin{tabular}{rrrrrrrrrrr}
  \hline
  & AUS & AUT & CAN & DEU & FRA & GBR & NLD & NOR & SWE & USA\\ 
 $\hat\hU_1'$ & 0.25 & 0.30 & 0.35 & 0.35 & 0.31 & 0.28 & 0.30 & 0.28 & 0.40 & 0.32 \\  
  {s.e.} & 0.02 & 0.01 & 0.01 & 0.01 & 0.01 & 0.01 & 0.01 & 0.02 & 0.01 & 0.01 \\ \hline
  $\hat\hV_1'$ & {\color{lightgray}0.09} & 0.49 & {\color{lightgray}0.05} & {\color{lightgray}0.08} & {\color{lightgray}-0.04} & 0.61 & -0.27 & {\color{lightgray}0.01} & 0.27 & 0.47\\ 
  {s.e.} & 0.10 & 0.12 & 0.13 & 0.15 & 0.17 & 0.10 & 0.11 & 0.07 & 0.1 & 0.13 \\ \hline
\end{tabular}
\caption{Estimated singular vectors of the coefficient matrix $\hA_1$, with their corresponding estimated standard errors.}\label{table:A1}
\end{center}
\end{table}

\begin{table}[H] 
\begin{center}
\begin{tabular}{rrrrrrrrr}
  \hline
  & CPI & GDP & IR3 & IRLT & ITEX & ITIM & PRTI & PRTM\\ 
 $\hat\hU_2'[1,]$& 0.25 & 0.44 & 0.15 & 0.21 & 0.29 & 0.29 & 0.45 & 0.56\\ 
 {s.e.} & 0.05 & 0.03 & 0.05 & 0.04 & 0.03 & 0.03 & 0.03 & 0.02  \\ \hline
 $\hat\hV_2'[1,]$ & {\color{lightgray} -0.05} & 0.84 & -0.43 & 0.21 & 0.17 & {\color{lightgray} 0.14}  & {\color{lightgray} -0.03} & {\color{lightgray} 0.08}\\
 {s.e.} & 0.06 & 0.04 & 0.05 & 0.07 & 0.10 & 0.10 & 0.08 & 0.09\\ \hline
\end{tabular}
\caption{Estimated leading singular vectors of the coefficient matrix $\hA_2$, with their corresponding estimated standard errors.}\label{table:A2}
\end{center}
\end{table}

The implication of using $k_1=1$ and $k_2=4$ is that the observations in the previous quarter form a 4 composite indexes $\h{f}_{t-1}:=\hV_1'\hX_{t-1}\hV_2$, and the conditional expectation $\E(\hX_t\mid\hX_{t-1})$ is given by $d_{11}\cdot \hU_1 \h{f}_{t-1}\hD_2\hU_2'$, where $d_{11}$ is the largest singular value of $\hA_1$, and $\hD_2$ is the $4\times 4$ diagonal matrix containing the singular values of $\hA_2$, see \eqref{eq:marrr_fac} as well. We report the estimated leading singular vectors of $\hU_2$ and $\hV_2$ in Table~\ref{table:A2}. It is very interesting to observe that when the indicators are combined to form the first element of $\h{f}_{t-1}$ using $\hat\hV_2[,1]$, GDP is most dominant, followed by IR3, IRLT and ITEX, while CPI, ITIM, PRTI, PRTM have less importance. It is also worth noting that GDP has a positive coefficient, and IR has a negative one in $\hat{\hV}_2[,1]$. When the countries are combined using $\hat\hV_1$ (see Table~\ref{table:A1}), 5 countries play more important roles (i.e. the 5 significant entries in $\hat\hV_1$), and both USA and GBR are among them. At time $t$, all indicators from all countries significantly load on $\h{f}_{t-1}$.

\begin{table}[H]
\begin{center}
  \begin{tabular}{lrrrrrrrrrr}\hline
    &iAR(1)&VAR(1) & PROJ&LSE&MLE&RR.LS & RR.CC\\\hline
    MSE & 0.5258 & 3.6988 & 1.6362 & 0.5734 & 0.5187 & 0.5816 & 0.5016  \\ 
    \# par & 80 & 6,400 & 163 & 163 & 163 & 102 & 102  \\\hline

  \end{tabular}
  \caption{Out-sample prediction performance comparison of various models for the matrix series of 8 indicators from 10 OECD countries.}\label{table:FFpred}
\end{center}
\end{table}

The model selected by the EBIC does not leading to the best rolling forecast performance. For this purpose, we consider the model with $k_1=5$ and $k_2=2$, which leads to almost the smallest mean sqaured rolling forecast error, but is still much more parsimonious than the full rank MAR model. The mean squared errors of the one-step rolling forecast of the last 8 years are summarized in Table~\ref{table:FFpred}, in which we compare the following seven methods.
\renewcommand{\labelenumi}{(\roman{enumi})}
\begin{enumerate}
    \item {\bf iAR(1):\ } Fit an AR(1) model to each individual series.
    \item {\bf VAR(1):\ } Fit a VAR(1) model to $\vect(\hX_t)$.
    \item {\bf PROJ, LSE, MLE:\ } Fit the MAR(1) model (without rank constraint) to $\hX_t$ using projection, least squares and MLE methods. See \cite{chen2021autoregressive} for details.
    \item {\bf RR.LS:\ } Reduced-rank MAR(1) model, fitted by least squares.
    \item {\bf RR.CC:\ } Reduced-rank MAR(1) model, fitted by MLE under the assumption \eqref{eq:ocov}.
\end{enumerate}

From Table~\ref{table:FFpred}, it is seen that VAR(1) model involves a $80\times 80$ coefficient matrix and significantly overfits the data, with the worst out-sample prediction performance. The MAR and RRMAR models with MLE, have better performance than fitting each individual series separately (iAR(1)).
Comparing to the MAR model without rank constraint, the reduced-rank model estimated by MLE (RR.CC) has the smallest rolling forecast error with less parameters (163 vs 102).

\section{Conclusion}
\label{sec:con}

We introduce the reduced-rank matrix autoregressive model, which relies on an autoregressive term involving bilinear coefficient matrices, and assumes rank deficiency of the coefficient matrices. 
Comparing with the MAR model without the low rank structure \citep{chen2021autoregressive}, the RRMAR model involves a greatly reduced number of parameters and leads to more efficient estimation. On the other hand, we use the MAR estimates as the warm-start initial values for the estimation of the RRMAR model.
Both LSE and MLE are studied, where the latter is considered under an additional assumption that the covariance tensor of the error matrix is separable. We propose to use extended BIC to select the ranks of the coefficient matrices. Our numerical analysis suggests that even if the separability assumption on the covariance tensor does not hold, MLE still has reasonable and almost equally good performance, comparing with LSE. On the other hand, MLE can perform much better when that assumption does stand. Therefore, we would recommend the use of MLE in practice.

There are a number of directions to extend the study of the reduced-rank autoregressive model. For example the conditional mean can involve multiple terms of the form $\sum_{j=1}^J \hA_{j1}\hX_{t-1}\hA_{j2}'$, and multiple lagged terms $\hX_{t-1},\ldots,\hX_{t-p}$. The model can be extended for tesor time series as well. More importantly, the asymptotic analysis has been carried out for the fixed dimensional case in the current paper. It is interesting and important to study the model under the high dimensional paradigm. In particular, we would like to understand: (i) what are the convergence rates of $\hat\hA_i$; and (ii) how to obtain initial estimates of $\hA_i$ to start the alternating algorithm. To select the ranks, either the information criterion based procedure can be adapted to account for the high dimensionality, or the singular (eigen-)value based approach \citep{lam:2012,wang:2018} can be employed. The relationship between the reduced-rank tensor autoregressive model and the dynamic tensor factor model \citep{chen2019factor} is also worth exploring.

{In this paper the theoretical results are obtained under the fix data dimension assumption. It is an interesting and important problem to expand the theoretical results to high and diverging dimension setting. It is a challenging problem and may require additional structure of the model. Due to the stationary condition $\rho(\hA_2\otimes \hA_1)<1$ needed for the autoregressive model, the signal to noise ratio is constrained, different from typical regression models. This is similar to the simple AR(1) model $x_t=\phi x_{t-1}+e_t$, in which the signal to noise ratio is always $\phi^2/(1-\phi^2)$ no matter how large or small the noise variance is. In order to achieve consistency results for the diverging dimensional setting, the reduced-rank structure is not sufficient. It seems that additional sparsity structure or other type of structure is needed. We are currently investigating this problem. }

\bigskip

\noindent {\bf Acknowledgement.}\ \ We thank the Editor, Associate Editor and Referees for their constructive comments and suggestions.

\bibliographystyle{apalike}
\bibliography{mybib}

@book {brockwell:1991,
    AUTHOR = {Brockwell, Peter J. and Davis, Richard A.},
     TITLE = {Time series: theory and methods},
    SERIES = {Springer Series in Statistics},
   EDITION = {Second},
 PUBLISHER = {Springer-Verlag},
   ADDRESS = {New York},
      YEAR = {1991},
     PAGES = {xvi+577},
}

@preamble{
   "\def\polhk#1{\setbox0=\hbox{#1}{\ooalign{\hidewidth
    \lower1.5ex\hbox{`}\hidewidth\crcr\unhbox0}}} "
}

@article {lam:2012,
    AUTHOR = {Lam, Clifford and Yao, Qiwei},
     TITLE = {Factor modeling for high-dimensional time series: inference
              for the number of factors},
   JOURNAL = {Annals of Statistics},
    VOLUME = {40},
      YEAR = {2012},
    NUMBER = {2},
     PAGES = {694--726},
      ISSN = {0090-5364},
   MRCLASS = {62M10 (62H25)},
MRREVIEWER = {Manfred Deistler},
       DOI = {10.1214/12-AOS970},
       URL = {http://dx.doi.org/10.1214/12-AOS970},
}

@book {hannan:1970,
    AUTHOR = {Hannan, E. J.},
     TITLE = {Multiple time series},
 PUBLISHER = {John Wiley and Sons, Inc., New York-London-Sydney},
      YEAR = {1970},
     PAGES = {xi+536},
   MRCLASS = {62.85},
MRREVIEWER = {L. Tornqvist},
}

@article {han:2015,
    AUTHOR = {Han, Fang and Lu, Huanran and Liu, Han},
     TITLE = {A direct estimation of high dimensional stationary vector
              autoregressions},
   JOURNAL = {Journal of Machine Learning Research},
    VOLUME = {16},
      YEAR = {2015},
     PAGES = {3115--3150},
      ISSN = {1532-4435},
   MRCLASS = {62M10 (62H12 62J07)},
}

@book {lutkepohl:2005,
    AUTHOR = {L{\"u}tkepohl, Helmut},
     TITLE = {New introduction to multiple time series analysis},
 PUBLISHER = {Springer-Verlag, Berlin},
      YEAR = {2005},
     PAGES = {xxii+764},
      ISBN = {3-540-40172-5},
   MRCLASS = {62-01 (62M10)},
MRREVIEWER = {Ra{\'u}l Pedro Mentz},
       DOI = {10.1007/978-3-540-27752-1},
       URL = {http://dx.doi.org/10.1007/978-3-540-27752-1},
}

@book {tsay:2014,
    AUTHOR = {Tsay, Ruey S.},
     TITLE = {Multivariate time series analysis},
    SERIES = {Wiley Series in Probability and Statistics},
      OPTNOTE = {With R and financial applications},
 PUBLISHER = {John Wiley \& Sons, Inc., Hoboken, NJ},
      YEAR = {2014},
     PAGES = {xviii+492},
      ISBN = {978-1-118-61790-8},
   MRCLASS = {62M10 (62Hxx 62Mxx 62P05)},
MRREVIEWER = {Biljana {\v{C}}. Popovi{\'c}},
}

@article {negahban:2011,
    AUTHOR = {Negahban, Sahand and Wainwright, Martin J.},
     TITLE = {Estimation of (near) low-rank matrices with noise and
              high-dimensional scaling},
   JOURNAL = {Annals of Statistics},
    VOLUME = {39},
      YEAR = {2011},
    NUMBER = {2},
     PAGES = {1069--1097},
      ISSN = {0090-5364},
   MRCLASS = {62F30 (60B20 62H12)},
  OPTMRNUMBER = {2816348},
MRREVIEWER = {Miguel Fonseca},
       DOI = {10.1214/10-AOS850},
       URL = {http://dx.doi.org.proxy.libraries.rutgers.edu/10.1214/10-AOS850},
}

@article {kock:2015,
    AUTHOR = {Kock, Anders Bredahl and Callot, Laurent},
     TITLE = {Oracle inequalities for high dimensional vector
              autoregressions},
   JOURNAL = {Journal of Econometrics},
    VOLUME = {186},
      YEAR = {2015},
    NUMBER = {2},
     PAGES = {325--344},
      ISSN = {0304-4076},
   MRCLASS = {62J02 (91G70)},
  OPTMRNUMBER = {3343790},
       DOI = {10.1016/j.jeconom.2015.02.013},
       URL = {http://dx.doi.org.proxy.libraries.rutgers.edu/10.1016/j.jeconom.2015.02.013},
}

@article{bunea:2011,
author = "Bunea, Florentina and She, Yiyuan and Wegkamp, Marten H.",
doi = "10.1214/11-AOS876",
journal = "Annals of Statistics",
month = "04",
number = "2",
pages = "1282--1309",
publisher = "The Institute of Mathematical Statistics",
title = "Optimal selection of reduced rank estimators of high-dimensional matrices",
url = "https://doi.org/10.1214/11-AOS876",
volume = "39",
year = "2011"
}

@article{yuan:2007,
 author = {Ming Yuan and Ali Ekici and Zhaosong Lu and Renato Monteiro},
 journal = {Journal of the Royal Statistical Society: Series B (Statistical Methodology)},
 number = {3},
 pages = {329-346},
 publisher = {[Royal Statistical Society, Wiley]},
 title = {Dimension Reduction and Coefficient Estimation in Multivariate Linear Regression},
 volume = {69},
 year = {2007}
}

@article {anderson:1951,
    AUTHOR = {Anderson, T. W.},
     TITLE = {Estimating linear restrictions on regression coefficients for
              multivariate normal distributions},
   FJOURNAL = {Ann. Math. Statistics},
   JOURNAL = {Annals of Mathematical Statistics},
    VOLUME = {22},
      YEAR = {1951},
     PAGES = {327--351},
      ISSN = {0003-4851},
   MRCLASS = {62.0X},
MRREVIEWER = {P. Whittle},
}

@article {izenman:1975,
    AUTHOR = {Izenman, Alan Julian},
     TITLE = {Reduced-rank regression for the multivariate linear model},
   JOURNAL = {Journal of Multivariate Analysis},
    VOLUME = {5},
      YEAR = {1975},
     PAGES = {248--264},
      ISSN = {0047-259x},
   MRCLASS = {62J05 (62H25)},
MRREVIEWER = {Peter M. Robinson},
       URL = {https://doi-org.proxy.libraries.rutgers.edu/10.1016/0047-259X(75)90042-1},
}

@book {reinsel:1998,
    AUTHOR = {Reinsel, Gregory C. and Velu, Raja P.},
     TITLE = {Multivariate reduced-rank regression},
    SERIES = {Lecture Notes in Statistics},
    VOLUME = {136},
      OPTNOTE = {Theory and applications},
 PUBLISHER = {Springer-Verlag, New York},
      YEAR = {1998},
     PAGES = {xiv+258},
      ISBN = {0-387-98601-4},
   MRCLASS = {62J05 (62F30 62H12 62M10 62P20)},
MRREVIEWER = {C. M. Theobald},
       URL = {https://doi-org.proxy.libraries.rutgers.edu/10.1007/978-1-4757-2853-8},
}

@article {rohde:2011,
    AUTHOR = {Rohde, Angelika and Tsybakov, Alexandre B.},
     TITLE = {Estimation of high-dimensional low-rank matrices},
   JOURNAL = {Annals of Statistics},
    VOLUME = {39},
      YEAR = {2011},
    NUMBER = {2},
     PAGES = {887--930},
      ISSN = {0090-5364},
   MRCLASS = {62G05 (62F10)},
MRREVIEWER = {Ou Zhao},
       URL = {https://doi-org.proxy.libraries.rutgers.edu/10.1214/10-AOS860},
}

@article{wang:2018,
author = "Wang, Dong and Liu, Xialu and Chen, Rong",
title = "Factor models for matrix-valued high-dimensional time series",
journal = "Journal of Econometrics",
volume = "208",
number = "1",
pages = "231 - 248",
year = "2019",
}

@article {tiao:1981,
    AUTHOR = {Tiao, G. C. and Box, G. E. P.},
     TITLE = {Modeling multiple time series with applications},
   JOURNAL = {Journal of the American Statistical Association},
    VOLUME = {76},
      YEAR = {1981},
    NUMBER = {376},
     PAGES = {802--816},
      ISSN = {0162-1459},
   MRCLASS = {62M10},
  OPTMRNUMBER = {650891},
MRREVIEWER = {Bert M. Steece},
       URL =
              {http://links.jstor.org.proxy.libraries.rutgers.edu/sici?sici=0162-1459(198112)76:376<802:MMTSWA>2.0.CO;2-E&origin=MSN},
}

@article{zhao:2014,
  title = {{Structured lasso for regression with matrix covariates}},
  author = {Zhao, J. and Leng, C.},
  journal = {Statistica Sinica},
  volume = {24},
  pages = {799--814},
  year = {2014},
}

@article{zhou:2013,
    author = {Zhou, H. and Li, L. and Zhu, H.},
    title = {{Tensor Regression with Applications in Neuroimaging Data Analysis}},
    journal =  {Journal of the American Statistical Association},
    volume = {108},
    number = {502},
    pages = {540-552},
    year = {2013},
}

@article{ghosh2019high,
  title={High-Dimensional Posterior Consistency in Bayesian Vector Autoregressive Models},
  author={Ghosh, Satyajit and Khare, Kshitij and Michailidis, George},
  journal={Journal of the American Statistical Association},
  volume={114},
  number={526},
  pages={735--748},
  year={2019},
  publisher={Taylor \& Francis}
}

@article{allen2010transposable,
  title={Transposable regularized covariance models with an application to missing data imputation},
  author={Allen, Genevera I and Tibshirani, Robert},
  journal={Annals of Applied Statistics},
  volume={4},
  number={2},
  pages={764},
  year={2010},
  publisher={NIH Public Access}
}

@article{chen2019factor,
  title={Factor models for high-dimensional tensor time series},
  author={Chen, Rong and Yang, Dan and Zhang, Cun-Hui},
  journal={Journal of the American Statistical Association},
  volume={just-accepted},
  pages={1--59},
  year={2021},
  publisher={Taylor \& Francis}
}

@article{chen:2019a,
  title={Modeling Dynamic Transport Network with Matrix Factor Models: with an Application to International Trade Flow},
  author={Chen, Elynn Y and Chen, Rong},
  journal={arXiv preprint arXiv:1901.00769},
  year={2019}
}

@book {anderson:2013,
    AUTHOR = {Anderson, T. W.},
     TITLE = {An introduction to multivariate statistical analysis},
    SERIES = {Wiley Series in Probability and Statistics},
   EDITION = {Third},
 PUBLISHER = {Wiley-Interscience, Hoboken, NJ},
      YEAR = {2003},
     PAGES = {xx+721},
      ISBN = {0-471-36091-0},
   MRCLASS = {62-01 (62Hxx)},
  OPTMRNUMBER = {1990662},
}

@book{horn2012matrix,
  title={Matrix analysis},
  author={Horn, Roger A and Johnson, Charles R},
  year={2012},
  publisher={Cambridge university press}
}

@article{de2008tensor,
  title={Tensor rank and the ill-posedness of the best low-rank approximation problem},
  author={De Silva, Vin and Lim, Lek-Heng},
  journal={SIAM Journal on Matrix Analysis and Applications},
  volume={30},
  number={3},
  pages={1084--1127},
  year={2008},
  publisher={SIAM}
}

@article{de2000best,
  title={On the best rank-1 and rank-($r_1$, $r_2$,..., $r_n$) approximation of higher-order tensors},
  author={De Lathauwer, Lieven and De Moor, Bart and Vandewalle, Joos},
  journal={SIAM Journal on Matrix Analysis and Applications},
  volume={21},
  number={4},
  pages={1324--1342},
  year={2000},
  publisher={SIAM}
}

@article{de2000multilinear,
  title={A multilinear singular value decomposition},
  author={De Lathauwer, Lieven and De Moor, Bart and Vandewalle, Joos},
  journal={SIAM Journal on Matrix Analysis and Applications},
  volume={21},
  number={4},
  pages={1253--1278},
  year={2000},
  publisher={SIAM}
}

@article{chen2012sparse,
  title={Sparse reduced-rank regression for simultaneous dimension reduction and variable selection},
  author={Chen, Lisha and Huang, Jianhua Z},
  journal={Journal of the American Statistical Association},
  volume={107},
  number={500},
  pages={1533--1545},
  year={2012},
  publisher={Taylor \& Francis Group}
}

@article{moench2013dynamic,
  title={Dynamic hierarchical factor models},
  author={Moench, Emanuel and Ng, Serena and Potter, Simon},
  journal={Review of Economics and Statistics},
  volume={95},
  number={5},
  pages={1811--1817},
  year={2013},
  publisher={MIT Press}
}

@article{diebold2008global,
  title={Global yield curve dynamics and interactions: a dynamic Nelson--Siegel approach},
  author={Diebold, Francis X and Li, Canlin and Yue, Vivian Z},
  journal={Journal of Econometrics},
  volume={146},
  number={2},
  pages={351--363},
  year={2008},
  publisher={Elsevier}
}

@article{giannone2008nowcasting,
  title={Nowcasting: The real-time informational content of macroeconomic data},
  author={Giannone, Domenico and Reichlin, Lucrezia and Small, David},
  journal={Journal of Monetary Economics},
  volume=55,
  number=4,
  pages={665--676},
  year=2008,
  publisher={Elsevier}
}

@article{kohn1979asymptotic,
  title={Asymptotic estimation and hypothesis testing results for vector linear time series models},
  author={Kohn, Robert},
  journal={Econometrica: Journal of the Econometric Society},
  pages={1005--1030},
  year={1979},
  publisher={JSTOR}
}

@article{tiao1989model,
  title={Model specification in multivariate time series},
  author={Tiao, George C and Tsay, Ruey S},
  journal={Journal of the Royal Statistical Society: Series B (Methodological)},
  volume={51},
  number={2},
  pages={157--195},
  year={1989},
  publisher={Wiley Online Library}
}

@article{tsay1985use,
  title={Use of canonical analysis in time series model identification},
  author={Tsay, Ruey S and Tiao, George C},
  journal={Biometrika},
  volume={72},
  number={2},
  pages={299--315},
  year={1985},
  publisher={Oxford University Press}
}

@article{schwarz1978estimating,
  title={Estimating the dimension of a model},
  author={Schwarz, Gideon},
  journal={Annals of Statistics},
  volume={6},
  number={2},
  pages={461--464},
  year={1978},
  publisher={Institute of Mathematical Statistics}
}

@article{haughton1988choice,
  title={On the choice of a model to fit data from an exponential family},
  author={Haughton, Dominique MA},
  journal={Annals of Statistics},
  volume={16},
  number={1},
  pages={342--355},
  year={1988},
  publisher={Institute of Mathematical Statistics}
}

@article{mukherjee2015degrees,
  title={On the degrees of freedom of reduced-rank estimators in multivariate regression},
  author={Mukherjee, Ashin and Chen, Kun and Wang, Naisyin and Zhu, Ji},
  journal={Biometrika},
  volume={102},
  number={2},
  pages={457--477},
  year={2015},
  publisher={Oxford University Press}
}

@article{yuan2016degrees,
  title={Degrees of freedom in low rank matrix estimation},
  author={Yuan, Ming},
  journal={Science China Mathematics},
  volume={59},
  number={12},
  pages={2485--2502},
  year={2016},
  publisher={Springer}
}

@article{raskutti2019convex,
  title={Convex regularization for high-dimensional multiresponse tensor regression},
  author={Raskutti, Garvesh and Yuan, Ming and Chen, Han},
  journal={Annals of Statistics},
  volume={47},
  number={3},
  pages={1554--1584},
  year={2019},
  publisher={Institute of Mathematical Statistics}
}

@article{han2020tensor,
  title={Tensor Factor Model Estimation by Iterative Projection},
  author={Han, Yuefeng and Chen, Rong and Yang, Dan and Zhang, Cun-hui},
  journal={arXiv preprint arXiv:2006.02611},
  year={2020}
}

@article{han2020number,
  title={Rank Determination in Tensor Factor Model},
  author={Han, Yuefeng and Zhang, Cun-hui and Chen, Rong},
  journal={arXiv preprint arxiv:2011.07131},
  volume={},
  number={},
  pages={},
  year={2020},
  publisher={}
}

@inproceedings{loh2011high,
  title={High-dimensional regression with noisy and missing data: Provable guarantees with non-convexity},
  author={Loh, Po-Ling and Wainwright, Martin J},
  booktitle={Advances in Neural Information Processing Systems},
  pages={2726--2734},
  year={2011}
}

@article{davis2016sparse,
  title={Sparse vector autoregressive modeling},
  author={Davis, Richard A and Zang, Pengfei and Zheng, Tian},
  journal={Journal of Computational and Graphical Statistics},
  volume={25},
  number={4},
  pages={1077--1096},
  year={2016},
  publisher={Taylor \& Francis}
}

@article{nicholson2017varx,
  title={VARX-L: Structured regularization for large vector autoregressions with exogenous variables},
  author={Nicholson, William B and Matteson, David S and Bien, Jacob},
  journal={International Journal of Forecasting},
  volume={3},
  number={33},
  pages={627--651},
  year={2017}
}

@inproceedings{melnyk2016estimating,
  title={Estimating structured vector autoregressive models},
  author={Melnyk, Igor and Banerjee, Arindam},
  booktitle={International Conference on Machine Learning},
  pages={830--839},
  year={2016}
}

@article{ghosh2020strong,
  title={STRONG SELECTION CONSISTENCY OF BAYESIAN VECTOR AUTOREGRESSIVE MODELS BASED ON A PSEUDO-LIKELIHOOD APPROACH},
  author={Ghosh, Satyajit and Khare, Kshitij and Michailidis, George},
  journal={Annals of Statistics},
  volume={in press},
  number={},
  pages={},
  year={2021+},
  publisher={Institute of Mathematical Statistics}
}

@TECHREPORT{han2020sparse,
  title={Sparse Nonlinear Vector Autoregressive Models},
  author={Han, Yuefeng and Chen, Likai and Wu, Weibiao},
  journal={},
  volume={},
  number={},
  pages={},
  year={2020+},
  publisher={Technical Report}
}

@article{basu2015regularized,
  title={Regularized estimation in sparse high-dimensional time series models},
  author={Basu, Sumanta and Michailidis, George},
  journal={Annals of Statistics},
  volume={43},
  number={4},
  pages={1535--1567},
  year={2015},
  publisher={Institute of Mathematical Statistics}
}

@article{basu2019low,
  title={Low rank and structured modeling of high-dimensional vector autoregressions},
  author={Basu, Sumanta and Li, Xianqi and Michailidis, George},
  journal={IEEE Transactions on Signal Processing},
  volume={67},
  number={5},
  pages={1207--1222},
  year={2019},
  publisher={IEEE}
}

@article{hall2018learning,
  title={Learning high-dimensional generalized linear autoregressive models},
  author={Hall, Eric C and Raskutti, Garvesh and Willett, Rebecca M},
  journal={IEEE Transactions on Information Theory},
  volume={65},
  number={4},
  pages={2401--2422},
  year={2018},
  publisher={IEEE}
}

@article{lin2017regularized,
  title={Regularized estimation and testing for high-dimensional multi-block vector-autoregressive models},
  author={Lin, Jiahe and Michailidis, George},
  journal={Journal of Machine Learning Research},
  volume={18},
  number={1},
  pages={4188--4236},
  year={2017},
  publisher={JMLR. org}
}

@article{lin2020regularized,
  title={Regularized Estimation of High-dimensional Factor-Augmented Vector Autoregressive ({FAVAR}) Models},
  author={Lin, Jiahe and Michailidis, George},
  journal={Journal of Machine Learning Research},
  volume={21},
  number={117},
  pages={1--51},
  year={2020}
}

@article{cichocki2015tensor,
  title={Tensor decompositions for signal processing applications: From two-way to multiway component analysis},
  author={Cichocki, Andrzej and Mandic, Danilo and De Lathauwer, Lieven and Zhou, Guoxu and Zhao, Qibin and Caiafa, Cesar and Phan, Huy Anh},
  journal={IEEE Signal Processing Magazine},
  volume={32},
  number={2},
  pages={145--163},
  year={2015},
  publisher={IEEE}
}

@article{sidiropoulos2017tensor,
  title={Tensor decomposition for signal processing and machine learning},
  author={Sidiropoulos, Nicholas D and De Lathauwer, Lieven and Fu, Xiao and Huang, Kejun and Papalexakis, Evangelos E and Faloutsos, Christos},
  journal={IEEE Transactions on Signal Processing},
  volume={65},
  number={13},
  pages={3551--3582},
  year={2017},
  publisher={IEEE}
}

@book{cichocki2009nonnegative,
  title={Nonnegative matrix and tensor factorizations: applications to exploratory multi-way data analysis and blind source separation},
  author={Cichocki, Andrzej and Zdunek, Rafal and Phan, Anh Huy and Amari, Shun-ichi},
  year={2009},
  publisher={John Wiley \& Sons}
}

@article{anandkumar2014tensor,
  title={Tensor decompositions for learning latent variable models},
  author={Anandkumar, Animashree and Ge, Rong and Hsu, Daniel and Kakade, Sham M and Telgarsky, Matus},
  journal={Journal of Machine Learning Research},
  volume={15},
  pages={2773--2832},
  year={2014},
  publisher={Journal of Machine Learning Research}
}

@article{tsiligkaridis2013covariance,
  title={Covariance estimation in high dimensions via kronecker product expansions},
  author={Tsiligkaridis, Theodoros and Hero, Alfred O},
  journal={IEEE Transactions on Signal Processing},
  volume={61},
  number={21},
  pages={5347--5360},
  year={2013},
  publisher={IEEE}
}

@article{hoff2011separable,
  title={Separable covariance arrays via the Tucker product, with applications to multivariate relational data},
  author={Hoff, Peter D},
  journal={Bayesian Analysis},
  volume={6},
  number={2},
  pages={179--196},
  year={2011},
  publisher={International Society for Bayesian Analysis}
}

@article{hafner2020estimation,
  title={Estimation of a multiplicative correlation structure in the large dimensional case},
  author={Hafner, Christian M and Linton, Oliver B and Tang, Haihan},
  journal={Journal of Econometrics},
  volume={217},
  number={2},
  pages={431--470},
  year={2020},
  publisher={Elsevier}
}

@article{linton2019estimation,
  title={Estimation of the Kronecker Covariance Model by Partial Means and Quadratic Form},
  author={Linton, Oliver B and Tang, Haihan},
  journal={arXiv preprint arXiv:1906.08908},
  year={2019}
}

@article{zhou2014gemini,
  title={Gemini: Graph estimation with matrix variate normal instances},
  author={Zhou, Shuheng},
  journal={Annals of Statistics},
  volume={42},
  number={2},
  pages={532--562},
  year={2014},
  publisher={Institute of Mathematical Statistics}
}

@article{gao2021two,
  title={A Two-Way Transformed Factor Model for Matrix-Variate Time Series},
  author={Gao, Zhaoxing and Tsay, Ruey S},
  journal={Econometrics and Statistics},
  volume={in press},
  year={2021},
  publisher={Elsevier}
}

@article{hoff2015multilinear,
  title={Multilinear tensor regression for longitudinal relational data},
  author={Hoff, Peter D},
  journal={Annals of Applied Statistics},
  volume={9},
  number={3},
  pages={1169},
  year={2015},
  publisher={NIH Public Access}
}

@article{ding2018matrix,
author = {Ding, Shanshan and Dennis Cook, R.},
title = {Matrix variate regressions and envelope models},
journal = {Journal of the Royal Statistical Society: Series B (Statistical Methodology)},
volume = {80},
number = {2},
pages = {387-408},
year = {2018}
}

@article{chen2021autoregressive,
  title={Autoregressive models for matrix-valued time series},
  author={Chen, Rong and Xiao, Han and Yang, Dan},
  journal={Journal of Econometrics},
  volume={222},
  number={1},
  pages={539--560},
  year={2021},
  publisher={Elsevier}
}

@article{velu1986reduced,
  title={Reduced rank models for multiple time series},
  author={Velu, Raja P and Reinsel, Gregory C and Wichern, Dean W},
  journal={Biometrika},
  volume={73},
  number={1},
  pages={105--118},
  year={1986},
  publisher={Oxford University Press}
}

@article{camba2003tests,
  title={Tests of rank in reduced rank regression models},
  author={Camba-Mendez, Gonzalo and Kapetanios, George and Smith, Richard J and Weale, Martin R},
  journal={Journal of Business \& Economic Statistics},
  volume={21},
  number={1},
  pages={145--155},
  year={2003},
  publisher={Taylor \& Francis}
}

@article{al2019testing,
  title={Testing subspace Granger causality},
  author={Al-Sadoon, Majid M},
  journal={Econometrics and Statistics},
  volume={9},
  pages={42--61},
  year={2019},
  publisher={Elsevier}
}

@article{hallin2007,
  title={Determining the number of factors in the general dynamic factor model},
  author={Hallin, Marc and Li{\v{s}}ka, Roman},
  journal={Journal of the American Statistical Association},
  volume={102},
  number={478},
  pages={603--617},
  year={2007},
  publisher={Taylor \& Francis}
}

@article{shao1997asymptotic,
  title={An asymptotic theory for linear model selection},
  author={Shao, Jun},
  journal={Statistica sinica},
  volume={7},
  pages={221--242},
  year={1997},
  publisher={JSTOR}
}

@article{wedin1972perturbation,
  title={Perturbation bounds in connection with singular value decomposition},
  author={Wedin, Per-{\AA}ke},
  journal={BIT Numerical Mathematics},
  volume={12},
  number={1},
  pages={99--111},
  year={1972},
  publisher={Springer}
}

@book{stewart1990matrix,
  title={Matrix Perturbation Theory},
  author={Stewart, G.W. and Sun, J.},
  isbn={9780126702309},
  lccn={lc90033378},
  series={Computer Science and Scientific Computing},
  year={1990},
  publisher={Elsevier Science}
}

\clearpage
\section*{Appendix}
\label{sec:app}


\newcommand{\hlambda}{\h{\lambda}}

\bigskip 

\noindent{\large\bf Appendix I: Proofs}

\bigskip

The proof of Theorem~\ref{thm:ls} is similar to that of Theorem~\ref{thm:cc}, and is much simpler since it does not involve the matrices $\Sigma_i$ and $\hat\Sigma_i$. Therefore, we will present the proof of Theorem~\ref{thm:cc} first and then point out the major difference for the proof of Theorem~\ref{thm:ls}.


\begin{proof}[Proof of Theorem~\ref{thm:cc}]
Let $\hat\hA_i^{\hbox{\tiny cc}}$ and $\hat\Sigma_i$ be the MLE under the model \eqref{eq:marrr} and \eqref{eq:ocov}. First, using the arguments of the proof of Theorem~4 in \cite{chen2021autoregressive}, we have that $\hat \hA_i^{\hbox{\tiny cc}} = \hA_i + O_P(T^{-1/2})$, and $\hat\Sigma_i = \Sigma_i + o_P(1)$. For the rest of the proof, we will drop the superscript $^{\hbox{\tiny cc}}$ to simplify the notation. Based on the likelihood function \eqref{eq:loglik}, similar to \eqref{eq:mle-A1} (also see Equation (2.15) of \cite{reinsel:1998}), the gradient condition for $\hat\hA_1$ is given by:
\begin{equation}
\label{eq:cc_grad1}
    \hat \hA_1\hat\hS_{1xx}=\hat\Sigma_{1}^{1/2}\hat\hU_1\hat\hU_1'\hat\Sigma_{1}^{-1/2}\hat\hS_{1yx},
\end{equation}
where 
\begin{align*}
    \hat\hS_{1xx} & =\sum_t \hX_{t-1}\hat\hA_2'\hat\Sigma_2^{-1}\hat\hA_2\hX_{t-1}',\\
    \hat\hS_{1yx} & =\sum_t \hX_t\hat\Sigma_2^{-1}\hat\hA_2\hX_{t-1}',\\
    \hat\Sigma_1 & = \frac{1}{T-1} \sum_t\left(\hX_t-\hat\hA_1\hX_{t-1}\hat\hA_2'\right)\hat\Sigma_2^{-1}\left(\hX_t-\hat\hA_1\hX_{t-1}\hat\hA_2'\right)',
\end{align*}
and $\hat\hU_1$ is the $d_1\times k_1$ matrix consisting of the first $k_1$ leading eigenvectors (all normalized to have unit length) of $\hat\Sigma_{1}^{-1/2}\hat\hS_{1yx}\hat\hS_{1xx}^{-1}\hat\hS_{1xy}\hat\Sigma_{1}^{-1/2}$.
With similarly defined quantities (by swapping $\hA_1$ and $\hA_2$, $\Sigma_1$ and $\Sigma_2$, and $\hX_t$ and $\hX_t'$ respectively), we have
\begin{align*}
    \hat \hA_2\hat\hS_{2xx}=\hat\Sigma_{2}^{1/2}\hat\hU_2\hat\hU_2'\hat\Sigma_{2}^{-1/2}\hat\hS_{2yx}.
\end{align*}

Note that $\hat\hU_1$ can also be viewed as the first $k_1$ leading left singular vectors (all normalized to have unit length) of $\hat\hM_1:=T^{-1/2}\hat\Sigma_{1}^{-1/2}\hat\hS_{1yx}\hat\hS_{1xx}^{-1/2}$.  Intuitively, the proof should rely on the expansion of $\hat\hU_1\hat\hU_1'$ around the true value $\hU_1\hU_1'$. However, since the estimates $(\hat\hA_1,\,\hat\hU_1,\,\hat\Sigma_1)$ and $(\hat\hA_2,\,\hat\hU_2,\,\hat\Sigma_2)$ are intertwined, we introduce an intermediate
$$\tilde\hM_1:=T^{-1/2}\hat\Sigma_1^{-1/2}\hA_1\left(\sum_t \hX_{t-1}\hA_2'\hat\Sigma_2^{-1}\hat\hA_2\hX_{t-1}'\right)\hat \hS_{1xx}^{-1/2},$$ and let $\tilde \hU_1$ be the orthogonal matrix consisting of the normalized left singular vectors of $\tilde \hM_1$. The fact that $\hat\Sigma_1^{1/2}\tilde\hU_1\tilde\hU_1'\hat\Sigma_1^{-1/2}\hA_1=\hA_1$ will be of critical importance later.

Let $\tilde\hM_1=\tilde\hU_1\tilde\hD_1\tilde\hV_1'$ be the SVD of $\tilde\hM_1$ (which is of rank $k_1$), and $\hat\hU_1\hat\hD_1\hat\hV_1'$ be the leading rank-$k_1$ SVD component of $\hat\hM_1$. 
Since 
\begin{equation}
    \label{eq:add1}
    \hat\hM_1 - \tilde\hM_1 = T^{-1/2}\hat\Sigma_1^{-1/2}\left(\sum_t \hE_t\hat\Sigma_2^{-1}\hat\hA_2\hX_{t-1}'\right)\hat \hS_{1xx}^{-1/2},
\end{equation}
the convergence rates of $\hat\hA_i$ and $\hat\Sigma_i$ imply that $\hat \hM_1 = \tilde\hM_1 + O_P(1/\sqrt{T})$ by the Slutsky's Theorem. It follows that $\hat\hU\hat\hU'=\tilde\hU_1\tilde\hU_1'+ O_P(1/\sqrt{T})$ and $\hat\hV\hat\hV'=\tilde\hV_1\tilde\hV_1'+ O_P(1/\sqrt{T})$, by Wedin's $\sin \theta$ Theorem \citep{wedin1972perturbation}. It further follows that 
\begin{equation*}
    \hat\hU_1\hat\hD_1\hat\hV_1' - \tilde\hM_1 = \hat\hU_1\hat\hU_1'\hat\hM_1 - \tilde\hM_1 = (\hat\hU\hat\hU'-\tilde\hU_1\tilde\hU_1')\tilde\hM_1 + \hat\hU\hat\hU'(\hat\hM_1-\tilde\hM_1) = O_P(1/\sqrt{T}),
\end{equation*}
and thus $\hat\hU_1\hat\hD_1\hat\hV_1'$ is an acute (see Section~3.1 of \cite{stewart1990matrix} for the definition of {\it acuteness}) perturbation of $\tilde \hM_1$. Therefore, according to Theorem~4.1 of \cite{stewart1990matrix},
\begin{equation}
\label{eq:u_proj_1}  
\begin{aligned}
    \hat\hU_1\hat\hU_1' & = \tilde\hU_1\tilde\hU_1' + (\hI-\tilde\hU_1\tilde\hU_1')(\hat\hM_1 - \tilde\hM_1)\tilde\hV_1\tilde\hD_1^{-1}\tilde\hU_1' \\
    & \qquad + \tilde\hU_1\tilde\hD_1^{-1}\tilde\hV_1'(\hat\hM_1 - \tilde\hM_1)'(\hI-\tilde\hU_1\tilde\hU_1') + o_P(1/\sqrt{T}).
\end{aligned}
\end{equation}
Let $\hM_1:=\Sigma_1^{-1/2}\hA_1\Gamma_1^{1/2}$ and $\hM_1=\hU_1\hD_1\hV_1'$ be its SVD. The consistencies of of $\hat\hA_i$ and $\hat\Sigma_i$ also imply that $\tilde\hM_1$, $\tilde\hU_1\tilde\hU_1'$ are consistent for $\hM_1$ and $\hU_1\hU_1'$ respectively. This fact, combined with \eqref{eq:u_proj_1}, yields
\begin{align*}
    \hat\hU_1\hat\hU_1' = \tilde\hU_1\tilde\hU_1' & + (\hI-\hU_1\hU_1') \Sigma_1^{-1/2}\left( \sum_t \hE_t\Sigma_2^{-1}\hA_2\hX_{t-1}' \right) \Gamma_1^{-1/2}\left(\Sigma_1^{-1/2} \hA_1\Gamma_1^{1/2}\right)^{+} \\
    & + \left(\Gamma_1^{1/2}\hA_1'\Sigma_1^{-1/2}\right)^{+}\Gamma_1^{-1/2} \left( \sum_t \hE_t\Sigma_2^{-1}\hA_2\hX_{t-1}' \right)' \Sigma_1^{-1/2}(\hI-\hU_1\hU_1') + o_P(1/\sqrt{T}).
\end{align*}
Note that $\hU_1\hU_1'=\mathbb{P}_1$. Let $\hP_i=\Sigma_i^{1/2}\mathbb{P}_i\Sigma_i^{-1/2}$. Plugging the preceding equation into \eqref{eq:cc_grad1}, and using the facts that $(\hI-\hU_1\hU_1')\Sigma_1^{-1/2}\hA_1\Gamma_1=\h{0}$ and 
$\hat\Sigma_1^{1/2}\tilde\hU_1\tilde\hU_1'\hat\Sigma_1^{-1/2}\hA_1=\hA_1$ {(this critical step was mentioned earlier in the proof)}, we get
\begin{equation*}
\begin{aligned}
    & \hat\hA_1\left( \sum_t \hX_{t-1}\hat\hA_2'\hat\Sigma_2^{-1}\hat\hA_2\hX_{t-1}'\right) -
    \hA_1\left(\sum_t \hX_{t-1}\hA_2'\hat\Sigma_2^{-1}\hat\hA_2\hX_{t-1}' \right)\\
    = & (\hI - \hP_1) \left( \sum_t \hE_t\Sigma_2^{-1}\hA_2\hX_{t-1}' \right) \Gamma_1^{-1/2}\left(\Sigma_1^{-1/2} \hA_1\Gamma_1^{1/2}\right)^{+}\Sigma_1^{-1/2}\hA_1\Gamma_1 \\
    & \qquad+ \hP_1 \left( \sum_t \hE_t\Sigma_2^{-1}\hA_2\hX_{t-1}' \right) + o_P(\sqrt{T})\\
    = & (\hI - \hP_1) \left( \sum_t \hE_t\Sigma_2^{-1}\hA_2\hX_{t-1}' \right) \hA_1'(\hA_1\Gamma_1\hA_1')^+\hA_1\Gamma_1 + \hP_1 \left( \sum_t \hE_t\Sigma_2^{-1}\hA_2\hX_{t-1}' \right) + o_P(\sqrt{T}),
\end{aligned}
\end{equation*}
and
\begin{equation}
\label{eq:cc_grad_A1}
\begin{aligned}
    & (\hat\hA_1-\hA_1)\left( \sum_t \hX_{t-1}\hA_2'\Sigma_2^{-1}\hA_2\hX_{t-1}'\right) + 
    \hA_1 \left[ \sum_t \hX_{t-1}(\hat\hA_2-\hA_2)'\Sigma_2^{-1}\hA_2\hX_{t-1}'\right] \\
    = & 
    (\hI - \hP_1) \left( \sum_t \hE_t\Sigma_2^{-1}\hA_2\hX_{t-1}' \right) \hA_1'(\hA_1\Gamma_1\hA_1')^+\hA_1\Gamma_1 + \hP_1 \left( \sum_t \hE_t\Sigma_2^{-1}\hA_2\hX_{t-1}' \right) + o_P(\sqrt{T}).
\end{aligned}
\end{equation}
A similar formula holds for $\hat\hA_2$:
\begin{equation}
\label{eq:cc_grad_A2}
\begin{aligned}
    & \left(\sum_t \hX_{t-1}'\hA_1'\Sigma_1^{-1}\hA_1\hX_{t-1}\right) (\hat\hA_2-\hA_2)' + 
    \left[\sum_t \hX_{t-1}'\hA_1'\Sigma_1^{-1}(\hat\hA_1-\hA_1)\hX_{t-1}\right]\hA_2' \\
    = & 
    \Gamma_2\hA_2'(\hA_2\Gamma_2\hA_2')^+\hA_2 \left(\sum_t \hX_{t-1}'\hA_1'\Sigma_1^{-1}\hE_t\right)(\hI - \hP_2)' 
    + \left(\sum_t \hX_{t-1}'\hA_1'\Sigma_1^{-1}\hE_t\right)\hP_2' + o_P(\sqrt{T}).
\end{aligned}
\end{equation}
Combining \eqref{eq:cc_grad_A1} and \eqref{eq:cc_grad_A2}, it holds that after vectorization
\begin{equation}
\label{eq:final}
    \begin{aligned}
        & \sum_t\begin{pmatrix}
        (\hX_{t-1}\hA_2'\Sigma_2^{-1}\hA_2\hX_{t-1}') \otimes \hI & ( \hX_{t-1}\hA_2'\Sigma_2^{-1})\otimes(\hA_1\hX_{t-1}) \\
        (\hA_2\hX_{t-1}')\otimes(\hX_{t-1}'\hA_1'\Sigma_1^{-1}) & \hI\otimes( \hX_{t-1}'\hA_1'\Sigma_1^{-1}\hA_1\hX_{t-1})
        \end{pmatrix}
        \begin{pmatrix}
        \vect\left(\hat\hA_1-\hA_1\right) \\
        \vect\left(\hat\hA_2'-\hA_2'\right)
        \end{pmatrix} \\
        = & \sum_t \begin{pmatrix}
          \hX_{t-1}\hA_2'\Sigma_2^{-1}\otimes \hP_1+[\Gamma_1\hA_1'(\hA_1\Gamma_1\hA_1')^+\hA_1\hX_{t-1}\hA_2']\Sigma_2^{-1}\otimes(\hI-\hP_1) \\
          \hP_2\otimes \hX_{t-1}'\hA_1'\Sigma_1^{-1}+(\hI-\hP_2)\otimes [\Gamma_2\hA_2'(\hA_2\Gamma_2\hA_2')^+\hA_2\hX_{t-1}'\hA_1'\Sigma_1^{-1}] 
        \end{pmatrix}\vect(\hE_t) + o_P(\sqrt{T}).
    \end{aligned}
\end{equation}
Note that $\Sigma_e=\Sigma_2\otimes\Sigma_1$. Multiplying both sides of \eqref{eq:final} by the matrix 
\begin{equation*}
    \begin{pmatrix}
    \hI\otimes\Sigma_1^{-1} & \hzero \\
    \hzero & \Sigma_2^{-1}\otimes\hI
    \end{pmatrix},
\end{equation*}
and note that $\Sigma_i^{-1}\hP_i=\mathcal P_i\Sigma_i^{-1}$, then \eqref{eq:final} becomes
\begin{equation}
\label{eq:final1}
    \sum_t(\hW_t\Sigma_e^{-1}\hW_t') \begin{pmatrix}
        \vect\left(\hat\hA_1-\hA_1\right) \\
        \vect\left(\hat\hA_2'-\hA_2'\right)
        \end{pmatrix} = \sum_t \hQ_{t-1}\Sigma_e^{-1}\vect(\hE_t) + o_P(\sqrt{T}).
\end{equation}
Since $\|\hA_1\|_F=\|\hat\hA_1\|_F=1$, it holds that $\halpha'\vect(\hat\hA_1-\hA_1)=O_P(1/T)$. By the ergodic theorem, \eqref{eq:final1} implies that
\begin{equation*}
    \hH\begin{pmatrix}
      \vect\left(\hat\hA_1-\hA_1\right) \\
      \vect\left(\hat\hA_2'-\hA_2'\right)
    \end{pmatrix} = \frac{1}{T}\sum_t \hQ_{t-1}\Sigma_e^{-1}\vect(\hE_t) + o_P(1/\sqrt{T}),
\end{equation*}
and the proof is completed by an application of the martingale central limit theorem.
\end{proof}

\bigskip
\begin{proof}[Proof of Theorem~\ref{thm:ls}]
The proof of Theorem~\ref{thm:ls} is the same as that of Theorem~\ref{thm:cc} until \eqref{eq:final}, with the exception (and simplification) that all $\Sigma_i$ and $\hat\Sigma_i$ should be replaced by $\hI$. Using the definition of $\hW_t$ given in the table at the beginning of Section~\ref{sec:asymp}, it is immediately seen that \eqref{eq:final} becomes
\begin{equation}
\label{eq:final2}
    \sum_t(\hW_t\hW_t') \begin{pmatrix}
        \vect\left(\hat\hA_1^{\hbox{\tiny ls}}-\hA_1\right) \\
        \vect\left((\hat\hA_2^{\hbox{\tiny ls}})'-\hA_2'\right)
        \end{pmatrix} = \sum_t \hQ_t\vect(\hE_t) + o_P(\sqrt{T}).
\end{equation}
Since $\|\hat\hA_1^{\hbox{\tiny ls}}\|_F=\|\hA_1\|_F=1$, it holds that $\halpha'\vect(\hat\hA_1^{\hbox{\tiny ls}}-\hA_1)=O_P(1/T)$. By the ergodic theorem, \eqref{eq:final2} implies that
\begin{equation*}
    \hH\begin{pmatrix}
      \vect\left(\hat\hA_1^{\hbox{\tiny ls}}-\hA_1\right) \\
      \vect\left((\hat\hA_2^{\hbox{\tiny ls}})'-\hA_2'\right)
    \end{pmatrix} = \frac{1}{T}\sum_t \hQ_{t-1}\vect(\hE_t) + o_P(1/\sqrt{T}),
\end{equation*}
and the proof is completed by an application of the martingale central limit theorem.
\end{proof}

\bigskip
\begin{proof}[Proof of Theorem~\ref{thm:efficiency}]
If the estimation of $\hA_i$ is done by the MLE without imposing the low rank constraints, the asymptotic covariance matrix would take the same form as $\Xi^{\hbox{\tiny cc}}$, by setting $\mathcal P_i=\hI$ in the definition of $\hQ_t$, which we denote by $\tilde\hQ_t$. Comparing Theorem~\ref{thm:cc} with Theorem~3 of \cite{chen2021autoregressive}, it suffices to show that $\E(\tilde\hQ_t\Sigma_e^{-1}\tilde\hQ_t') \succeq \E(\hQ_t\Sigma_e^{-1}\hQ_t')$. By the definition of $\hQ_t$ and $\tilde \hQ_t$, it holds that
\begin{equation*}
    \tilde\hQ_t = \hQ_t + \begin{pmatrix}
          [(\hI-\Gamma_1\hA_1'(\hA_1\Gamma_1\hA_1')^+\hA_1)\hX_t\hA_2']\otimes(\hI-\PPP_1) \\
          (\hI-\PPP_2)\otimes [(\hI - \Gamma_2\hA_2'(\hA_2\Gamma_2\hA_2')^+\hA_2)\hX_t'\hA_1']
        \end{pmatrix}=:\hQ_t+\bar\hQ_t.
\end{equation*}
It now suffices to show that $\E(\bar\hQ_t\Sigma_e^{-1}\hQ_t')=0$. We write $\bar\hQ_t\Sigma_e^{-1}\hQ_t'$ as a $2\times 2$ block matrix. The top-left corner equals to 
\begin{equation}
\label{eq:topleft}
\begin{aligned}
    & \left\{[(\hI-\Gamma_1\hA_1'(\hA_1\Gamma_1\hA_1')^+\hA_1)\hX_t\hA_2']\otimes(\hI-\PPP_1)\right\}\left(\Sigma_2^{-1}\otimes\Sigma_1^{-1}\right) \\
    & \qquad \left\{ \hA_2\hX_t'\otimes \PPP_1'+[\hA_2\hX_t'\hA_1'(\hA_1\Gamma_1\hA_1')^+\hA_1\Gamma_1]\otimes(\hI-\PPP_1') \right\}.
\end{aligned}
\end{equation}
Obviously,
\begin{align*}
    & \left\{[(\hI-\Gamma_1\hA_1'(\hA_1\Gamma_1\hA_1')^+\hA_1)\hX_t\hA_2']\otimes(\hI-\PPP_1)\right\}\left(\Sigma_2^{-1}\otimes\Sigma_1^{-1}\right) \left( \hA_2\hX_t'\otimes \PPP_1' \right) \\
    = & \left\{[(\hI-\Gamma_1\hA_1'(\hA_1\Gamma_1\hA_1')^+\hA_1)\hX_t\hA_2']\Sigma_2^{-1} \hX_t\hA_2'\right\}\otimes \left\{(\hI-\PPP_1)\Sigma_1^{-1}\PPP_1'\right\}=\hzero.
\end{align*}
For the second term in \eqref{eq:topleft}, note that
\begin{align*}
    & \E\left\{[(\hI-\Gamma_1\hA_1'(\hA_1\Gamma_1\hA_1')^+\hA_1)\hX_t\hA_2']\Sigma_2^{-1}[\hA_2\hX_t'\hA_1'(\hA_1\Gamma_1\hA_1')^+\hA_1\Gamma_1]\right\} \\
    = & [\hI-\Gamma_1\hA_1'(\hA_1\Gamma_1\hA_1')^+\hA_1]\Gamma_1\hA_1'(\hA_1\Gamma_1\hA_1')^+\hA_1\Gamma_1 \\
    = & \Gamma_1^{1/2}\left[\hI-\Gamma_1^{1/2}\hA_1'(\hA_1\Gamma_1\hA_1')^+\hA_1\Gamma_1^{1/2}\right]\left[\Gamma_1^{1/2}\hA_1'(\hA_1\Gamma_1\hA_1')^+\hA_1\Gamma_1^{1/2}\right]\Gamma_1^{1/2}
    = \hzero,
\end{align*}
where the last identity is due to the fact that $\Gamma_1^{1/2}\hA_1'(\hA_1\Gamma_1\hA_1')^+\hA_1\Gamma_1^{1/2}$ is the orthogonal projection matrix to the row space of $\hA_1\Gamma_1^{1/2}$. Therefore, the expectation of \eqref{eq:topleft} is zero. The expectation of the other three blocks of $\bar\hQ_t\Sigma_e^{-1}\hQ_t'$ can be shown to be zero similarly. The proof is complete.
\end{proof}

\bigskip
The proof of Corollary~\ref{cor:svd} is a direct application of the following lemma and Theorems~\ref{thm:ls}, \ref{thm:cc}. Lemma~\ref{lem:svd_clt} is regarding the central limit theorems of singular vectors under the fixed dimensional setting. Although some cases are available in the literature, we have not seen any formulation that is exactly the same. Therefore, we provide Lemma~\ref{lem:svd_clt} and a proof here for the completeness. The proof essentially relies on the matrix perturbation theory.

\begin{lemma}
\label{lem:svd_clt}
Suppose $\hM$ is a $p\times p$ symmetric matrix of rank $r\leq p$, and let $\hM=\hU\Lambda\hU'$ be its spectral decomposition, where $\hU$ is a $p\times r$ ortho-normal matrix and $\Lambda$ is a $r\times r$ diagonal matrix. Denote the $j$-th diagonal element of $\Lambda$ by $\lambda_j$, and define $\hlambda=(\lambda_1,\ldots,\lambda_r)'$. Assume that the $\lambda_j$'s are distinct. Suppose $\{\hat\hM_n\}$ is a sequence of random matrices and $\{a_n\}$ is a sequence of diverging positive numbers such that
\begin{equation*}
    a_n\vect(\hat\hM_n-\hM)\Rightarrow N(\hzero,\Theta).
\end{equation*}
Let $\hat\hM_n=\hat\hU_n\hat\Lambda_n\hat\hU_n'$ be the spectral decomposition of $\hat\hM$ corresponding to the $r$ leading eigenvalues. 
Define the matrix $\hR$ as
\begin{equation*}
    \hR=(\hI_{r}\otimes\hU,\hI_{r}\otimes \hU^{\perp})
    \begin{pmatrix}
    (\Lambda\otimes\hI_{r}-\hI_{r}\otimes\Lambda+\hL_{r}\hL_{r}')^{-1}(\hI_{r^2}-\hL_{r}\hL_{r}')(\hU'\otimes\hU') \\
    (\Lambda^{-1}\hU')\otimes(\hU^\perp)'
    \end{pmatrix}.
\end{equation*}
Then
\begin{equation*}
    a_n\vect(\hat\hU_n-\hU)\Rightarrow N(\hzero,\hR\Theta\hR'),
\end{equation*}
and
\begin{equation*}
    a_n(\hat\hlambda -\hlambda) \Rightarrow N\left[\hzero,\hL_{r}'(\hU'\otimes\hU')\Theta(\hU\otimes\hU)\hL_{r}\right].
\end{equation*}
\end{lemma}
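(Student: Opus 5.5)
The approach is the delta method, combined with a first-order perturbation expansion of the rank-$r$ eigenprojection and eigenvectors of a symmetric matrix. Write $\Delta_n=\hat\hM_n-\hM$, so that $\Delta_n=O_P(a_n^{-1})$. Since the nonzero eigenvalues $\lambda_j$ are distinct and separated from the zero eigenvalue (rank $r$), Weyl's inequality gives $\hat\lambda_j\to\lambda_j$. The leading $r$ eigenvalues of $\hat\hM_n$ are therefore eventually separated from each other and from the remaining $p-r$ eigenvalues, which are $O_P(a_n^{-1})$. The Davis--Kahan / Stewart--Sun perturbation theory (the same Theorem~4.1 of \cite{stewart1990matrix} used in the proof of Theorem~\ref{thm:cc}) then justifies a linear expansion with remainder $O_P(a_n^{-2})=o_P(a_n^{-1})$. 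Since $\hU$ has a sign ambiguity, I fix signs of $\hat\hU_n$ so that $\hU'\hat\hU_n$ has positive diagonal (equivalent asymptotically to the paper's convention).

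Next I decompose $\hat\hU_n-\hU=\hU\hC_n+\hU^\perp\hF_n$ with $\hC_n=\hU'(\hat\hU_n-\hU)$ and $\hF_n=(\hU^\perp)'\hat\hU_n$, and derive the first-order terms from the eigen-equation $\hat\hM_n\hat\hU_n=\hat\hU_n\hat\Lambda_n$.

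For the orthogonal part, premultiplying by $(\hU^\perp)'$ and using $(\hU^\perp)'\hM=0$ gives
\[
(\hU^\perp)'\Delta_n\hU=\hF_n\Lambda+o_P(a_n^{-1}),
\qquad\text{so}\qquad
\hF_n=(\hU^\perp)'\Delta_n\hU\Lambda^{-1}+o_P(a_n^{-1}).
\]

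For the in-span part, premultiplying by $\hU'$ gives
\[
\Lambda\hC_n-\hC_n\Lambda+\hU'\Delta_n\hU=\hat\Lambda_n-\Lambda+o_P(a_n^{-1}).
\]
The diagonal of this identity yields $\hat\lambda_j-\lambda_j=\hU[,j]'\Delta_n\hU[,j]+o_P$, and hence the eigenvalue CLT with covariance $\hL_r'(\hU'\otimes\hU')\Theta(\hU\otimes\hU)\hL_r$. The off-diagonal entries yield $(\hC_n)_{jk}=(\hU'\Delta_n\hU)_{jk}/(\lambda_k-\lambda_j)$. Orthonormality, $\hat\hU_n'\hat\hU_n=\hI$, forces $\hC_n+\hC_n'=O_P(a_n^{-2})$, so the diagonal of $\hC_n$ is $o_P(a_n^{-1})$.

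Vectorizing these expressions, with $\vect(\hU^\perp\hF_n)=(\hI_r\otimes\hU^\perp)\vect(\hF_n)$ and $\vect((\hU^\perp)'\Delta_n\hU\Lambda^{-1})=((\Lambda^{-1}\hU')\otimes(\hU^\perp)')\vect(\Delta_n)$, reproduces the second block of $\hR$. For the first block, the operator $\hC\mapsto\Lambda\hC-\hC\Lambda$ acts on $\vect(\hC)$ as $\hI\otimes\Lambda-\Lambda\otimes\hI$, up to the sign convention. This operator is singular exactly on the diagonal coordinates, which are selected by $\hL_r\hL_r'$. Adding $\hL_r\hL_r'$ makes it invertible, and premultiplying by $\hI-\hL_r\hL_r'$ zeroes the diagonal components, as required. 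This matches $(\Lambda\otimes\hI_r-\hI_r\otimes\Lambda+\hL_r\hL_r')^{-1}(\hI_{r^2}-\hL_r\hL_r')(\hU'\otimes\hU')\vect(\Delta_n)$; I will check the sign against the orientation of the Kronecker factors. Hence $a_n\vect(\hat\hU_n-\hU)=\hR\,a_n\vect(\Delta_n)+o_P(1)$, and Slutsky's theorem gives the normal limit $N(\hzero,\hR\Theta\hR')$.

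The main obstacle is making the expansion rigorous: showing the remainders are $o_P(a_n^{-1})$ and, in particular, that the diagonal of $\hC_n$ is negligible under the sign convention. I would handle this by first establishing $\|\hat\hU_n\hat\hU_n'-\hU\hU'\|=O_P(a_n^{-1})$ via Davis--Kahan, then each single eigenvector $\hat\hU_n[,j]$ via the same argument using the eigengap $\min_{k\neq j}|\lambda_j-\lambda_k|$, and only then substituting back into the eigen-equation so that all products of perturbations are of order $a_n^{-2}$.
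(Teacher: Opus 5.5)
Your proposal is correct and follows the paper's proof essentially step for step. Like the paper, you expand $\hat\hM_n\hat\hU_n=\hat\hU_n\hat\Lambda_n$ to first order, premultiply by $\hU'$ and $(\hU^\perp)'$, use $\hU'(\hat\hU_n-\hU)+(\hat\hU_n-\hU)'\hU=o_P(a_n^{-1})$ to dispose of the diagonal, and vectorize; the sign in $\hR$ does come out right, since $\hC_n\Lambda-\Lambda\hC_n$ vectorizes to $(\Lambda\otimes\hI_r-\hI_r\otimes\Lambda)\vect(\hC_n)$. The differences are minor: you get the preliminary $O_P(a_n^{-1})$ rates from Weyl and Davis--Kahan rather than Wedin's theorem, and you state explicitly the sign normalization of $\hat\hU_n$ that the paper's proof leaves implicit but needs.
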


\begin{proof}
Due to the assumption that the $\lambda_j$'s are distinct, by the Wedin's Theorem \citep{wedin1972perturbation}, it holds that $\hat\hU=\hU+O_P(1/a_n)$ and $\hat\hlambda=\hlambda+O_P(1/a_n)$.

Expand $\hat\hM\hat\hU=\hat\hU\hat\Lambda$ around the true values and omit small order terms, we have
\begin{equation}
\label{eq:evd_expansion}
    (\hat\hU-\hU)\Lambda + \hU(\hat\Lambda-\Lambda) - \hM(\hat \hU-\hU) = (\hat\hM-\hM)\hU + o_P(1/a_n).
\end{equation}
\noindent Multiplying both sides of \eqref{eq:evd_expansion} by $\hU'$ leads to
\begin{equation*}
    \label{eq:U1}
    \hU'(\hat\hU-\hU)\Lambda - \Lambda\hU'(\hat \hU-\hU) + (\hat\Lambda-\Lambda)= \hU'(\hat\hM-\hM)\hU + o_P(1/a_n).
\end{equation*}
Using the properties of the $\hL$ matrices introduced at the end of Section~\ref{sec:intro}, it follows that 
\begin{equation}
\label{eq:evd_lambda}
    \hat\hlambda-\hlambda = \hL_r'(\hU'\otimes\hU')\vect(\hat\hM-\hM),
\end{equation}
and
\begin{equation}
    \label{eq:U2}
    \begin{aligned}
    & \left(\Lambda\otimes\hI_r-\hI_r\otimes\Lambda+\hL_r\hL_r'\right)\vect\left[\hU'(\hat\hU-\hU)\right] \\
    & \qquad\qquad = (\hI_{r^2}-\hL_r\hL_r')(\hU'\otimes\hU')\vect(\hat\hM-\hM) + o_P(1/a_n).
    \end{aligned}
\end{equation}
The asymptotic distribution of $\hat\hlambda$ follows \eqref{eq:evd_lambda} immediately. In deriving \eqref{eq:evd_lambda} and \eqref{eq:U2}, we have implicitly used the fact that \begin{equation*}
    \hU'(\hat\hU-\hU) + (\hat\hU-\hU)'\hU = o_P(1/a_n).
\end{equation*}
From \eqref{eq:U2} we deduce that
\begin{equation}
    \label{eq:U3}
    \begin{aligned}
    & \vect\left[\hU'(\hat\hU-\hU)\right]  = (\hI_r\otimes\hU')\vect(\hat\hU-\hU) \\
    & \qquad\qquad = \left(\Lambda\otimes\hI_r-\hI_r\otimes\Lambda+\hL_r\hL_r'\right)^{-1}(\hI_{r^2}-\hL_r\hL_r')(\hU'\otimes\hU')\vect(\hat\hM-\hM) + o_P(1/a_n).
    \end{aligned}
\end{equation}
Multiplying both sides of \eqref{eq:evd_expansion} by $(\hU^\perp)'$ gives
\begin{equation*}
    (\hU^\perp)'(\hat\hU-\hU)\Lambda = (\hU^\perp)'(\hat\hM-\hM)\hU + o_P(1/a_n),
\end{equation*}
and therefore,
\begin{equation*}
    (\hU^\perp)'(\hat\hU-\hU) = (\hU^\perp)'(\hat\hM-\hM)\hU\Lambda^{-1} + o_P(1/a_n),
\end{equation*}
and
\begin{equation}
    \label{eq:Uc}
    \begin{aligned}
    \vect\left[(\hU^\perp)'(\hat\hU-\hU)\right]  & = \left[\hI_r\otimes(\hU^\perp)'\right]\vect(\hat\hU-\hU) \\
    & = \left[(\Lambda^{-1}\hU')\otimes(\hU^\perp)'\right]\vect(\hat\hM-\hM) + o_P(1/a_n).
    \end{aligned}
\end{equation}
Combining \eqref{eq:U3} and \eqref{eq:Uc}, it holds that
\begin{equation*}
\begin{aligned}
    & \begin{pmatrix}
    \hI_r\otimes\hU'\\
    \hI_r\otimes(\hU^\perp)'
    \end{pmatrix}\vect(\hat\hU-\hU) \\
    & \qquad = \begin{pmatrix}
    (\Lambda\otimes\hI_{r}-\hI_{r}\otimes\Lambda+\hL_{r}\hL_{r}')^{-1}(\hI_{r^2}-\hL_{r}\hL_{r}')(\hU'\otimes\hU') \\
    (\Lambda^{-1}\hU')\otimes(\hU^\perp)'
    \end{pmatrix}\vect(\hat\hM-\hM) + o_P(1/a_n).
\end{aligned}
\end{equation*}
Since
\begin{equation*}
    \begin{pmatrix}
    \hI_r\otimes\hU'\\
    \hI_r\otimes(\hU^\perp)'
    \end{pmatrix}^{-1}=(\hI_{r}\otimes\hU,\hI_{r}\otimes \hU^{\perp}),
\end{equation*}
it follows that
\begin{equation*}
    \vect(\hat\hU-\hU)=\hR\,\vect(\hat\hM-\hM) + o_P(1/a_n),
\end{equation*}
and the proof is complete.
\end{proof}

\bigskip
\begin{proof}[Proof of Theorem~\ref{thm:bic}]
We give the proof for the joint EBIC($r_1,r_2$). The proof for separate EBIC follows similar arguments, and will be skipped.

Let $\sigma_0^2:=\E \|\hE_t\|_F^2/(d_1d_2)$. It is straightforward to show that when $(r_1,r_2)=(k_1,k_2)$
\begin{equation*}
    \frac{1}{Td_1d_2}\sum_{t=2}^T\|\hX_t-\hat\hA_1^{\hbox{\tiny ls}}(r_1,r_2)\hX_{t-1}(\hat\hA_2^{\hbox{\tiny ls}}(r_1,r_2))'\|_F^2 \stackrel{p}{\rightarrow} \sigma_0^2;
\end{equation*}
when $r_1<k_1$ or $r_2<k_2$,
\begin{equation*}
    \frac{1}{Td_1d_2}\sum_{t=2}^T\|\hX_t-\hat\hA_1^{\hbox{\tiny ls}}(r_1,r_2)\hX_{t-1}(\hat\hA_2^{\hbox{\tiny ls}}(r_1,r_2))'\|_F^2 \stackrel{p}{\rightarrow} \sigma_1^2>\sigma_0^2;
\end{equation*}
and when $r_1\geq k_1,\,r_2\geq k_2$, and at least one of the inequalities is strict,
\begin{equation*}
    \frac{1}{Td_1d_2}\sum_{t=2}^T\|\hX_t-\hat\hA_1^{\hbox{\tiny ls}}(r_1,r_2)\hX_{t-1}(\hat\hA_2^{\hbox{\tiny ls}}(r_1,r_2))'\|_F^2 = \sigma_0^2 + O_p(1/T).
\end{equation*}
Then a direct calculation shows that the joint EBIC does not under select or over select the ranks with probability approaching one.
\end{proof}

\bigskip

\noindent{\large\bf Appendix II: RRMAR($p$) Models}

\bigskip

We consider the extension to the RRMAR($p$) model \eqref{eq:marrrp}. Assume that $\rk(\hA_{ji})=k_{ji}$, and $\|\hA_{j1}\|_F=1$ for $1\leq j\leq p$. We also assume that the model \eqref{eq:marrrp} is causal and stationary, i.e. the roots of the polynomial
\begin{equation}
\label{eq:stat}
    \det\left[\hI-(\hA_{12}\otimes\hA_{11})z-(\hA_{22}\otimes\hA_{21})z^2-\cdots-(\hA_{p2}\otimes\hA_{p1})z^p\right]=0
\end{equation}
are all strictly outside the unit circle.
The least squares and maximum likelihood estimators are extended in a straightforward way from the RRMAR(1) case, and so are the corresponding iterative algorithms. 

We next give the extensions of Theorems~\ref{thm:ls} and \ref{thm:cc}. We begin with the extension of notations.  First of all, recall that we assume $\|\hA_{j1}\|_F=1$ for the parameter identifiability, so we rescale
$\hat\hA_{ji}^{\hbox{\tiny ls}}$ and $\hat\hA_{ji}^{\hbox{\tiny cc}}$ so that
$\|\hat\hA_{j1}^{\hbox{\tiny ls}}\|_F=1$ and $\|\hat\hA_{j1}^{\hbox{\tiny cc}}\|_F=1$. 
We update the table of notations at the beginning of Section~\ref{sec:asymp} as follows, where the index $j$ runs from 1 to $p$, and $i=1,2$.
\begin{center}
\begin{tabular}{l|l|l}
\hline
    Notations & LSE & MLE  \\\hline
    $\Gamma_{j1}$ & $\E(\hX_t\hA_{j2}'\hA_{j2}\hX_t')$ & $\E(\hX_t\hA_{j2}'\Sigma_2^{-1}\hA_{j2}\hX_t')$ \\
    $\Gamma_{j2}$ & $\E(\hX_t'\hA_{j1}'\hA_{j1}\hX_t)$ & $\E(\hX_t'\hA_{j1}'\Sigma_1^{-1}\hA_{j1}\hX_t)$\\
    $\mathbb{P}_{ji}$ & \hbox{orthogonal projection to} $\mathrm{col}(\hA_{ji})$ & \hbox{orthogonal projection to} $\mathrm{col}(\Sigma_i^{-1/2}\hA_{ji})$\\
    $\PPP_{ji}$ & $\mathbb{P}_{ji}$ & $\Sigma_i^{-1/2}\mathbb{P}_{ji}\Sigma_i^{1/2}$ \\
    $\mathfrak P_{ji}$ & $\hA_{ji}'(\hA_{ji}\Gamma_{ji}\hA_{ji}')^+\hA_{ji}\Gamma_{ji}$ & Same \\
    $\hW_{jt}$ & $[(\hA_{j2}\hX_t')\otimes\hI_{d_1},\hI_{d_2}\otimes(\hA_{j1}\hX_t)]'$ & {Same} \\
    $\hQ_t$ & Equation~\eqref{eq:Qt} & Same \\
    $\hC_t$ & Remove $\Sigma_e^{-1}$ in \eqref{eq:Ct} & Equation \eqref{eq:Ct} \\\hline
\end{tabular}
\end{center}
The $\hQ_t$ is defined as
\begin{equation}
    \label{eq:Qt}
    \hQ_t:=[\hQ_{1,t-1}',\hQ_{2,t-2}',\ldots,\hQ_{p,t-p}']',
\end{equation}
where
\begin{equation*}
    \hQ_{jt}=\hW_{jt} - \begin{pmatrix}
        (\hI-\frak P_{j1})'\hX_t\hA_{j2}'\otimes(\hI-\mathcal P_{j1}) \\
        (\hI-\mathcal P_{j2})\otimes(\hI-\frak P_{j2})\hX_t'\hA_{j1}'
    \end{pmatrix}.
\end{equation*}
The $\hC_t$ is defined as
\begin{equation}
\label{eq:Ct}
    \hC_t=\begin{pmatrix}
                    \hW_{1,t-1} \Sigma^{-1}_e\hW'_{1,t-1} &  \hQ_{1,t-1} \Sigma^{-1}_e\hW'_{2,t-2} & \cdots &\hQ_{1,t-1} \Sigma^{-1}_e\hW'_{p,t-p} \\
                    \hQ_{2,t-2} \Sigma^{-1}_e\hW'_{1,t-1} & \hW_{2,t-2} \Sigma^{-1}_e\hW'_{2,t-2}  & \cdots &\hQ_{2,t-2} \Sigma^{-1}_e\hW'_{p,t-p}\\
                    \vdots & \vdots & \ddots & \vdots\\
                    \hQ_{p,t-p} \Sigma^{-1}_e\hW'_{1,t-1} & \hQ_{p,t-p} \Sigma^{-1}_e\hW'_{2,t-2}  & \cdots & \hW_{p,t-p} \Sigma^{-1}_e\hW'_{p,t-p}
                \end{pmatrix},
\end{equation}
Let $\hgamma_j$ be a block vector with $2p$ blocks, of size $d_1^2$ and $d_2^2$ alternatively, whose $(2j-1)$-th block is $\vect(\hA_{j1})$, and everywhere else zero. Define $\hH=\E \hC_t + \sum_{j=1}^p \hgamma_j\hgamma_j'$.

\begin{theorem}
\label{thm:rrmarp}
  Assume that $\{\hE_t\}$ are i.i.d. with mean zero and finite second
  moments. Assume that
  $0<\rk(\hA_{ji})=k_{ji}\leq d_i$ and the stationarity condition defined through the roots of \eqref{eq:stat}. 
  \begin{enumerate}
      \item Assume $\Sigma_e$ is non-singular, then
  \begin{equation*}
    \sqrt{T}\begin{pmatrix}
      \vect\left[\hat\hA_{11}^{\hbox{\tiny ls}}-\hA_{11}\right] \\
      \vect\left[(\hat\hA_{12}^{\hbox{\tiny ls}})'-\hA_{12}'\right]\\
      \cdots \\
      \vect\left[\hat\hA_{p1}^{\hbox{\tiny ls}}-\hA_{p1}\right] \\
      \vect\left[(\hat\hA_{p2}^{\hbox{\tiny ls}})'-\hA_{p2}'\right]
    \end{pmatrix}
    \Rightarrow N(\hzero, \Xi^{\hbox{\tiny ls}}),
  \end{equation*}
  where
  \begin{equation*}
    \Xi^{\hbox{\tiny ls}} :=\hH^{-1}\E(\hQ_t\Sigma_e\hQ_t')\hH^{-1}.
  \end{equation*}      
      \item If in addition, $\Sigma_e$ is of the form \eqref{eq:ocov}, and is non-singular, then 
  \end{enumerate}
  \begin{equation*}
    \sqrt{T}\begin{pmatrix}
      \vect\left[\hat\hA_{11}^{\hbox{\tiny cc}}-\hA_{11}\right] \\
      \vect\left[(\hat\hA_{12}^{\hbox{\tiny cc}})'-\hA_{12}'\right]\\
      \cdots \\
      \vect\left[\hat\hA_{p1}^{\hbox{\tiny cc}}-\hA_{p1}\right] \\
      \vect\left[(\hat\hA_{p2}^{\hbox{\tiny cc}})'-\hA_{p2}'\right]
    \end{pmatrix}
    \Rightarrow N(\hzero, \Xi^{\hbox{\tiny cc}}),
  \end{equation*}
  where
  \begin{equation*}
    \Xi^{\hbox{\tiny cc}} :=\hH^{-1}\E(\hQ_t\Sigma_e^{-1}\hQ_t')\hH^{-1}.
  \end{equation*}
\end{theorem}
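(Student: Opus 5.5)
The plan is to mimic the proofs of Theorem~\ref{thm:cc} and Theorem~\ref{thm:ls} lag by lag, treating the $p$ pairs $(\hA_{j1},\hA_{j2})$ as blocks of an alternating reduced-rank regression. I will first do the MLE case, part 2; the LSE case follows by replacing every $\Sigma_i$ and $\hat\Sigma_i$ with $\hI$.

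\textbf{Step 1: preliminary rates.} Under the stationarity condition \eqref{eq:stat}, $\vect(\hX_t)$ is a causal stationary ergodic VAR($p$). I will argue as in Theorem~4 of \cite{chen2021autoregressive} that $\hat\hA_{ji}=\hA_{ji}+O_P(T^{-1/2})$ and $\hat\Sigma_i=\Sigma_i+o_P(1)$. The argument is identifiability of the sum of Kronecker products at distinct lags, together with consistency of the smooth criterion.

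\textbf{Step 2: gradient condition for a single block.} Fix $j$ and hold all other blocks at their estimates. The gradient condition for $\hat\hA_{j1}$ is the reduced-rank update \eqref{eq:cc_grad1}, with two changes: the response $\hX_t$ is replaced by the partial residual $\hX_t-\sum_{l\neq j}\hat\hA_{l1}\hX_{t-l}\hat\hA_{l2}'$, and the regressor becomes $\hX_{t-j}\hat\hA_{j2}'\hat\Sigma_2^{-1/2}$.

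I then repeat the intermediate-matrix trick with
\[
\tilde\hM_{j1}:=T^{-1/2}\hat\Sigma_1^{-1/2}\hA_{j1}\Bigl(\sum_t\hX_{t-j}\hA_{j2}'\hat\Sigma_2^{-1}\hat\hA_{j2}\hX_{t-j}'\Bigr)\hat\hS_{j1xx}^{-1/2},
\]
which satisfies $\hat\Sigma_1^{1/2}\tilde\hU\tilde\hU'\hat\Sigma_1^{-1/2}\hA_{j1}=\hA_{j1}$. The difference $\hat\hM_{j1}-\tilde\hM_{j1}$ now contains two kinds of terms:
\begin{itemize}
\item the noise term $\sum_t\hE_t\Sigma_2^{-1}\hA_{j2}\hX_{t-j}'$;
\item cross terms $\sum_t\bigl(\hA_{l1}\hX_{t-l}\hA_{l2}'-\hat\hA_{l1}\hX_{t-l}\hat\hA_{l2}'\bigr)\Sigma_2^{-1}\hA_{j2}\hX_{t-j}'$ for $l\neq j$.
\end{itemize}
Both are $O_P(\sqrt T)$, so Wedin's theorem and Theorem~4.1 of \cite{stewart1990matrix} give the first-order expansion of $\hat\hU\hat\hU'$ as in \eqref{eq:u_proj_1}.

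Substituting into the gradient condition gives an analogue of \eqref{eq:cc_grad_A1}:
\begin{itemize}
\item The own-lag Hessian term $\hW_{j,t-j}\Sigma_e^{-1}\hW_{j,t-j}'$ appears in full.
\item The cross-lag terms (parameter errors at lag $l\neq j$) enter the right-hand side projected exactly the same way the noise is projected, through $\PPP_{j1}$ and $\mathfrak P_{j1}$.
\end{itemize}
Moving the cross-lag terms to the left yields $\hQ_{j,t-j}\Sigma_e^{-1}\hW_{l,t-l}'$ in the off-diagonal blocks. This matches the non-symmetric matrix $\hC_t$ in \eqref{eq:Ct}, where $\hQ_{jt}=\hW_{jt}-(\text{projected-out part})$. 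The same is done for $\hat\hA_{j2}$.

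\textbf{Step 3: assembling the system.} Stacking over $j$ and both $i$ gives
\[
\Bigl(\sum_t\hC_t\Bigr)\,\vect(\hat\theta-\theta)=\sum_t\hQ_t\Sigma_e^{-1}\vect(\hE_t)+o_P(\sqrt T),
\]
where I write $\hat\theta-\theta$ for the stacked estimation error.

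The normalizations $\|\hat\hA_{j1}\|_F=\|\hA_{j1}\|_F=1$ give $\vect(\hA_{j1})'\vect(\hat\hA_{j1}-\hA_{j1})=O_P(1/T)$. This justifies adding $\sum_j\hgamma_j\hgamma_j'$ and passing to $\hH$. The ergodic theorem then gives $T^{-1}\sum_t\hC_t\to\E\hC_t$.

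The right-hand side is a martingale difference array with respect to $\mathcal F_{t-1}$, since $\hQ_t$ depends only on lagged $\hX$ and $\hE_t$ is i.i.d. with finite second moments. The martingale CLT therefore gives covariance $\E(\hQ_t\Sigma_e^{-1}\Sigma_e\Sigma_e^{-1}\hQ_t')$. This equals $\E(\hQ_t\Sigma_e^{-1}\hQ_t')$ in part 2, and $\E(\hQ_t\Sigma_e\hQ_t')$ in part 1, where $\Sigma_e^{-1}$ is replaced by $\hI$.

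\textbf{Main obstacle.} I expect the hardest part to be Step 2's bookkeeping of the cross-lag terms: showing they project exactly into $\hQ_{j,t-j}\Sigma_e^{-1}\hW_{l,t-l}'$, and not into some other combination. A related and necessary point is proving that $\hH$ is invertible despite $\E\hC_t$ being non-symmetric.

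For the projection I will reuse the identities $(\hI-\hU\hU')\Sigma_1^{-1/2}\hA_{j1}\Gamma_{j1}=\hzero$ and $\Sigma_i^{-1}\hP_{ji}=\PPP_{ji}\Sigma_i^{-1}$. For invertibility I would first try to show that $\E\hC_t$ restricted to the tangent space of the rank-constrained manifold coincides with the symmetric Fisher-type block, and then argue nonsingularity from the nonsingularity of $\Sigma_e$ and the fact that $\vect(\hX_t)$ has a nonsingular covariance. If that fails, I would treat invertibility of $\hH$ as an implicit regularity assumption, in the same spirit as the distinct-eigenvalue assumptions noted after Theorem~\ref{thm:cc}.
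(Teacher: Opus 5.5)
Your proposal is correct and follows essentially the paper's proof. The shared steps are: per-block reduced-rank gradient conditions with partial-residual responses; the intermediate matrix $\tilde\hM_{j1}$, whose difference from $\hat\hM_{j1}$ now also carries the cross-lag estimation errors; projection of those errors exactly as the noise is projected, which produces the off-diagonal blocks $\hQ_{j,t-j}\Sigma_e^{-1}\hW_{l,t-l}'$ of $\hC_t$; the $\hgamma_j\hgamma_j'$ normalization; and the ergodic theorem plus martingale CLT, with the paper, like your fallback, simply taking invertibility of $\hH$ for granted. Since you correctly note that $\E\hC_t$ is not symmetric in general, your argument yields the sandwich $\hH^{-1}\E(\hQ_t\Sigma_e^{-1}\hQ_t')(\hH^{-1})'$ (and similarly in the LSE case), which is how the displayed $\hH^{-1}\E(\cdot)\hH^{-1}$ must be read when $p\geq 2$.
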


\bigskip

\begin{proof}[Proof of Theorem~\ref{thm:rrmarp}]
The proof of Theorem~\ref{thm:rrmarp} follows essentially the same lines as those of Theorem~\ref{thm:cc}. We will only prove (ii) since the proof (i) is almost the same but much simpler. Throughout the proof we are concerned with the MLE $\hat\hA_{ji}^{\hbox{\tiny cc}}$, but we will omit the superscript {\footnotesize cc} for notational simplicity. We begin with the gradient condition for $\hat\hA_{11}$:
\begin{equation}
\label{eq:cc_grad1}
    \hat \hA_{11}\hat\hS_{11xx}=\hat\Sigma_{1}^{1/2}\hat\hU_{11}\hat\hU_{11}'\hat\Sigma_{1}^{-1/2}\hat\hS_{11yx},
\end{equation}
where 
\begin{align*}
    \hat\hS_{11xx} & =\sum_t \hX_{t-1}\hat\hA_{12}'\hat\Sigma_2^{-1}\hat\hA_{12}\hX_{t-1}',\\
    \hat\hS_{11yx} & =\sum_t \left(\hX_t-\sum_{j=2}^p\hat\hA_{j1}\hX_{t-j}\hat\hA_{j2}'\right)\hat\Sigma_2^{-1}\hat\hA_{12}\hX_{t-1}', \\
    \hat\Sigma_1 & = \frac{1}{T-1} \sum_t\left(\hX_t-\sum_{j=1}^p\hat\hA_{j1}\hX_{t-j}\hat\hA_{j2}'\right)\hat\Sigma_2^{-1}\left(\hX_t-\sum_{j=1}^p\hat\hA_{j1}\hX_{t-j}\hat\hA_{j2}'\right)',
\end{align*}
and $\hat\hU_{11}$ is the $d_1\times k_{11}$ matrix consisting of the first $k_{11}$ leading eigenvectors (all normalized to have unit length) of $\hat\Sigma_{1}^{-1/2}\hat\hS_{11yx}\hat\hS_{1xx}^{-1}\hat\hS_{11xy}\hat\Sigma_{1}^{-1/2}$. The equations for other $\hat\hA_{ji}$ ($1\leq j\leq p$, $i=1,2$) are similar.

Define 
\begin{align*}
  \hat\hM_{11}&=T^{-1/2}\hat\Sigma_{1}^{-1/2}\hat\hS_{11yx}\hat\hS_{11xx}^{-1/2},  \\
  \tilde\hM_{11}&=T^{-1/2}\hat\Sigma_1^{-1/2}\hA_{11}\left(\sum_t \hX_{t-1}\hA_{12}'\hat\Sigma_2^{-1}\hat\hA_{12}\hX_{t-1}'\right)\hat \hS_{11xx}^{-1/2}, \\
  \hM_{11}&=\Sigma_1^{-1/2}\hA_{11}\Gamma_{11}^{1/2},
\end{align*}
and the corresponding rank-$k_{11}$ matrices $\hat \hU_{11}$, $\tilde \hU_{11}$ and $\hU_{11}$ similarly as appearing in the proof of Theorem~\ref{thm:cc}, and let $\hP_{11}=\Sigma_1^{1/2}\hU_{11}\hU_{11}'\Sigma_1^{-1/2}$. Following similar arguments there, and note that
\begin{align*}
    \hat\hM_{11} - \tilde\hM_{11} =\; & T^{-1/2}\hat\Sigma_1^{-1/2}\left(\sum_t \hE_t\hat\Sigma_2^{-1}\hat\hA_{12}\hX_{t-1}'\right)\hat \hS_{11xx}^{-1/2} \\
    & + T^{-1/2}\hat\Sigma_1^{-1/2}\left[\sum_t\sum_{j=2}^p\left(\hA_{j1}\hX_{t-j}\hA_{j2}' - \hat\hA_{j1}\hX_{t-j}\hat\hA_{j2}'\right)\hat\Sigma_2^{-1}\hat\hA_{12}\hX_{t-1}'\right]\hat \hS_{11xx}^{-1/2},
\end{align*}
we have
\begin{align*}
   & (\hat\hA_{11}-\hA_{11})\left( \sum_t \hX_{t-1}\hA_{12}'\Sigma_2^{-1}\hA_{12}\hX_{t-1}'\right) + 
    \hA_{11} \left[ \sum_t \hX_{t-1}(\hat\hA_{12}-\hA_{12})'\Sigma_2^{-1}\hA_{12}\hX_{t-1}'\right] \\
    = & 
    (\hI - \hP_{11}) \left( \sum_t \hE_t\Sigma_2^{-1}\hA_{12}\hX_{t-1}' \right) \hA_{11}'(\hA_{11}\Gamma_{11}\hA_{11}')^+\hA_{11}\Gamma_{11} + \hP_{11} \left( \sum_t \hE_t\Sigma_2^{-1}\hA_{12}\hX_{t-1}' \right) \\
    & + (\hI - \hP_{11}) \sum_{j=2}^p\left[\sum_t(\hA_{j1}-\hat\hA_{j1})\hX_{t-j}\hA_{j2}'\Sigma_2^{-1}\hA_{12}\hX_{t-1}' \right]\hA_{11}'(\hA_{11}\Gamma_{11}\hA_{11}')^+\hA_{11}\Gamma_{11}  \\
    & + (\hI - \hP_{11}) \sum_{j=2}^p\left[\sum_t\hA_{j1}\hX_{t-j}(\hA_{j2}-\hat\hA_{j2})'\Sigma_2^{-1}\hA_{12}\hX_{t-1}' \right]\hA_{11}'(\hA_{11}\Gamma_{11}\hA_{11}')^+\hA_{11}\Gamma_{11}  \\
    & + \hP_{11}\sum_{j=2}^p\left[\sum_t(\hA_{j1}-\hat\hA_{j1})\hX_{t-j}\hA_{j2}'\Sigma_2^{-1}\hA_{12}\hX_{t-1}' \right] \\
    & + \hP_{11}\sum_{j=2}^p\left[\sum_t\hA_{j1}\hX_{t-j}(\hA_{j2}-\hat\hA_{j2})'\Sigma_2^{-1}\hA_{12}\hX_{t-1}' \right] + o_P(\sqrt{T}).
\end{align*}
Recall that $\mathfrak P_{11}:=\hA_{11}'(\hA_{11}\Gamma_{11}\hA_{11}')^+\hA_{11}\Gamma_{11}$ which is an oblique projection to the row space of $\hA_{11}$, then the previous equation can be simplified to
\begin{equation}
    \label{eq:grad_p}
\begin{aligned}
   & (\hat\hA_{11}-\hA_{11})\left( \sum_t \hX_{t-1}\hA_{12}'\Sigma_2^{-1}\hA_{12}\hX_{t-1}'\right) + 
    \hA_{11} \left[ \sum_t \hX_{t-1}(\hat\hA_{12}-\hA_{12})'\Sigma_2^{-1}\hA_{12}\hX_{t-1}'\right] \\
    & + \sum_{j=2}^p \sum_t(\hat\hA_{j1}-\hA_{j1})\hX_{t-j}\hA_{j2}'\Sigma_2^{-1}\hA_{12}\hX_{t-1}' \\
    & - (\hI - \hP_{11}) \sum_{j=2}^p\left[\sum_t(\hat\hA_{j1}-\hA_{j1})\hX_{t-j}\hA_{j2}'\Sigma_2^{-1}\hA_{12}\hX_{t-1}' \right] (\hI-\mathfrak P_{11})  \\
    & + \sum_{j=2}^p \sum_t\hA_{j1}\hX_{t-j}(\hat\hA_{j2}-\hA_{j2})'\Sigma_2^{-1}\hA_{12}\hX_{t-1}' \\
    & - (\hI - \hP_{11}) \sum_{j=2}^p\left[\sum_t\hA_{j1}\hX_{t-j}(\hat\hA_{j2}-\hA_{j2})'\Sigma_2^{-1}\hA_{12}\hX_{t-1}' \right] (\hI-\mathfrak P_{11})  \\
    = & 
    \sum_t \hE_t\Sigma_2^{-1}\hA_{12}\hX_{t-1}' - (\hI - \hP_{11}) \left( \sum_t \hE_t\Sigma_2^{-1}\hA_{12}\hX_{t-1}' \right) (\hI-\mathfrak P_{11})  + o_P(\sqrt{T}).
\end{aligned}
\end{equation}
Utilizing the quantites $\hW_{jt}$ and $\hQ_{jt}$ defined earlier,
the equation \eqref{eq:grad_p} and similar ones for other $\hat\hA_{ji}$ can be combined into
\begin{align*}
            & \sum_t
                \underbrace{\begin{pmatrix}
                    \hW_{1,t-1} \Sigma^{-1}_e\hW'_{1,t-1} &  \hQ_{1,t-1} \Sigma^{-1}_e\hW'_{2,t-2} & \cdots &\hQ_{1,t-1} \Sigma^{-1}_e\hW'_{p,t-p} \\
                    \hQ_{2,t-2} \Sigma^{-1}_e\hW'_{1,t-1} & \hW_{2,t-2} \Sigma^{-1}_e\hW'_{2,t-2}  & \cdots &\hQ_{2,t-2} \Sigma^{-1}_e\hW'_{p,t-p}\\
                    \vdots & \vdots & \ddots & \vdots\\
                    \hQ_{p,t-p} \Sigma^{-1}_e\hW'_{1,t-1} & \hQ_{p,t-p} \Sigma^{-1}_e\hW'_{2,t-2}  & \cdots & \hW_{p,t-p} \Sigma^{-1}_e\hW'_{p,t-p}
                \end{pmatrix}}_{\hC_t}
                \underbrace{\begin{pmatrix}
                    \vect(\hat{\hA}_{11} - \hA_{11}) \\
                    \vect(\hat{\hA}'_{12} - \hA'_{12}) \\
                    \vect(\hat{\hA}_{21} - \hA_{21}) \\
                    \vect(\hat{\hA}'_{22} - \hA'_{22}) \\
                    \vdots\\
                    \vect(\hat{\hA}_{p1} - \hA_{p1}) \\
                    \vect(\hat{\hA}'_{p2} - \hA'_{p2}) 
                \end{pmatrix}}_{\hat\halpha-\halpha}\\
                & =
                 \sum_t \hQ_t\Sigma_e^{-1}
                \vect(\hE_t) + o_P(\sqrt{T}).
            \end{align*}
Note that the notations $\hC_t$, $\halpha$ and $\hat\halpha$ are also defined in the preceding equation. Recall that $\hH=\E \hC_t + \sum_{j=1}^p \hgamma_j\hgamma_j'$, then the preceding equation becomes
\begin{equation*}
    \hH(\hat\halpha-\halpha)=
                 T^{-1}\sum_t \hQ_t\Sigma_e^{-1}
                \vect(\hE_t) + o_P(1/\sqrt{T}),
\end{equation*}
and the conclusion of the Part (ii) follows.
\end{proof}

\end{document}